\newcommand{\F}{\mathbb{F}}
\newcommand{\N}{\mathbb{N}}
\renewcommand{\P}{\mathbb{P}}
\newcommand{\R}{\mathbb{R}}
\newcommand{\Z}{\mathbb{Z}}
\newcommand{\Zpos}{\Z_{>0}}
\newcommand{\bfR}{\boldsymbol{\mathfrak{R}}}
\DeclareMathOperator{\CRT}{CRT}
\newcommand{\SRN}{\mathsf{SRN}}
\newcommand{\SRF}{\mathsf{SRF}}
\newcommand{\bE}{\boldsymbol{E}}
\newcommand{\bF}{\boldsymbol{F}}
\newcommand{\bR}{\boldsymbol{R}}
\newcommand{\bC}{\boldsymbol{C}}
\newcommand{\bPsi}{\boldsymbol{\psi}}
\renewcommand{\bf}{\boldsymbol{f}}
\newcommand{\br}{\boldsymbol{r}}
\newcommand{\be}{\boldsymbol{e}}
\newcommand{\beps}{\Bold{\epsilon}}
\newcommand{\cB}{\mathcal{B}}
\newcommand{\C}{\mathcal{C}}
\newcommand{\D}{\mathcal{D}}
\newcommand{\cF}{\mathcal{F}}
\newcommand{\cS}{\mathcal{S}}
\newcommand{\cZ}{\mathcal{Z}}
\newcommand{\ERR}{\mathcal{E}^1_\Lambda}
\newcommand{\ERRd}{\mathcal{E}^2_{\Lm}}
\newcommand{\HYB}{\mathcal{H}^1_{\Li\Lu,\beps}}
\newcommand{\HYBd}{\mathcal{H}^2_{\Lmi\Lu,\beps}}
\newcommand{\HYBp}{\cB_{\Le\Lv,\bfR}^{1}}
\newcommand{\HYBpd}{\cB_{\Lme\Lv,\bfR}^{2}}
\newcommand{\SR}{S_{\bR}}
\newcommand{\SRinf}{S_{\bR}^{\infty}}
\newcommand{\SE}{S_{\bE}}
\newcommand{\SEh}{S_{\bE}^{h}}
\newcommand{\SEinf}{S_{\bE}^{\infty}}
\newcommand{\tm}{\bar{t}} 
\newcommand{\tmi}{\bar{t}_{i}}
\newcommand{\tme}{\bar{t}_{e}}
\newcommand{\SL}{\mathbb{S}_{\Bold{\lambda}}^{\ell}}
\newcommand{\Bold}[1]{\boldsymbol{#1}}
\newcommand{\fail}{f\!ail}
\DeclareMathOperator{\remop}{rem}
\newcommand{\rem}{\, \remop \, }
\DeclareMathOperator{\Val}{val}
\DeclareMathOperator{\vr}{v}
\DeclareMathOperator{\Ev}{Ev}
\newcommand{\Evi}{\Ev^{\infty}}
\newcommand{\dist}{d}
\newcommand{\commentaire}[1]{}
\theoremstyle{plain}
\newtheorem{theorem}{Theorem}[section]
\newtheorem{prop}[theorem]{Proposition}
\newtheorem{lemma}[theorem]{Lemma}
\newtheorem{coro}[theorem]{Corollary}
\theoremstyle{definition}
\newtheorem{exam}[theorem]{Example}
\newtheorem{defi}[theorem]{Definition}
\newtheorem{prob}[theorem]{Problem}
\newtheorem{constraint}[theorem]{Constraint}
\theoremstyle{remark}
\newtheorem{remark}[theorem]{Remark}
\newcommand{\htu}{{t}_u}
\newcommand{\hti}{{t}_i}
\newcommand{\hte}{{t}_e}
\newcommand{\htv}{t_v}
\newcommand{\Lu}{\Lambda_{u}}
\newcommand{\Li}{\Lambda_{i}}
\newcommand{\Lv}{\Lambda_{v}}
\newcommand{\Le}{\Lambda_{e}}
\newcommand{\Lm}{\Lambda_{m}}
\newcommand{\Lmi}{\Lambda_{m,i}}
\newcommand{\Lme}{\Lambda_{m,e}}
\newcommand{\xii}{\xi_{i}}
\newcommand{\xiu}{\xi_{u}}
\newcommand{\ie}{\textit{i.e.\ }}
\newcommand{\wrt}{\textit{w.r.t.\ }}
\newcommand{\lf}{\left\lfloor}   
\newcommand{\rf}{\right\rfloor}
\journal{ }
\begin{document}
	
	\begin{frontmatter}

		\title{Simultaneous Rational Function Codes: \\Improved Analysis Beyond Half the Minimum Distance \\with Multiplicities and Poles}

		\author{Matteo Abbondati,
			Eleonora Guerrini,
			Romain Lebreton}

		\affiliation{organization={LIRMM - University of Montpellier},
			addressline={161, Rue Ada}, 
			city={Montpellier},
			postcode={34095}, 
			country={FRANCE}}

		\begin{abstract}
			
			In this paper, we extend the work of~\cite{abbondati2024decoding} on decoding simultaneous rational function codes by addressing two important scenarios: multiplicities and poles (zeros of denominators). 
			
			First, we generalize previous results to rational codes with multiplicities by
			considering evaluations with multi-precision. Then,
			using the hybrid model from~\cite{guerrini2023simultaneous}, we extend our approach to vectors of
			rational functions that may present poles. 
			
			Our contributions include: a rigorous analysis
			of the decoding algorithm's failure probability that generalizes and improves several previous
			results, an extension to a hybrid model handling situations where not
			all errors can be assumed random, and a new 
 improved analysis in the more general context handling poles within multiplicities. The theoretical results provide a
			comprehensive probabilistic analysis of reconstruction failure in these
			more complex scenarios, advancing the state of the art in error
			correction for rational function codes.

		\end{abstract}

	\end{frontmatter}
	
	\section{Introduction}
	
			An efficient approach to solving linear systems in distributed computation involves reconstructing  
	a vector of fractions $\left(\frac{f_1}{g},\ldots,\frac{f_\ell}{g}\right)$, all sharing the same denominator,  
	from its modular reductions with respect to $n$ pairwise coprime elements. In this framework, a network  
	is structured around a central node, which selects a sequence of relatively prime elements  
	$\left(m_j\right)_{1 \le j \le n}$ and delegates the system solving process to the network.  
	Each node $j$ computes the solution modulo $m_j$ and transmits the reduced solution vector  
	$\left(f_i/g\ \bmod m_j\right)_{1 \le i \le \ell}$ back to the central node. The central node then  
	reconstructs the original vector through an interpolation step, formulated as a simultaneous  
	rational reconstruction problem. In the case of polynomial systems, this approach is known  
	as evaluation-interpolation~\cite{kaltofen2017early,guerrini2019polynomial}, whereas for integer  
	systems, it corresponds to modular reduction followed by reconstruction via the Chinese Remainder  
	Theorem~\cite{cabay1971exact}.
	
	This work and~\cite{abbondati2025simultaneous} are companion papers, addressing the polynomial and the integer contexts respectively.
	In view of the symmetry between these two cases, the layout of this paper reflects the structure of~\cite{abbondati2025simultaneous}.

	\paragraph{Context of this paper}
	During data reconstruction, the central node may receive incorrect reductions due to computational errors,  
	faulty or untrusted nodes, or network noise. 
	In view of this issue, it is of great help to look at the theory of error correcting codes and, by integrating decoding algorithms in the above framework, obtain fault tolerant linear system solving methods. 
	Viewing the modular
	reductions as coordinates of an error correcting code  enables us to reconstruct the correct solution as long as the number
	of erroneous reductions is below a certain value, corresponding to the unique decoding radius of the
	code. In presence of more errors, there exist two
	possible approaches in coding theory to correct beyond the unique decoding radius;
	either decoding algorithms which return a list of all codewords within a certain distance of the received word (list decoding) or, by interleaving techniques, obtain positive decoding results under probabilistic assumptions on random errors corrupting $\ell$ code-words on the shared positions. 
	In this paper, we focus on interleaving techniques as they fit in the simultaneous reconstruction problem. 
	Note that, a decoding algorithm working under this latter approach must inevitably fail for some instances, as
	beyond unique decoding radius there can be many codewords around a given instance. Here the failure
	probability is intended as the proportion of received words, within a given distance from
	the codeword $\bf/g$, for which the reconstruction fails. 
	
	Thus, focusing on the polynomial context, we will consider the simultaneous rational reconstruction problem over a finite field $\F_q$.
	In particular, we study the reconstruction with multiplicities, \ie 
	 with $m_j = (x - \alpha_j)^{\lambda_j}$ for a sequence of distinct evaluation points $\alpha_1,\ldots,\alpha_n\in\F_q$ and relative multiplicities $\lambda_1,\ldots,\lambda_n>0$.
	 The evaluation points for which the modular reductions are not defined (zeros of the denominator $g$) are referred to as \textit{poles}.
	One advantage of considering reductions with multiplicities is that
	solving a linear system modulo $(x - \alpha)^\lambda$ is asymptotically faster than solving
	it modulo $\alpha_1,\dots,\alpha_\lambda$ (see \cite{moenck_approximate_1979,dixon_exact_1982,storjohann_shifted_2005} or \cite[Chapter 3]{lebreton_contributions_2012} for a survey).

    In order to generalize our first results (Theorems~\ref{thm:main1} and~\ref{thm:main2}) on the decoding with multiplicities to a context including poles (Theorems~\ref{thm:main1poles} and~\ref{thm:main2poles}), we need to consider (as in~\cite{guerrini2023simultaneous}) a hybrid error model consisting of random evaluation errors and fixed valuation errors. 
	In Section~\ref{sec:Hybrid Decoding} we present this error model for more general sets of errors. We note that this more general version has been independently introduced in~\cite{brakensiek2025unique} for IRS codes and folded Reed-Solomon codes decoding, as the \textit{semi-adversarial error model}. There, the authors obtained a weaker failure probability bound (depending linearly instead of exponentially on the error size) but, for a given total amount of errors (random and fixed), the proportion of adversarial (fixed) errors they manage to correct, though asymptotically equivalent, is higher compared to our results.


	\paragraph{Previous results}
	
	A long series of papers can be found in the literature where
	evaluation-interpolation is used for linear systems solving, as
	\cite{mcclellan1977exact,villard1997study,monagan2004maximal,olesh2007vector,rosenkilde2016algorithms}.
	In the  polynomial case, the codes
	used for the recovery of a vector of polynomials from partially erroneous
	evaluations are Interleaved Reed-Solomon codes (IRS), whose best known analysis
	of the decoding failure probability is provided
	in~\cite{schmidt2009collaborative} and then generalized to the rational function
	case in~\cite{abbondati2024decoding}.
	The integer counterpart of IRS codes are to the so-called Interleaved Chinese remainder codes (ICR),
	for which a first heuristic analysis of the decoding failure probability  was provided
	in~\cite{li2013decoding} and made rigorous in~\cite{abbondati2023probabilistic}. 
	While there have been various studies on the rational function
	case~\cite{kaltofen2017early,zappatore2020simultaneous,guerrini2023simultaneous}, the rational
	number context had not been investigated until~\cite{abbondati2024decoding}.

	The extensive literature addressing
	these problems both in the polynomial
	\cite{mcclellan1977exact,boyer2014numerical,guerrini2019polynomial,kaltofen2020hermite,guerrini2023simultaneous}
	and the integer \cite{cabay1971exact,lipson1971chinese,abbondati2023probabilistic} contexts rarely
	shows unified methods, and the techniques used are very specific to the case studied. In~\cite{abbondati2024decoding} we provided (for the pole-free case and in absence of multiplicities) a unified analysis technique (inspired from~\cite{schmidt2009collaborative}) for both contexts, proving that it is possible to recover the correct solution vector for almost all instances.
	The results of~\cite{abbondati2024decoding} have been generalized to include both multiplicities and poles for the rational number case in~\cite{abbondati2025simultaneous}.
	The current paper, while keeping the same analysis of~\cite{abbondati2025simultaneous}, extends its results to the rational function context.

	\paragraph{Contributions of this paper}
	The approach of this paper generalizes and matches (or even improves) several
	previous results in different ways.

	The reconstruction of a vector of rational functions (in the general case considered here of multiplicities and poles) has already been addressed in~\cite{guerrini2023simultaneous},
	where the argument used to derive the failure probability bound is based (as in other related sources~\cite{bleichenbacher2003decoding,brakensiek2025unique}) on Schwartz-Zippel Lemma, resulting in a linearly dependent bound as a function of the error size  (\ie $\P_{\fail} \leq e/q$).
	
	The approach considered here (and in~\cite{abbondati2024decoding} for the special case of \textit{rational evaluation codes} with no multiplicities nor poles), derived from techniques used in the analysis of syndrome decoding for interleaved Reed-Solomon codes~\cite{schmidt2009collaborative}, allows to improve the probability bound to an exponentially decreasing function of the error size.
	
	Furthermore, the approach presented here does not require the \textit{multiplicity balancing}~\cite[$\S$3.1]{guerrini2023simultaneous}, which reflected a dependence of the error correction capacity on the distribution of error multiplicities among the $\ell$ codewords, thus improving the bound~\cite[Theorem 3.4]{guerrini2023simultaneous} on the maximal number of errors we can correct. 
	Nevertheless, we remark that in~\cite{brakensiek2025unique} (in absence of multiplicities and for the polynomial case), though the probability bound is still linear, 
	the trade-off between adversarial errors and distance is stronger than ours (Theorems~\ref{thm:main1hyb} and~\ref{thm:main2hyb}).

	In short the main results presented are the following:
	\begin{itemize}
		\item A detailed analysis of the failure probability of the decoding Algorithm~\ref{algoSRF}, that generalizes previous results in \cite{abbondati2024decoding} to the multiplicity case: see Theorem~\ref{thm:main1} and Theorem~\ref{thm:main2}.
		\item The extension of the analysis to a hybrid model including random and non-random errors, addressing situations where not all errors can be assumed random: see Theorem~\ref{thm:main1hyb}, Theorem~\ref{thm:main2hyb}. 
		\item The merging of the hybrid model with our decoding approach, to handle poles  multiplicities, and relative decoding failure analysis: see Theorem~\ref{thm:main1poles} and Theorem~\ref{thm:main2poles}, improving several previous results~\cite{kaltofen2020hermite,guerrini2023simultaneous}. 
	\end{itemize}

\subsection{Notations and preliminary definitions}
\label{sec:NotationDefinitions}
We will denote vectors with bold letters $\bf,\br,\Bold{c},\ldots$.
For $m\in\F_q[x]$ we will denote its degree with $\partial(m)$ and for $\bf \in \F_q[x]^{\ell}$ we define its degree as the max $\partial(\bf) \coloneq \max_i\{\partial(f_i)\}$. $\F_q[x]/m$ will denote the quotient ring modulo the ideal $(m)$, while $\cZ(m)$ will denote the set of roots
of $m$ in $\F_q$. 
Given an indexed family of rings $\left\{A_j\right\}_{1 \le j \le n}$, we let $\prod_{j=1}^n A_j$ be their
Cartesian product.
Given a vector of modular reductions
$\br\in\prod_{j=1}^n\F_q[x]/(x - \alpha_j)^{\lambda_j}$ the corresponding capital letter $R$ denote its
unique interpolant constructed via the Chinese remainder theorem modulo
$M\coloneq\prod_{j=1}^n(x - \alpha_j)^{\lambda_j}$.

We let $\Val_{\alpha}:\F_q[x] \longrightarrow
\N\cup\{\infty\}$ be the valuation function over $\F_q[x]$ with respect to the evaluation point $\alpha\in\F_q$, whose output is the
highest power of $x - \alpha$ dividing the input, where we set by convention its
value to be $\infty$ when the input is $0$.

Dealing with a fixed sequence of precisions $\lambda_1,\ldots,\lambda_n$, we truncate the valuation
function considering $\nu_{\alpha_j}(m) \coloneq\min\{\Val_{\alpha_j}(m),\lambda_j\}$, so that $\nu_{\alpha_j}(a) = \nu_{\alpha_j}(b)$ when $a = b \bmod (x - \alpha_j)^{\lambda_j}$.

When computing the valuation of a vector we set
$\nu\left(\bf\right)\coloneq\min_i\{\nu\left(f_i\right)\}$. Given the sequence of
multiplicities $\lambda_1,\ldots,\lambda_n>0$, we define the parameter
$L\coloneq\sum_{j=1}^n\lambda_j$. For us, all the vectors of fractions $\bf/g$ sharing the same
denominator will always be reduced, i.e. they satisfy $\gcd\left(\gcd(\bf),g\right) = 1$.

\paragraph{Simultaneous rational function reconstruction with errors (SRFRwE)}

To quantify errors and to establish the correction capacity of the code we are
going to use, we need a notion of distance between words. In a context with
multiplicities where the coordinates are modular reductions relative to moduli
specified by different precisions $\lambda_1,\ldots,\lambda_n$, it is classical
to consider (see for example~\cite{kaltofen2020hermite,guerrini2023simultaneous}) a minimal error
index distance in which each modular reduction is regarded as a truncated
development, and the whole tail starting from the first error index in such
development is considered erroneous. Thus,
we are going to consider the following definition:

\begin{defi}[Distance]
	\label{def:PolyDistance}
	Let \mbox{$\bR^1, \bR^2 \in (\prod_{j=1}^n\F_q[x]/(x - \alpha_j)^{\lambda_j})^{\ell}$} be two
	\mbox{$\ell\times n$} matrices, where each column $\br_j^1,\br_j^2$ belongs to $\left(\F_q[x]/(x - \alpha_j)^{\lambda_j}\right)^\ell$. We
	define their error support as  
	$\xi_{\bR^1,\bR^2} \coloneq \{ j: \br_j^1 \ne \br_j^2\}$ and their error
	locator polynomial as the product $\Lambda_{\bR^1,\bR^2}\coloneq\prod_{j \in \xi_{\bR^1, \bR^2}}
	(x - \alpha_j)^{\lambda_j - \mu_j}$, where $\mu_j\coloneq\nu_{\alpha_j}\left(\br_j^1  -
	\br_j^2\right)$ represents the minimal error index for the development around the
	evaluation point $\alpha_j$. The distance between $\bR^1$ and $\bR^2$ is defined as
	$d(\bR^1,\bR^2)\coloneq \partial(\Lambda_{\bR^1,\bR^2})$.
\end{defi}

The problem of simultaneous rational function reconstruction with errors is then given by:

\begin{prob}[SRFRwE]
	\label{SRFRwE_Problem}
	Given $\ell >0$, $n$ distinct evaluation points $\alpha_1,\ldots, \alpha_n\in\F_q$ with associated multiplicities
	$\lambda_1,\ldots,\lambda_n$, a received matrix $\bR \in
	(\prod_{j=1}^n\F_q[x]/(x - \alpha_j)^{\lambda_j})^{\ell}$, an error parameter $t$ and two degree bounds $d_f, d_g$ such
	that $d_f + d_g \leq L + 1$, find a reduced vector of rational functions $\left(f_1/g,\ldots,
	f_\ell/g\right)\in\F_q(x)^{\ell}$ such that
	\begin{enumerate}
		\item $d\left(\bigl(f_i/g \ \bmod (x - \alpha_j)^{\lambda_j}\bigr)_{i,j},\bR\right)\leq t$,
		\item $\partial(\bf)<d_f$, $\partial(g)<d_g$ and $\gcd(g,M)=1$.
	\end{enumerate}
\end{prob}

In the above we have that $\gcd(g,M)=1$ so that the
reductions $f_i/g \bmod (x - \alpha_j)^{\lambda_j}$ are well-defined. We are going to drop this hypothesis in
Section~\ref{Sec:poles}, when solving a more general version of
the SRFRwE problem, allowing for the presence of poles.

This problem can be reduced to the simultaneous error correction of $\ell$ code words
(sharing the same denominator) for the multiplicity version of rational function codes. Without multiplicities (i.e. when $M$ is square-free)
this code is the natural rational extension of Reed-Solomon codes \cite{reed1960polynomial}, and can be referred to as rational function codes,
extensively studied in~\cite{abbondati2024decoding}. It seems these rational codes were part of the
folklore; to the best of our knowledge, they were formally introduced in the language of coding
theory by Pernet in \cite[$\S$ 2.5.2]{pernet2014high}.

The condition $d_f + d_g \leq L + 1$ guarantees an injective encoding, whose proof will be given in Proposition~\ref{prop:InjectiveEncodingpoles}
when considering the multi-precision encoding (see Definition~\ref{def:MultiprecisionEncoding}) first introduced in~\cite{guerrini2023simultaneous}, which is a
generalization of our current encoding in presence of poles.

This paper is concerned with error correction beyond guaranteed
uniqueness, this means that the solution to the problem will not always be
unique. In this rare case, our decoding algorithm returns a decoding failure.
We analyze the probability of failure in detail.

\medskip

The paper is structured as follows: In Section~\ref{Sec:SRF} we introduce the
simultaneous rational function codes whose decoding solves
Problem~\ref{SRFRwE_Problem} as well as the corresponding decoding
Algorithm~\ref{algoSRF}. We study the failure probability of our decoding
algorithm for error parameters larger than the unique decoding radius of the
code. We note that this analysis generalizes the results
of~\cite{abbondati2024decoding} to the multiplicity case, it thus follows the
same broad lines except for some technical details (see
Lemma~\ref{lm:sumOverDivisorsPOLY}).

In Section~\ref{sec:Hybrid Decoding}, we adapt our analysis technique to the hybrid distribution
model of~\cite{guerrini2023simultaneous} in which not all errors are supposed to be random, but some
of them are fixed, either because of specific error patterns introduced by malicious entities or
because of specific faults of the network nodes.

Then, in Section~\ref{Sec:poles}, by considering the multi-precision encoding
of~\cite{guerrini2023simultaneous}, we apply the hybrid approach to generalize our analysis to the
case of reductions with multiplicities and poles, \ie we drop the hypothesis
$\gcd\left(g, M\right) = 1$.

	\section{Simultaneous Multiplicity Rational Function Codes}\label{Sec:SRF}

	We reduce Problem~\ref{SRFRwE_Problem} to the decoding of simultaneous rational function
	codes with multiplicities, defined as follows:
	\begin{defi}
		\label{def:SRF}
		Given $n$ distinct evaluation points $\alpha_1,\ldots,\alpha_n\in\F_q$ with relative
		multiplicities $\lambda_1,\ldots,\lambda_n\in\Zpos$, let $M(x)\coloneq\prod_{j=1}^n
		(x-\alpha_j)^{\lambda_j}\in\F_q[x]$ and $L \coloneq \partial\left(M\right) =
		\sum_{j=1}^{n}\lambda_j$, two degree bounds $d_f,d_g\in\Z_{>0}$ such that $d_f+d_g\leq L + 1$
		and a parameter $\ell>0$, we define the \textit{simultaneous multiplicity rational function code} as the set of matrices 
		
		\begin{equation*}
			\SRF_{\ell}(M;d_f,d_g)\coloneq
			\left\{
			\begin{pmatrix}
				\frac{f_i}{g}\ \bmod (x - \alpha_j)^{\lambda_j}
			\end{pmatrix}_{\substack{1 \leq i \leq \ell \\ 1 \leq j \leq n}}:
			\begin{array}{c}
				\partial(\bf)<d_f, \partial(g)<d_g,\\
				\gcd(f_1,\ldots,f_\ell,g)=1\\
				\gcd(M,g)=1
			\end{array}
			\right\}.
		\end{equation*}
	We will refer to SRF codes for short if the parameters are implicit.
	\end{defi}
	
	Note that when $L = n$, \ie $\lambda_j = 1$ for every $j=1,\ldots,n$, we obtain the simultaneous rational function codes first introduced in~\cite[$\S$ 2.3]{pernet2014high} and extensively studied in~\cite[$\S$ 4]{abbondati2024decoding}. If furthermore $d_g = 1$ we obtain the interleaving of Reed-Solomon codes.
	
	In the next section, we will see that the common denominator property is necessary to be able to take advantage in the 
	key equations of the fact that the $\ell$ rational function codewords forming the lines of the matrices in $\SRF_{\ell}(M; d_f, d_g)$ share the same error supports.

	The condition $\gcd(f_1,\ldots,f_\ell,g)=1$, which is going to be used in the proof of
	Lemma~\ref{BaseLemmaRN}, reflects that the solution vector we seek to reconstruct is a reduced
	vector of rational functions.
	\begin{remark}
		A bounded distance decoding algorithm for the above code which is able to correct errors up to a
		distance $t$, can be used to solve Problem~\ref{SRFRwE_Problem} with error parameter $t$.
        This justifies the denomination \textit{simultaneous} for the above code.
	\end{remark}

	\subsection{Decoding algorithm}

	Let $\bR \coloneq (r_{i,j})_{\substack{1 \leq i \leq \ell \\ 1 \leq j \leq n}}$ be the received
	matrix. For any code word $\bC \in SRF_{\ell}(M;d_f,d_g)$, we can write $\bR=\bC+\bE$ for some error
	matrix $\bE$. We can associate an interpolation polynomial to every row, which we write
	$R_i=C_i+E_i$.
	We know that $\Lambda f_i = \Lambda g R_i \bmod M(x)$ holds for any $1 \le i \le
	\ell$~\cite{pernet2014high}. Making the substitutions $\varphi \leftarrow \Lambda g,\psi_i
	\leftarrow \Lambda f_i$ we linearize the previous equations, obtaining the \emph{key equations} 
	\begin{equation*}
		\psi_i = \varphi R_i \bmod M(x) \text{ for } i=1,\ldots,\ell
	\end{equation*}
	which are $\F_q$-linear. In particular if $\partial(\Lambda) = d\left(\bR,\bC\right) \leq t$ for some distance parameter
	$t$ (input of Problem~\ref{SRFRwE_Problem}), the solution vector $v_{\bC}\coloneq(\Lambda g,\Lambda f_1,\ldots,\Lambda f_\ell)$ belongs to
	the $\F_q$-linear subspace
	\begin{equation*}
		\SR \coloneq\left\{
		(\varphi,\psi_1,\ldots,\psi_\ell)\in\F_q[x]^{\ell +1}:
		\begin{array}{c}
			\psi_i = \varphi R_i\bmod M(x) \\
			\partial(\varphi)<d_g+t, \partial(\Bold{\psi})<d_f+t
		\end{array}
		\right\}.
	\end{equation*}

%

	Taking inspiration from~\cite[Algorithm 2.1]{kaltofen2017early} the decoding algorithm for SRF codes with multiplicities is based on the computation of a minimal degree element in the set of solutions $\SR$: 
	
	\begin{algorithm}[H]
		\caption{$\SRF_\ell$ codes decoder.}
		\label{algoSRF}
		\SetAlgoLined \SetKw{KwBy}{par} \KwIn{$\SRF_\ell(M;d_f,d_g)$, received word $\bR:=(\br_j)_{1\le j\le n}$, distance
			bound $t$} \KwOut{A reduced vector of fractions $\bPsi'/\varphi'$ s.t. $d\left(\bigl(\psi_i'/\varphi' \ \bmod (x - \alpha_j)^{\lambda_j}\bigr)_{i,j},\bR\right) \leq t$ or ``decoding failure''}
		
		\vspace{5pt}
		
		Compute $\Bold{0}\ne v_s \coloneq \left(\varphi,\psi_1,\ldots,\psi_{\ell}\right)\in\SR$, s.t. $\max\{\partial(\varphi),\partial(\Bold{\psi})\}$ is minimal.\\
		Let $\eta \coloneq \gcd(\varphi,\psi_{1},\dots,\psi_{\ell})$, $\varphi'\coloneq\varphi/\eta$ and
		$\forall i, \ \psi'_i\coloneq\psi_i/\eta$\\ 
		\If{$\partial(\eta) \le t$, $\partial(\varphi') < d_g, \partial(\Bold{\psi}')< d_f$}
		{\textbf{return} $(\psi_1'/\varphi', \dots, \psi_\ell'/\varphi')$}
		\lElse{ \textbf{return} "decoding failure"}
	\end{algorithm}

\begin{lemma}
\label{lm:algofailureconditionPOLY}
If the decoding algorithm fails then $v_s \notin v_{\bC} \F_q[x]$.
\end{lemma}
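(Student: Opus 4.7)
The plan is to prove the contrapositive: if $v_s = h \cdot v_{\bC}$ for some $h\in\F_q[x]$, then the algorithm returns successfully. We may assume we are in the non-trivial case $d(\bR,\bC) \le t$, so that the preceding discussion of key equations guarantees $v_{\bC} = (\Lambda g, \Lambda f_1,\ldots,\Lambda f_\ell) \in \SR$. Note that $v_s \ne 0$ (by the choice made on line 1 of Algorithm~\ref{algoSRF}), so $h \ne 0$.

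The core step is the degree control of $h$. Since $v_s$ minimizes $\max\{\partial(\varphi),\partial(\bPsi)\}$ over non-zero elements of $\SR$, and since $v_{\bC}\in\SR$, I expect the chain
\[
\partial(h) + \partial(\Lambda) + \max\{\partial(g),\partial(\bf)\}
= \max\{\partial(\varphi),\partial(\bPsi)\}
\le \max\{\partial(\Lambda g),\partial(\Lambda \bf)\}
= \partial(\Lambda) + \max\{\partial(g),\partial(\bf)\},
\]
which (using $g \ne 0$ so the right-hand side is finite) forces $\partial(h) = 0$, i.e., $h$ is a non-zero constant.

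Once $h$ is constant, the SRF reducedness condition $\gcd(f_1,\ldots,f_\ell,g)=1$ from Definition~\ref{def:SRF} gives
\[
\eta = \gcd(h\Lambda g, h\Lambda f_1,\ldots,h\Lambda f_\ell) = h\Lambda \cdot \gcd(g,f_1,\ldots,f_\ell) = h\Lambda,
\]
so $\varphi' = g$ and $\psi_i' = f_i$. The three exit conditions of Algorithm~\ref{algoSRF} are then satisfied: $\partial(\eta) = \partial(\Lambda) = d(\bR,\bC) \le t$, $\partial(\varphi') = \partial(g) < d_g$, and $\partial(\bPsi') = \partial(\bf) < d_f$. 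The algorithm therefore outputs $(\psi_1'/\varphi',\ldots,\psi_\ell'/\varphi')$, contradicting the assumption of failure.

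The only subtlety is making sure the degree arithmetic above is valid, which hinges on noting that $g$ (and hence $\Lambda g$) is non-zero because $\bf/g$ is a well-defined vector of rational functions, and that the reducedness hypothesis from the code definition is exactly what collapses $\gcd(g,f_1,\ldots,f_\ell)$ to $1$. Everything else is a direct verification against the algorithm's exit test.
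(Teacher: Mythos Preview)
Your proof is correct and follows essentially the same approach as the paper's: both argue by contrapositive, use minimality of $v_s$ together with $v_{\bC}\in\SR$ to force the multiplier to be a nonzero constant, then invoke the reducedness condition $\gcd(f_1,\ldots,f_\ell,g)=1$ to identify $\eta$ with (a scalar multiple of) $\Lambda$ and verify the algorithm's exit conditions. Your version simply spells out the degree comparison and the non-vanishing of $g$ more explicitly.
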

	
\begin{proof}
	We prove it by contrapositive. If $v_s \in v_{\bC} \F_q[x]$, \ie $v_s = p v_{\bC}$ for some $p\in\F_q[x]$. Given that $v_s$ has minimal degree among the solutions, and that $v_{\bC}\in\SR$, we conclude that $p\in\F_q$. Therefore, since the vector of rational functions $\bf/g$ is assumed to be reduced, we conclude that $\eta = \Lambda$, thus $(\varphi', \Bold{\psi'}) = (pg,p\bf)$ and $\partial(\varphi') < d_g$, $\partial(\Bold{\psi'})< d_f$ therefore the decoding algorithm succeeds.
\end{proof}
	
\subsection{Minimal distance}

	The distance $d(\C)\coloneq\min_{c_1 \neq c_2\in \C} d(c_1,c_2)$ of a code $\C$ plays an important
role in coding theory to assess the amount of data one can correct. A classic result states that when the distance between the code $\C$ and the input received word $\bR$, defined as $\min\{d(\bC,\bR): \bC\in\C\}$, is below half the minimal distance of $\C$, then the decoding algorithm is guaranteed to succeed, while there is no guarantee on the decoding success beyond this quantity.
We start by proving a lower bound for the minimal distance of SRF codes:
		\begin{lemma}
	\label{lm:minDistSRFcode}
	We have $d(\SRF_{\ell}(M;d_f,d_g)) > L - d_f - d_g + 1 $.
\end{lemma}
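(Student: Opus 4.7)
The plan is to bound $d(\bC^1, \bC^2)$ from below for any two distinct codewords by reducing everything to a single auxiliary polynomial. Writing $\bC^k$ for the codeword associated with the reduced pair $(\bf^k, g^k)$ ($k=1,2$), I would introduce the difference polynomials $h_i \coloneq f_i^1 g^2 - f_i^2 g^1$ and their gcd $h \coloneq \gcd(h_1, \dots, h_\ell)$. Since $\gcd(g^1 g^2, M) = 1$, the product $g^1 g^2$ is invertible in each local quotient $\F_q[x]/(x-\alpha_j)^{\lambda_j}$, so the $(i,j)$-entry difference $c^1_{i,j} - c^2_{i,j}$ has, up to truncation at $\lambda_j$, the same valuation at $\alpha_j$ as $h_i$. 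Hence in the notation of Definition~\ref{def:PolyDistance}, $\mu_j = \min_i \nu_{\alpha_j}(h_i) = \nu_{\alpha_j}(h)$, where the last equality uses that truncation at $\lambda_j$ commutes with taking the minimum.

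Next I would rewrite the distance in terms of $h$. An index $j$ lies in $\xi_{\bC^1, \bC^2}$ exactly when $\Val_{\alpha_j}(h) < \lambda_j$, and for $j \notin \xi_{\bC^1, \bC^2}$ one has $\nu_{\alpha_j}(h) = \lambda_j$. Splitting the identity $L - \partial(\Lambda_{\bC^1, \bC^2}) = \sum_{j \notin \xi} \lambda_j + \sum_{j \in \xi} \mu_j$ accordingly yields
\begin{equation*}
L - \partial(\Lambda_{\bC^1, \bC^2}) \;=\; \sum_{j=1}^n \nu_{\alpha_j}(h).
\end{equation*}
Because the $\alpha_j$ are pairwise distinct and $\nu_{\alpha_j}(h) \le \Val_{\alpha_j}(h)$, the right-hand sum is bounded above by $\partial(h)$, provided $h \ne 0$. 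Nonvanishing is ensured by the hypothesis $\bC^1 \ne \bC^2$: if every $h_i$ vanished, the rational functions $f_i^1/g^1$ and $f_i^2/g^2$ would agree in $\F_q(x)$ for every $i$, hence so would all their local reductions, contradicting the distinctness of the two codewords.

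To conclude, any nonzero $h_i$ satisfies $\partial(h_i) \le (d_f - 1) + (d_g - 1) = d_f + d_g - 2$, so $\partial(h) \le d_f + d_g - 2$, giving $\partial(\Lambda_{\bC^1, \bC^2}) \ge L - d_f - d_g + 2 > L - d_f - d_g + 1$, as required. The only slightly delicate points are the compatibility of $\nu_{\alpha_j}$ with the minimum over $i$ (needed to identify $\mu_j$ with $\nu_{\alpha_j}(h)$) and the verification that $h \ne 0$; both are routine but essential for the chain of equalities to go through.
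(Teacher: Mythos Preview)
Your proof is correct and follows essentially the same route as the paper: clear denominators to form the cross-difference polynomials $h_i = f_i^1 g^2 - f_i^2 g^1$, observe that the truth locator $Y = M/\Lambda$ divides them, and bound $\partial(Y)$ by the degree $d_f + d_g - 2$ of a nonzero $h_i$. The only cosmetic difference is that you package the componentwise divisibility through the scalar polynomial $h = \gcd(h_1,\dots,h_\ell)$ and the identity $\partial(Y) = \sum_j \nu_{\alpha_j}(h)$, whereas the paper argues directly from $Y \mid h_i$ for some nonzero component; neither detour changes the substance of the argument.
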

\begin{proof}
	Let  $\bC_1= \left(f_i/g \ \bmod (x - \alpha_j)^{\lambda_j}\right)_{i,j}$ and $\bC_2=
	\left(f_i'/g'\ \bmod (x - \alpha_j)^{\lambda_j}\right)_{i,j}$ be two distinct code words. Setting
	$Y\coloneq\prod_{j=1}^n (x - \alpha_j)^{\mu_j}$, with $\mu_j =
	\nu_{\alpha_j}\left(\bf/g - \bf' /g'\right)$. Thanks to
	$\gcd\left(Y,g\right) = \gcd\left(Y,g'\right)= 1$, we have that  
	$Y| (\bf g' - \bf' g)$, with $\bf g' - \bf' g \ne \Bold{0} \bmod M$ since $\bC_1 \ne \bC_2$. Given that  $\partial(\bf), \partial(\bf')< d_f$, and
	$\partial(g), \partial(g')< d_g$ we have $\partial(Y)< d_f + d_g - 1$. Using the relation $Y=M/ \Lambda_{\bC_1,\bC_2}$,
	we bound $d(\bC_1,\bC_2) = \partial(\Lambda_{\bC_1,\bC_2}) = \partial(M/Y) > L - d_f - d_g + 1$.
\end{proof}

\begin{remark}
	If $M$ is square-free, in~\cite[$\S$ 2.3.1]{pernet2014high} the author showed, by constructing two distinct codewords at a given distance, that $d(\SRF_{\ell}(M;d_f,d_g)) = L - d_f - d_g + 2$. To adapt the same proof in the context of multiplicities we need to drop the hypothesis $\gcd(g,M) = 1$, thus we will prove it in Section~\ref{Sec:poles} when dealing with poles (see Lemma~\ref{lm: MinDistSRFpoles}).  
\end{remark}

	\subsection{Error Models}
\label{subsect:ErrModelRatFunc}

Decoding Algorithm~\ref{algoSRF} must fail on some instances when the distance parameter $t$ exceeds the
maximum distance for which the uniqueness of the solution of Problem~\ref{SRFRwE_Problem} is
guaranteed.

We analyze the failure probability of the algorithm under two different classical error models in
Coding Theory, already considered in previous papers
\cite{schmidt2009collaborative,abbondati2023probabilistic,abbondati2024decoding,abbondati2025simultaneous}, specifying two
possible distributions of the random received word $\bR$.

\paragraph{Error Model 1}
In this error model, we fix an error locator $\Lambda$ among the divisors of
$M$, then we let $\ERR$ be the set of error matrices $\bE$ whose columns $\be_j\in\left(\F_q[x]/(x - \alpha_j)^{\lambda_j}\right)^{\ell}$ satisfy:
\begin{enumerate}
	\item $\be_j = \boldsymbol{0} \text{ for all } j \text{ such that } \alpha_j\not\in\cZ(\Lambda)$,
	\item $\nu_{\alpha_j}(\be_j) = \lambda_j - \nu_{\alpha_j}(\Lambda) \text{ for all } j \text{ such that } \alpha_j\in\cZ(\Lambda)$.
\end{enumerate}
For any given code word $\boldsymbol{C}$ and error locator $\Lambda$, a random
received word $\bR$ around the central
code word $\bC$ is of the form $\bR = \bC + \bE$ for $\bE$ uniformly distributed
in $\ERR$.

We will need another point of view on the random error matrices $\bE$. For
$i\in\{1,\dots,\ell\}$, we denote $E_i \in \F_q[x]/M$ the interpolant of the
$i$-th row of $\bE$. By definition of the minimal error index $\mu_j$, letting $Y
\coloneq M/\Lambda = \prod_{j=1}^n (x - \alpha_j)^{\mu_j}$, we have that $Y | E_i$ for
every index $i= 1,\ldots,\ell$. We define the modular reduction of the quotients $E_i'\coloneq
E_i/Y \in \F_q[x]/\Lambda$.

Since $\mu_j = \nu_{\alpha_j}(\bE) = \min_i\{\nu_{\alpha_j}(E_i)\}$, we see that $Y =
\gcd(E_1,\ldots,E_{\ell},M)$, and that the random vector $(E_i')_{1\le i \le \ell}$ is uniformly
distributed in the sample space 
\begin{equation}
	\Omega_{\Lambda,\ell} \coloneq
	\{ (F_i)_{1\le i \le \ell} \in (\F_q[x]/\Lambda)^\ell : \gcd(F_1,\dots,F_\ell,\Lambda) = 1\}.
\end{equation}
As we will need a more general version of $\Omega_{\Lambda,\ell}$ (for example in the proof of
Lemma~\ref{lm:probagpe}), we state the following:
\begin{lemma}
	\label{lm:EulerFormula}
	Fix a polynomial $\Lambda\in\F_q[x]$ having all its roots in $\F_q$. For any of its divisors $\eta = \prod_{\alpha\in\cZ(\Lambda)}(x - \alpha_j)^{\eta_j}$, letting 
	$$\Omega_{\Lambda,\eta,\ell} := \left\{ (F_i)_{1\le i \le \ell} \in (\F_q[x]/\Lambda)^\ell :
	\gcd(F_1,\dots,F_\ell,\Lambda) = \eta \right\} $$
	 we have
	\begin{equation*}
		\#\Omega_{\Lambda,\eta,\ell}
		=
		q^{\ell \left(\partial(\Lambda/\eta)\right)}
		\prod_{\alpha\in\cZ(\Lambda/\eta)}\left(1 - \frac{1}{q^{\ell}}\right)
	\end{equation*}
\end{lemma}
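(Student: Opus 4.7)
The plan is to use the Chinese Remainder Theorem to localize the count at each root of $\Lambda$, then perform a one-step inclusion--exclusion by valuation at each root.

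First, write $\Lambda = \prod_{\alpha \in \cZ(\Lambda)} (x-\alpha)^{\lambda_\alpha}$ and $\eta = \prod_{\alpha \in \cZ(\Lambda)} (x-\alpha)^{\eta_\alpha}$ with $0 \le \eta_\alpha \le \lambda_\alpha$. Since the $(x-\alpha)^{\lambda_\alpha}$ are pairwise coprime, CRT gives
\[
\F_q[x]/\Lambda \;\cong\; \prod_{\alpha \in \cZ(\Lambda)} \F_q[x]/(x-\alpha)^{\lambda_\alpha},
\]
and the condition $\gcd(F_1,\dots,F_\ell,\Lambda)=\eta$ translates into independent local conditions: for each $\alpha$, the image of $(F_1,\dots,F_\ell)$ in $(\F_q[x]/(x-\alpha)^{\lambda_\alpha})^\ell$ satisfies $\min_i \nu_\alpha(F_i) = \eta_\alpha$, with the usual convention $\nu_\alpha(0)=\lambda_\alpha$. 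Hence $\#\Omega_{\Lambda,\eta,\ell}$ factors as $\prod_\alpha N_\alpha$, where $N_\alpha$ counts tuples satisfying the local condition at $\alpha$.

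Next I compute $N_\alpha$. The ideal generated by $(x-\alpha)^k$ in $\F_q[x]/(x-\alpha)^{\lambda_\alpha}$ is an $\F_q$-subspace of dimension $\lambda_\alpha - k$, so the number of $\ell$-tuples with $\min_i \nu_\alpha(F_i) \ge k$ equals $q^{\ell(\lambda_\alpha - k)}$. A one-step inclusion--exclusion then gives, for $\eta_\alpha < \lambda_\alpha$,
\[
N_\alpha \;=\; q^{\ell(\lambda_\alpha - \eta_\alpha)} - q^{\ell(\lambda_\alpha - \eta_\alpha - 1)} \;=\; q^{\ell(\lambda_\alpha - \eta_\alpha)}\bigl(1 - q^{-\ell}\bigr),
\]
while for $\eta_\alpha = \lambda_\alpha$ only the all-zero tuple qualifies, giving $N_\alpha = 1 = q^{\ell(\lambda_\alpha - \eta_\alpha)}$.

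Finally, observe that $\cZ(\Lambda/\eta)$ is exactly the set of $\alpha$ with $\eta_\alpha < \lambda_\alpha$, so the factor $(1-q^{-\ell})$ appears precisely once per element of $\cZ(\Lambda/\eta)$; and $\sum_\alpha (\lambda_\alpha - \eta_\alpha) = \partial(\Lambda/\eta)$. Multiplying the local counts yields the claimed closed form. There is no real obstacle here, as the argument is a routine CRT-plus-counting exercise; the only mild subtlety worth flagging is the boundary case $\eta_\alpha = \lambda_\alpha$, which is handled cleanly because such $\alpha$ contribute a trivial factor and are excluded from $\cZ(\Lambda/\eta)$, matching the shape of the stated product.
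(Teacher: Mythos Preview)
Your proof is correct and follows essentially the same route as the paper: both arguments localize via CRT and count, at each root $\alpha$, the $\ell$-tuples with prescribed minimum valuation using the one-step difference $q^{\ell(\lambda_\alpha-\eta_\alpha)}-q^{\ell(\lambda_\alpha-\eta_\alpha-1)}$. The only cosmetic difference is that the paper first passes through the bijection $\Omega_{\Lambda,\eta,\ell}\to\Omega_{\Lambda/\eta,\ell}$, $(F_i)\mapsto (F_i/\eta)$, to reduce to the case $\eta=1$ before counting, whereas you carry out the local count directly for general $\eta_\alpha$; the resulting computations are identical.
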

\begin{proof}
    We note that the function
    \begin{align*}
        \Omega_{\Lambda,\eta,\ell}&\longrightarrow\Omega_{\Lambda/\eta,\ell}\\
        (F_i)_{1\le i \le \ell}&\longmapsto\frac{1}{\gcd(F_1,\ldots,F_{\ell},\Lambda)}(F_i)_{1\le i \le \ell}
    \end{align*}
    is a bijection, therefore 
    $$
    \#\Omega_{\Lambda,\eta,\ell} = \#\Omega_{\Lambda/\eta,\ell}=\#\left\{\bF\coloneq (F_i)_{1\le i \le \ell}\in \left(\F_q[x]/\left(\Lambda/\eta\right)\right)^\ell : \forall j = 1,\ldots,n, \ \ \nu_{\alpha_j}(\bF) = 0\right\}.
    $$

	By counting the vectorial coefficients of the development of $\bF$ in the neighborhood of each point $\alpha_j$, we
	obtain the cardinality of the above set as
	\begin{equation*}
		\prod_{\eta_j<\nu_{\alpha_j}(\Lambda)}
		q^{\ell(\nu_{\alpha_j}(\Lambda) - \eta_j - 1)}
		\left(
		q^{\ell} - 1
		\right)
		=
		q^{\ell \left(\partial(\Lambda/\eta)\right)}
		\prod_{\alpha\in\cZ(\Lambda/\eta)}\left(1 - \frac{1}{q^{\ell}}\right). \qedhere
	\end{equation*}
\end{proof}

\paragraph{Error Model 2}
In this error model we fix a maximal error locator $\Lm$ among the divisors of $M$, then we let $\ERRd$ be the set of error matrices $\bE$ whose columns satisfy:
\begin{enumerate}
	\item $\be_j = \boldsymbol{0} \text{ for all } j \text{ such that } \alpha_j\not\in\cZ(\Lm)$,
	\item $\nu_{\alpha_j}(\be_j) \ge \lambda_j - \nu_{\alpha_j}(\Lm) \text{ for all } j \text{ such that } \alpha_j\in\cZ(\Lm)$.
\end{enumerate}

We notice that in the error model $\ERRd$, the actual error locator $\Lambda$ could be a divisor of
$\Lm$. For a code word $\boldsymbol{C}$ and a maximal error locator $\Lm$, a random
 received word $\bR$ around the central
code word $\bC$ is of the form $\bR = \bC + \bE$ for $\bE$ uniformly distributed
in $\ERRd$.

\subsection{Our Results}
\label{SubSect.Results}
In this section we present our contributions to the analysis of the decoding failure depending on
the given parameters. The error models previously defined will play a role in the analysis but not in
the choice of parameters. We  define a common framework for the algorithm parameters, while in 
Subsection~\ref{Sec.AnalysisRN} we will adapt the analysis of the failure probability to the two error
models specified above.
In what follows we set
	\begin{equation}
	\label{tmaxPoly}
	\tm\coloneq\frac{\ell}{\ell +1}\left(L - d_f - d_g + 1 \right)
\end{equation}

\begin{remark}
	Our setting allows decoding up to a distance $t\le \tm$ that, for $\ell > 1$, is greater than $\lf\frac{L - d_f - d_g + 1}{2}\rf$, thus in some cases it can allow to go beyond the unique decoding capability
	of $\SRF_{\ell}(M;d_f,d_g)$ codes.
    Indeed, even though from Lemma~\ref{lm: MinDistSRFpoles} we know that the quantity $\lf\frac{L - d_f - d_g + 1}{2}\rf$ is smaller than half of the minimal distance of the code, the decoding capability expressed by~\eqref{tmaxPoly} asymptotically reaches the quantity $L - d_f - d_g + 1$, which (if enlarging the code by allowing poles and under the hypothesis of Constraint~\ref{cst:SubsetSumThresholdPoles}) as we will see in Lemma~\ref{lm: MinDistSRFpoles}, is equal to the minimal distance of the code minus one, thus larger than half the minimal distance (unless we are in the trivial case in which the distance of the code is 2).
    In this section, for the code with no poles, codewords are fewer thus more far from each other, thus the minimal distance can in principle be strictly larger than in the case with poles. 
\end{remark}
When fixing the decoding bound $t$ close to  $\tm$, we are likely to correct beyond the unique
decoding radius, so we must deal with  decoding failure for some received word.

Here is our first result (whose proof will be given at the end of Subsection~\ref{RN_ERR1}) relative
to the failure probability of the decoding algorithm with respect to the error model $\ERR$.

\begin{theorem}
	\label{thm:main1}
	Decoding Algorithm~\ref{algoSRF} on input
	\begin{enumerate}
		\item distance parameter $t\le \tm$,
		\item a random received word $\bR$ uniformly distributed in  $\bC + \ERR$, for some code word $\bC \in \SRF_\ell(M;d_f,d_g)$
	and error locator $\Lambda$ such that $\partial(\Lambda) \leq t$,
	\end{enumerate}
	outputs the center  code word $\bC$ of the distribution
	with a probability of failure 
	\begin{equation*}
		\P_{\fail}
		\leq 
		\frac{q^{-(\ell +1)(\tm-t)}}{q-1}
		\prod_{\alpha\in \cZ(\Lambda)}
		\left(
		\frac{1 - 1/q^{\ell + \nu_{\alpha}(\Lambda)}}{1 - 	1/q^\ell}
		\right)
	\end{equation*} 
\end{theorem}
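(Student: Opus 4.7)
The plan is to adapt the union-bound strategy of~\cite{abbondati2024decoding} to the multiplicity setting, with Lemma~\ref{lm:EulerFormula} replacing the simpler counting available in the squarefree case.

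By Lemma~\ref{lm:algofailureconditionPOLY}, a decoding failure forces the existence of a bad $v=(\varphi,\boldsymbol{\psi})\in\SR$ that is not an $\F_q[x]$-multiple of $v_{\bC}=(\Lambda g,\Lambda\bf)$. Since $v_{\bC}\in\SR$, such a $v$ necessarily satisfies $\partial(\varphi)<d_g+t$ and $\partial(\boldsymbol{\psi})<d_f+t$. I would therefore bound $\P_{\fail}\le\sum_{v}\Pr_{\bE}[v\in\SR]$ over a set of representatives of bad vectors up to scalars in $\F_q^{\times}$, which will produce the $1/(q-1)$ factor appearing in the final bound.

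For each fixed bad $v$, I would rewrite $v\in\SR$ as $\varphi Y E_i'\equiv\psi_i-\varphi C_i\pmod M$ for every $i$, using $E_i=YE_i'$ and the fact that $(E_i')$ is uniform in $\Omega_{\Lambda,\ell}$ from Section~\ref{subsect:ErrModelRatFunc}. Analysing this system locally at each $\alpha_j\in\cZ(\Lambda)$, one sees that (given the compatibility divisibility conditions on the right-hand side) each equation pins $E_i'$ modulo $(x-\alpha_j)^{\nu_{\alpha_j}(\Lambda)-\min(\nu_{\alpha_j}(\varphi),\nu_{\alpha_j}(\Lambda))}$. Counting the admissible $(E_i')$ and dividing by $|\Omega_{\Lambda,\ell}|$ obtained from Lemma~\ref{lm:EulerFormula} produces a bound on $\Pr_{\bE}[v\in\SR]$ depending essentially on $\partial(\gcd(\varphi,\Lambda))$ and on the local multiplicities $\nu_{\alpha_j}(\Lambda)$.

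Summing over bad $v$ then splits naturally into a sum over divisors $\Lambda'$ of $\Lambda$ (with $\Lambda'$ playing the role of $\Lambda/\gcd(\varphi,\Lambda)$), where the number of bad $v$ with a prescribed $\Lambda'$ is controlled by the degree budgets $d_g+t$ and $d_f+t$. The final simplification, which I expect will produce both the exponentially decaying factor $q^{-(\ell+1)(\tm-t)}$ and the Eulerian product $\prod_{\alpha\in\cZ(\Lambda)}(1-1/q^{\ell+\nu_\alpha(\Lambda)})/(1-1/q^{\ell})$, should come via the technical identity referenced in the excerpt as Lemma~\ref{lm:sumOverDivisorsPOLY}. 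The main obstacle lies precisely in this sum over divisors of $\Lambda$: in the squarefree case of~\cite{abbondati2024decoding} it is indexed by subsets of $\cZ(\Lambda)$, whereas here each $\alpha_j$ contributes any exponent in $\{0,1,\dots,\nu_{\alpha_j}(\Lambda)\}$, and the alternating contributions supplied by Lemma~\ref{lm:EulerFormula} must be verified to telescope into the clean Eulerian product appearing in the statement.
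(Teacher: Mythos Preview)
Your proposal takes a different route from the paper and, as written, has a real gap. The paper does \emph{not} union-bound over bad vectors $v=(\varphi,\boldsymbol{\psi})\in\SR$. Instead it introduces the auxiliary $\F_q$-vector space
\[
\SE\coloneq\{\varphi\in\F_q[x]/\Lambda:\forall i,\ g\varphi E_i'\in\F_q[x]_{\Lambda,B+\partial(\Lambda)}\},\qquad B\coloneq d_f+d_g+t-L-2,
\]
and proves (Lemma~\ref{BaseLemmaRN}) that $\SE=\{0\}\Rightarrow\SR\subseteq v_{\bC}\F_q[x]$. This reduces everything to $\P(\SE\neq\{0\})$, which is then bounded by the expectation trick $\P(\SE\neq\{0\})\le(\mathbb{E}[\#\SE]-1)/(q-1)$, yielding a sum over $\varphi\in(\F_q[x]/\Lambda)\setminus\{0\}$ with at most $q^{\partial(\Lambda)}$ terms. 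Each term is bounded by Lemma~\ref{lm:probagpe} in terms of $\eta=\gcd(\varphi,\Lambda)$, then regrouped by $\eta$ and closed via Lemma~\ref{lm:sumOverDivisorsPOLY}; the hypothesis $t\le\tm$ ensures Constraint~\ref{c_3} and gives $q^{\ell(B+1)}q^{\partial(\Lambda)}\le q^{-(\ell+1)(\tm-t)}$.

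Your union bound runs over pairs $(\varphi,\boldsymbol{\psi})$ with the full degree budgets $d_g+t$ and $d_f+t$. The claim that ``the number of bad $v$ with a prescribed $\Lambda'$ is controlled by the degree budgets'' gives, for $\varphi$ alone, roughly $q^{d_g+t-\partial(\eta)}$ candidates per divisor $\eta$, not the $q^{\partial(\Lambda/\eta)}$ of the paper; the discrepancy is a factor $q^{d_g+t-\partial(\Lambda)}$ which does not obviously cancel. The point you are missing is that the event ``some bad $v$ with first component $\varphi$ lies in $\SR$'' depends only on $\varphi\bmod\Lambda$ (this is exactly what the computation $g\varphi E_i'\equiv\psi_i'\bmod\Lambda$ with $\partial(\psi_i')\le B+\partial(\Lambda)$ in Lemma~\ref{BaseLemmaRN} isolates). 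Without making that reduction explicit, your sum is too large to recover the stated bound, and your local pinning argument does not by itself collapse the extra $q^{d_g+t-\partial(\Lambda)}$ freedom in $\varphi$.
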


Here is our second result (whose proof will be given at the end of Subsection~\ref{RN_ERR2})
relative to the failure probability with respect to the error model $\ERRd$.

\begin{theorem}
	\label{thm:main2}
    Decoding Algorithm~\ref{algoSRF} on input
	\begin{enumerate}
		\item distance parameter $t\le \tm$,
		\item a random received word $\bR$ uniformly distributed in  $\bC + \ERRd$, for some code word $\bC \in \SRF_\ell(M;d_f,d_g)$
	and maximal error locator $\Lm$ such that $ \partial(\Lm) \leq t$,
	\end{enumerate}
	outputs the center  code word $\bC$ of the distribution
	with a probability of failure 
	\begin{equation*}
		\P_{\fail}
		\le 
		\frac{q^{-(\ell +1)(\tm-t)}}{q-1}
		\prod_{\alpha\in\cZ(\Lm)}
		\left(\frac{1 - 1/q^{\ell + 	\nu_\alpha(\Lm)}}{1 - 1/q^{\ell + 1}}\right).
	\end{equation*} 
\end{theorem}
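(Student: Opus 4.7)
The natural strategy is to reduce Theorem~\ref{thm:main2} to Theorem~\ref{thm:main1} by partitioning $\ERRd$ according to the actual error locator of each error matrix. Writing $\ERRd = \bigsqcup_{\Lambda \mid \Lm} \ERR$, the induced distribution on each block $\ERR$ is exactly the uniform distribution of Error Model~1 with locator $\Lambda$, and since $\partial(\Lambda) \le \partial(\Lm) \le t$ for every divisor, Theorem~\ref{thm:main1} applies blockwise. By the law of total probability,
\[
\P_{\fail}^{\ERRd} \;=\; \sum_{\Lambda \mid \Lm} \frac{|\ERR|}{|\ERRd|}\, \P_{\fail}^{\ERR} \;\le\; \frac{q^{-(\ell+1)(\tm-t)}}{q-1} \sum_{\Lambda \mid \Lm} \frac{|\ERR|}{|\ERRd|}\, \prod_{\alpha \in \cZ(\Lambda)} \frac{1 - 1/q^{\ell+\nu_\alpha(\Lambda)}}{1 - 1/q^\ell}.
\]

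Next, I would use the identification of divisors $\Lambda$ of $\Lm$ with tuples $(\nu_\alpha(\Lambda))_{\alpha \in \cZ(\Lm)} \in \prod_{\alpha} \{0, 1, \ldots, \nu_\alpha(\Lm)\}$. Using $|\ERRd| = q^{\ell\partial(\Lm)}$ together with a pointwise counting of $|\ERR|$ (which is a straightforward refinement of Lemma~\ref{lm:EulerFormula}), both the weights $|\ERR|/|\ERRd|$ and the inner product indexed by $\cZ(\Lambda)$ split as products over $\alpha \in \cZ(\Lm)$. Consequently the sum over $\Lambda$ decouples into a product over $\alpha \in \cZ(\Lm)$ of one-variable sums indexed by $k = \nu_\alpha(\Lambda) \in \{0,1,\ldots,\nu_\alpha(\Lm)\}$.

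The main obstacle is to show that each of these per-point sums is bounded by $\frac{1 - 1/q^{\ell + \nu_\alpha(\Lm)}}{1 - 1/q^{\ell+1}}$, matching the product in the statement. This is the only technical identity specific to Theorem~\ref{thm:main2}, and it reflects the fact that the enlarged sample space $\ERRd$ (in contrast with $\Omega_{\Lambda,\ell}$, no coprimality condition is imposed in Error Model~2) effectively contributes one more factor of $q$ in the denominator, turning $1 - 1/q^\ell$ into the tighter $1 - 1/q^{\ell+1}$. Since Theorem~\ref{thm:main1}'s bound is not tight for strict divisors $\Lambda$ of $\Lm$ (in particular it overcounts the trivial case $\Lambda=1$, where the failure probability is actually zero), it may be necessary to re-derive the key-equations step of Theorem~\ref{thm:main1} directly in the sample space $\ERRd$: the union bound over candidate solutions $v_s$ would then produce the factor $1 - 1/q^{\ell+1}$ through the same telescoping that yielded $1 - 1/q^\ell$ for Theorem~\ref{thm:main1}, with the simple enumeration of matrices satisfying $\nu_{\alpha_j}(\be_j) \ge \lambda_j - \nu_{\alpha_j}(\Lm)$ replacing the Euler-type count of Lemma~\ref{lm:EulerFormula}.
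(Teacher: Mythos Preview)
Your overall architecture is exactly that of the paper: decompose $\ERRd = \bigsqcup_{\Lambda\mid\Lm}\ERR$, apply the law of total probability, and factor the resulting divisor sum into a product over $\alpha\in\cZ(\Lm)$ via Lemma~\ref{lm:sumOverDivisorsPOLY}. The weights $|\ERR|/|\ERRd|$ you write down are also correct (this is Equation~\eqref{ConditionProba}).

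The gap is that you invoke the \emph{final} bound of Theorem~\ref{thm:main1}, which has already used $\partial(\Lambda)\le t$ to replace $q^{\partial(\Lambda)}$ by $q^t$. This throws away precisely the information needed here. Concretely, your sum becomes
\[
\frac{1}{q^{\ell\partial(\Lm)}}\sum_{\Lambda\mid\Lm} q^{\ell\partial(\Lambda)}\prod_{\alpha\in\cZ(\Lambda)}\bigl(1-1/q^{\ell+\nu_\alpha(\Lambda)}\bigr),
\]
and for a single root with $\nu_\alpha(\Lm)=1$ the corresponding factor equals $1+(q-1)/q^{\ell+1}$, which is strictly larger than the target factor $1$. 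So the per-point inequality you hope for is false, and no amount of cleverness with Lemma~\ref{lm:EulerFormula} will rescue it.

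The fix is not to re-derive the key equations in $\ERRd$ as you speculate, but simply to use the sharper intermediate bound of Lemma~\ref{lm:boundfailureprobability} (Equation~\eqref{IntermediateBuondERR1}), which retains the factor $q^{\partial(\Lambda)}$. Combined with the weight $q^{\ell\partial(\Lambda)}/q^{\ell\partial(\Lm)}\prod_{\alpha\in\cZ(\Lambda)}(1-1/q^\ell)$, the summand picks up $q^{(\ell+1)\partial(\Lambda)}$ rather than $q^{\ell\partial(\Lambda)}$. The per-point geometric sum is then $\sum_{k=0}^{m}q^{(\ell+1)k}(1-1/q^{\ell+k})$, and bounding $1-1/q^{\ell+k}\le 1-1/q^{\ell+m}$ followed by the closed form of $\sum q^{(\ell+1)k}$ yields exactly $q^{(\ell+1)m}\frac{1-1/q^{\ell+m}}{1-1/q^{\ell+1}}$. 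The extra power of $q$ in the exponent $(\ell+1)$ is what produces the denominator $1-1/q^{\ell+1}$; with only $q^{\ell k}$ you cannot get there.
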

We remark that both results reduce to~\cite[Theorem 24 and 25]{abbondati2024decoding} respectively,
when there are no multiplicities in the modular reductions of the code, \ie\ when $M$ is
square-free.

We note that in both theorems the product over the roots of the error locator is close to
one when $n<<q^{\ell}$; indeed we can prove the following lemma.

\begin{lemma}
	Given a divisor $\eta$ of $M$ and $f(\ell)>0$ any positive function of
	the parameter $\ell>0$, we have that 
	\begin{equation*}
		\prod_{\alpha\in \cZ(\eta)}
		\left(
		\frac{1 - 1 / q^{\ell + \nu_{\alpha}(\eta)}}{1 - 1 /q^{f(\ell)}}
		\right)
		\leq
		\frac{1}{1 - n / q^{f(\ell)}} \approx 1.
	\end{equation*}
    where the last approximation is in the range of parameters for which $n<<q^{f(\ell)}$.
\end{lemma}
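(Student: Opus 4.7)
The plan is to prove the bound by a short chain of elementary inequalities. First, I would observe that every $\alpha \in \cZ(\eta)$ satisfies $\nu_\alpha(\eta) \ge 1$, so each numerator in the product satisfies $1 - 1/q^{\ell + \nu_\alpha(\eta)} \le 1$. This reduces the problem to proving that
$$\frac{1}{(1 - 1/q^{f(\ell)})^{\#\cZ(\eta)}} \leq \frac{1}{1 - n/q^{f(\ell)}}.$$

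Next, I would use that $\eta$ divides $M = \prod_{j=1}^n (x-\alpha_j)^{\lambda_j}$, whose distinct roots in $\F_q$ are exactly $\{\alpha_1, \dots, \alpha_n\}$, so $\#\cZ(\eta) \le n$. Since $1 - 1/q^{f(\ell)} \in (0,1)$, raising it to a smaller nonnegative integer exponent only enlarges the value, hence $(1 - 1/q^{f(\ell)})^{\#\cZ(\eta)} \ge (1 - 1/q^{f(\ell)})^n$. Bernoulli's inequality then gives $(1 - 1/q^{f(\ell)})^n \ge 1 - n/q^{f(\ell)}$, and taking reciprocals (which reverses the inequality and requires $n < q^{f(\ell)}$) produces the desired bound.

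The closing approximation $1/(1 - n/q^{f(\ell)}) \approx 1$ is then just the observation that $n/q^{f(\ell)}$ is negligible in the regime $n \ll q^{f(\ell)}$. There is no genuine obstacle in this argument, as it is a chain of standard inequalities; the only points that warrant care are keeping track of the direction of inequalities when inverting, and noting that the positivity condition $n < q^{f(\ell)}$ is implicit in the right-hand side being meaningful.
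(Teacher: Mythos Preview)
Your proposal is correct and follows essentially the same route as the paper: bound each numerator by $1$, use $\#\cZ(\eta)\le n$, then apply Bernoulli's inequality $(1-x)^n\ge 1-nx$ and invert. The paper proves Bernoulli by a one-line induction rather than naming it, and does not explicitly flag the positivity condition $n<q^{f(\ell)}$ needed for the reciprocal step, which you rightly note.
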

\begin{proof}
	We start noticing that for each factor in the product we have
	\begin{equation*}
		\frac{1 - 1 / q^{\ell + \nu_{\alpha}(\eta)}}{1 - 1 /q^{f(\ell)}}
		\leq
		\frac{1}{1 - 1 / q^{f(\ell)}}
	\end{equation*}
	Furthermore $\prod_{\alpha\in \cZ(\eta)}(1 - 1/q^{f(\ell)}) \geq (1 - 1/q^{f(\ell)})^n \geq 1 - n/q^{f(\ell)}$, where the last inequality holds because for every $x\in\R$, by induction on $n\geq 1$, we have that
    \begin{equation*}
        (1 - x)^n \geq \left(1 - (n-1)x\right)(1 - x) = 1 - nx + (n - 1)x^2\geq 1 - nx
    \end{equation*}
\end{proof}

\begin{remark}
	We give a scenario which highlights how Theorem~\ref{thm:main2} can be used in practice. Assume
	that a code is fixed such that $L - d_f - d_g + 1 = 20$, so that with an interleaving parameter
	$\ell = 4$, one has $\tm = 16$. If one wishes to ensure that the failure probability is less
	than a target probability of $q^{-31}$, then Theorem~\ref{thm:main2} (where we approximate $\P_{fail}$ with $q^{-(\ell + 1)(\tm - t)}/q$) states that choosing the
	distance parameter of the decoder $t = 10$, ensures that for any random error uniformly
	distributed on a maximal error locator $\Lm$ such that $\partial(\Lm) \le t$, the failure
	probability is less than $q^{-31}$.	    
\end{remark}

\subsection{Analysis of the decoding failure probability}
\label{Sec.AnalysisRN}

For any $\bR$ uniformly distributed in $\D^{\ERR}_{\bC}$ (as in Theorem~\ref{thm:main1}), under the hypothesis $\partial(\Lambda)\le t$ we 
have that $v_{\bC} = (\Lambda g,\Lambda\bf)\in\SR\ne\emptyset$. We recall that the decoding Algorithm~\ref{algoSRF} computes a nonzero minimal degree solution $v_s\in\SR$.

\subsubsection{Decoding failure probability with respect to the first error model} 
\label{RN_ERR1}
If Algorithm~\ref{algoSRF} fails, then $v_s\notin v_{\bC}\F_q[x]$ (see
Lemma~\ref{lm:algofailureconditionPOLY}). Note that the converse is not necessarily true, for example if
there exists another close code word $\bC'\neq \bC$ with $d(\bC',\bR) \le t$ and if the decoding algorithm
outputs $v_s=v_{\bC'}$.
Nevertheless, we can upper bound the failure probability of the algorithm as
$\P_{\fail}\leq\P(S_{\bR}\not\subseteq  v_{\bC}\F_q[x])$. 

We introduce some notations: for
$C\in\Z$ we let 
\begin{equation*}
    \F_q[x]_{m,C} \coloneq \left\{ a \in \F_q[x]/m : \partial (a \rem m) \leq C \right\},    
\end{equation*}
where $a \rem m$ is the remainder of $a$ modulo $m$, that is the unique
representative of $a$ modulo $m$ whose degree is less than $\partial(m)$. Note
that this set has cardinality 
\begin{equation*}
    \#\F_q[x]_{m,C}
    =
    \begin{cases}
        1 & \text{if }\ C+1\le 0\\
        q^{C+1} & \text{if }\ 0 < C+1 \leq \partial(m)\\
        q^{\partial(m)} & \text{if }\ C + 1 > \partial(m).
    \end{cases}
\end{equation*}
We let $\SE$ be the set 
\begin{equation*}
    \SE\coloneq \left\{ \varphi\in\F_q[x]/\Lambda : \forall i, \ g\varphi E'_i \in
\F_q[x]_{\Lambda,B + \partial(\Lambda)} \right\}    
\end{equation*}
for $B\coloneq d_f + d_g + t - L - 2$.

We need a new constraint to prove the following lemma.
\begin{constraint}
	\label{c_3}
	Algorithm~\ref{algoSRF} parameters satisfy $B < 0$.
\end{constraint}

\begin{lemma}
	\label{BaseLemmaRN}
	If Constraint~\ref{c_3} is satisfied,  
	$\SE=\{0\} \Rightarrow S_{\bR}\subseteq  v_{\bC}\F_q[x]$.
\end{lemma}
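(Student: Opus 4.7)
The plan is to show that every $(\varphi,\bPsi)\in S_{\bR}$ is a polynomial multiple of $v_{\bC}$. I would proceed in two stages: first, use the key equation to force $\Lambda \mid \varphi$ by leveraging the hypothesis $\SE = \{0\}$; then, use the two coprimality conditions from the definition of $\SRF_\ell(M;d_f,d_g)$ to factor $v_{\bC}$ out of $(\varphi,\bPsi)$.

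For the first stage, given $(\varphi,\bPsi)\in S_{\bR}$, I would multiply the key equation $\psi_i \equiv \varphi R_i \pmod M$ by $g$ and use the identity $gC_i \equiv f_i \pmod M$ (from the fact that $C_i$ interpolates the $i$-th row of $\bC$) to obtain $g\psi_i - \varphi f_i \equiv g\varphi E_i \pmod M$. Substituting $E_i = YE_i'$ with $Y = M/\Lambda$, and lifting to an equality of polynomials, this gives $g\psi_i - \varphi f_i = Y\,T_i$ with $T_i \equiv g\varphi E_i' \pmod \Lambda$. A direct bookkeeping using the degree bounds $\partial(\varphi) < d_g+t$ and $\partial(\psi_i) < d_f+t$ inside $S_{\bR}$ yields $\partial(g\psi_i - \varphi f_i) \leq d_f+d_g+t-2 = L+B$, hence $\partial(T_i) \leq B + \partial(\Lambda)$. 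Under Constraint~\ref{c_3} we have $B<0$, so $T_i$ has degree strictly less than $\partial(\Lambda)$ and therefore coincides with $g\varphi E_i' \rem \Lambda$. This shows $\bar\varphi \coloneq \varphi \bmod \Lambda \in \SE$, so the hypothesis $\SE=\{0\}$ forces $\Lambda \mid \varphi$.

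For the second stage, write $\varphi = \Lambda p$. Using $\Lambda R_i \equiv \Lambda C_i \pmod M$ (since $\Lambda E_i = ME_i'$) together with $gC_i \equiv f_i \pmod M$ gives $g\psi_i \equiv p\Lambda f_i \pmod M$. A degree check shows that both $g\psi_i$ and $p\Lambda f_i$ have degree at most $L+B < L$ under Constraint~\ref{c_3}, so the congruence is an equality of polynomials: $g\psi_i = p\Lambda f_i$ for every $i$. Taking the gcd of these identities over $i$ combined with the code condition $\gcd(\gcd(\bf),g)=1$ yields $g \mid p\Lambda$; combined with $\gcd(g,\Lambda)=1$ (which follows from $\gcd(g,M)=1$ and $\Lambda \mid M$), this refines to $g \mid p$. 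Writing $p=gh$ then gives $\varphi = h\Lambda g$ and $\psi_i = h\Lambda f_i$, so $(\varphi,\bPsi) = h\, v_{\bC} \in v_{\bC}\F_q[x]$ as required.

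The main technical obstacle is the degree bookkeeping at the two points where we promote a modular congruence to a polynomial equality: identifying $T_i$ with $g\varphi E_i' \rem \Lambda$ in the first stage, and passing from $g\psi_i \equiv p\Lambda f_i \pmod M$ to an equality in the second. Both rely essentially on $B<0$, so Constraint~\ref{c_3} enters the argument exactly in these two places; the rest is a straightforward application of the coprimality hypotheses built into the code definition.
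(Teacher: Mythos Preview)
Your proposal is correct and follows essentially the same approach as the paper's proof: derive $g\psi_i - f_i\varphi \equiv g\varphi E_i \pmod M$, divide by $Y$ to land in $\SE$, use $\SE=\{0\}$ to get $\Lambda\mid\varphi$, then promote the congruence to $g\psi_i=f_i\varphi$ via the degree bound $B<0$ and finish with the coprimality conditions. The only cosmetic difference is that in your second stage you re-derive $g\psi_i\equiv p\Lambda f_i \pmod M$ from the key equation, whereas the paper simply observes that once $\Lambda\mid\varphi$ the quantity $T_i=(g\psi_i-f_i\varphi)/Y$ (your $T_i$, their $\psi_i'$) is both $\equiv 0 \pmod\Lambda$ and of degree $<\partial(\Lambda)$, hence zero---this is slightly more direct but logically equivalent to what you wrote.
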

\begin{proof}
	Let $(\varphi,\psi_1,\ldots,\psi_\ell)\in \SR$. We know that for all $1 \le i \le \ell$,
	$g\varphi E_i  = g\varphi\left(R_i-\frac{f_i}{g}\right)  = g\psi_i-f_i\varphi \bmod M.$ Since
	$Y|E_i$ and $Y|M$, thanks to the above, we have that $Y|(g\psi_i-f_i\varphi)$, and we define the
	polynomial $\psi'_i \coloneq \frac{g\psi_i-f_i\varphi}{Y}$. Dividing the above modular equation by $Y$ we
	obtain $g\varphi E'_i = \psi'_i \bmod \Lambda$. Therefore,
	\begin{equation*}
		\partial(g\varphi E'_i \rem \Lambda) 
		\leq
		\partial(\psi'_i)
		\leq
		\max\{\partial(g\psi_i),\partial(f_i\varphi)\} - \partial(Y)= B + \partial(\Lambda)
	\end{equation*} 
	which means that $\varphi\in \SE$. Thus, thanks to the hypothesis $\SE=\{0\}$, we get
	$\Lambda|\varphi$ and that  $g\varphi E'_i = \psi'_i = 0 \bmod \Lambda.$
	Thanks to Constraint~\ref{c_3} and the above inequality we can conclude that
	$\partial(\psi'_i)<\partial(\Lambda)$, therefore $\psi'_i=0$ in $\F_q[x]$, \ie 
	\begin{equation}
		\label{eq:samefractions}
		\forall i=1,\ldots,\ell, \ g\psi_i = f_i\varphi.
	\end{equation}
	
	Since $\gcd(f_1,\ldots,f_\ell, g)=1$, Equations~\eqref{eq:samefractions} implies that $g|\varphi$.
	We have already seen that $\Lambda|\varphi$, so $g\Lambda|\varphi$ because $g$ and $\Lambda$ are
	coprime. Plugging $\varphi = a g\Lambda$ for some $a \in \F_q[x]$ into
	Equations~\eqref{eq:samefractions}, we deduce $g\psi_i = f_i\varphi = f_i a g\Lambda$, so
	$\psi_i = a \Lambda f_i$ for all $i$. We have shown $(\varphi,\psi_1,\ldots,\psi_\ell)\in
	(\Lambda g,\Lambda f_1,\ldots,\Lambda f_\ell)\F_q[x]$.
\end{proof}
Thanks to the above lemma we can upper bound the failure probability of
Algorithm~\ref{algoSRF} with $\P_{\fail}\leq \P(\SE\ne\{0\}).$

\begin{remark}\label{rmk:unicity}
	We note that, when the distance parameter $t$ of the decoding algorithm
	satisfies
	$t<\lf\frac{L - d_f - d_g + 1}{2}\rf$ thus (thanks to Lemma~\ref{lm:minDistSRFcode}), it is below half of the minimal distance of the code, we must have that $B + \partial(\Lambda) \leq B + t < 0$ since
	$\partial(\Lambda)\le t$.
	Under such circumstance we therefore
	have $\F_q[x]_{\Lambda, B + \partial(\Lambda)} = \{0\}$, thus 
    $$
    \varphi\in S_{\bE} \Leftrightarrow \forall i = 1,\ldots,\ell, \ \ g\varphi E'_i = 0 \bmod \Lambda.
    $$ We introduce some notation to make the point of this remark: for
     $m\in\F_q[x]$ being any divisor of $\Lambda$, we let
    $(m)\F_q[x]/\Lambda$ denote the set of equivalence classes whose any representative is a multiple of $m$. The above condition it can thus be rephrased as:
    \begin{equation*}
        \varphi \in \left(\frac{\Lambda}{\gcd(gE_1',\ldots, gE'_\ell, \Lambda)}\right)\F_q[x]/\Lambda = \left(\frac{\Lambda}{\gcd\left(g,\Lambda\right)}\right)\F_q[x]/\Lambda,
    \end{equation*}
    where in the last equality we used the hypothesis $\gcd(E'_1,\ldots,E'_\ell, \Lambda) = 1$. Since $\gcd(g,\Lambda) = 1$ we conclude that $S_{\bE} = \{0\}$. Thus the
	failure probability of Algorithm~\ref{algoSRF} is upper bounded by $\P(\SE\ne\{0\}) = 0$, and we get the expected unique decoding result when $t<\lf\frac{L - d_f - d_g + 1}{2}\rf$.
\end{remark}

We study the non-negative random variable $\# \SE$. A standard argument of probability shows that for a discrete non-negative random variable:
$$
\mathbb{E}[\#\SE]=\sum_{m\ge 1}\P(\#\SE\ge m).
$$
Since $0\in \SE$ is always true, we have $\P(\#\SE\ge
1)=1$ and, since $\SE$ is an $\F_q$-vector space, for $2 \le m\le q$, $\P(\#\SE\ge m)=\P(\SE\ne\{0\})$. Thus, we can upper bound 
  $\mathbb{E}[\#\SE] \geq 1+(q-1)\P(\SE\ne\{0\})$.
Therefore, we have $\P(\SE\ne\{0\})\leq (\mathbb{E}[\#\SE]-1) / (q-1)$. Using the expression
$\mathbb{E}[\#\SE]=\sum_{\varphi\in{\F_q[x]}/{\Lambda}}\P\left(\varphi\in \SE\right)$ and $\P(0\in
\SE)=1$, we can write
\begin{align}
	\label{boundInc/ExclSe}
	&\P(\SE\ne\{0\})\leq\frac{1}{q-1}\sum_{\varphi\in ({\mathbb{F}_q[x]}/{\Lambda})\setminus\{0\}}\P\left(\varphi\in \SE\right)\nonumber\\
	&=\frac{1}{q-1}\sum_{\varphi\in({\mathbb{F}_q[x]}/{\Lambda})\setminus\{0\}}\P\left(\forall i, \ g\varphi E_i'\in 
	\F_q[x]_{\Lambda,B + \partial(\Lambda)}\right).
\end{align}

We estimate the terms of the above sum in the following lemma:

\begin{lemma}
	\label{lm:probagpe}
	Assuming Constraint~\ref{c_3},
	if $\varphi\in\F_q[x]$ is such that $\gcd(\varphi,\Lambda)=\eta = \prod_{j\in\xi}(x - \alpha_j)^{\eta_j}$, then
	for the probability distribution of error model $\ERR$, we have that $\P\left(
	   \forall i, \ g \varphi E_i' \in \F_q[x]_{\Lambda,B + \partial(\Lambda)}
	   \right) = 1$ if $\eta = \Lambda$, otherwise 
	\begin{equation*}
        \P\left(
	   \forall i, \ g \varphi E_i' \in \F_q[x]_{\Lambda,B + \partial(\Lambda)}
	   \right)
	   \leq
            \begin{cases}
                \frac{q^{\ell \left(B + 1\right)}
		          }{
		      \prod_{\alpha\in\cZ(\Lambda/\eta)}\left(1 - 1/q^{\ell}\right)            
	           } & \text{if } \ \eta \ne \Lambda,\ \ \partial\left(\Lambda/\eta\right) \geq -B \\
                0 & \text{if } \ \eta \ne \Lambda,\ \ \partial\left(\Lambda/\eta\right) < -B
    \end{cases}	    
	\end{equation*}
\end{lemma}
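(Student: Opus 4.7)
The plan is to reduce the condition on $g\varphi E_i' \bmod \Lambda$ to one on the residues modulo $\Lambda/\eta$, then count the admissible $E' \in \Omega_{\Lambda,\ell}$ via CRT at each root of $\Lambda$. First I would handle the trivial case $\eta = \Lambda$: since then $\Lambda \mid \varphi$, we have $g\varphi E_i' \equiv 0 \bmod \Lambda$ for every $E_i'$, which lies in $\F_q[x]_{\Lambda, B+\partial(\Lambda)}$, yielding probability one. For $\eta \ne \Lambda$, write $\varphi = \eta\varphi'$ with $\gcd(\varphi',\Lambda/\eta) = 1$; since $\gcd(g,\Lambda) = 1$, multiplication by $g\varphi'$ is a bijection on $\F_q[x]/(\Lambda/\eta)$. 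Because $\eta$ divides both $g\varphi$ and $\Lambda$, the residue $g\varphi E_i' \rem \Lambda$ is a multiple of $\eta$, so the condition is equivalent to $\partial(g\varphi' E_i' \bmod \Lambda/\eta) \leq B + \partial(\Lambda/\eta)$ for every $i$.

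If $B + \partial(\Lambda/\eta) < 0$, this forces $g\varphi' E_i' \equiv 0 \bmod \Lambda/\eta$, hence $E_i' \equiv 0 \bmod \Lambda/\eta$ for all $i$ (as $g\varphi'$ is a unit mod $\Lambda/\eta$). This makes $\Lambda/\eta$, a non-trivial divisor, divide $\gcd(E_1',\ldots,E_\ell',\Lambda)$, contradicting $E' \in \Omega_{\Lambda,\ell}$; the probability is therefore $0$. For $B + \partial(\Lambda/\eta) \geq 0$, I would bound the number $N_\eta$ of good $E' \in \Omega_{\Lambda,\ell}$ as follows. Since $g\varphi'$ is a unit mod $\Lambda/\eta$, the number of admissible tuples $(E_i' \bmod \Lambda/\eta)_i$ is at most $q^{\ell(B+\partial(\Lambda/\eta)+1)}$. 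For each such residue, the lifts to $\F_q[x]/\Lambda$ factor prime by prime: partition $\cZ(\Lambda) = A \sqcup \cZ(\Lambda/\eta)$, where $A = \{\alpha_j : \eta_j = \lambda_j\}$. At $\alpha_j \in \cZ(\Lambda/\eta)$ the residue $\bmod (x-\alpha_j)^{\lambda_j - \eta_j}$ is fixed, leaving $q^{\ell\eta_j}$ free choices of high-order coefficients; the gcd condition at such $\alpha_j$ is already determined by the fixed $0$-th coefficient. At $\alpha_j \in A$, the low part is empty, so all $\lambda_j$ coefficients are free but constrained by the local gcd condition, contributing $(1-1/q^\ell)q^{\ell\lambda_j}$ tuples, exactly as in the proof of Lemma~\ref{lm:EulerFormula}. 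Multiplying everything and using $\partial(\Lambda) = \partial(\eta) + \partial(\Lambda/\eta)$ gives
\[
N_\eta \;\leq\; q^{\ell(B+\partial(\Lambda)+1)} \prod_{\alpha \in A}\left(1 - \frac{1}{q^\ell}\right).
\]
Dividing by $\#\Omega_{\Lambda,\ell} = q^{\ell\partial(\Lambda)}\prod_{\alpha \in \cZ(\Lambda)}(1-1/q^\ell)$ cancels the factors indexed by $A$, leaving the product over $\cZ(\Lambda)\setminus A = \cZ(\Lambda/\eta)$, which yields the stated bound.

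The main obstacle will be handling cleanly the interaction between the \emph{global} good condition (a joint degree bound on a polynomial modulo $\Lambda/\eta$) and the \emph{local} gcd condition defining $\Omega_{\Lambda,\ell}$. The key observation is that these two constraints are essentially decoupled across the partition $\cZ(\Lambda) = A \sqcup \cZ(\Lambda/\eta)$: at points of $A$ the good condition is vacuous while the gcd condition contributes a fresh factor, whereas at points of $\cZ(\Lambda/\eta)$ the gcd condition is a function of the same residues controlled by the good condition, so upper-bounding it by $1$ is sufficient. Writing this decomposition carefully is what produces the precise product over $\cZ(\Lambda/\eta)$ in the denominator.
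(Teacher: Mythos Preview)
Your proof is correct and follows essentially the same approach as the paper: both reduce the degree condition on $g\varphi E_i' \bmod \Lambda$ to a condition on $E_i' \bmod \Lambda/\eta$ via the unit $g\varphi' = g\varphi/\eta$, then handle the case $B+\partial(\Lambda/\eta)<0$ by the coprimality constraint, and otherwise bound the probability by counting. The only organizational difference is that the paper first argues that the projection of $(E_i')$ to $(\F_q[x]/(\Lambda/\eta))^\ell$ is uniform on $\Omega_{\Lambda/\eta,\ell}$ and computes $\#\Upsilon/\#\Omega_{\Lambda/\eta,\ell}$ there, whereas you count lifts directly inside $\Omega_{\Lambda,\ell}$ via your partition $\cZ(\Lambda)=A\sqcup\cZ(\Lambda/\eta)$; these two computations are equivalent and yield the same bound.
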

\begin{proof}
    We start by noticing that if $\eta = \Lambda$ then the probability reduces to $\P\left( \forall i, \ 0 \in \F_q[x]_{\Lambda,B + \partial(\Lambda)}\right) = 1$, thus in what follows we will assume that $\eta \ne \Lambda$.
    
    Since $\gcd(g,M)=1$, the distributions of the vectors $(\varphi E_1',\ldots,\varphi E_\ell')$
	and $(g\varphi E_1',\ldots,g\varphi E_\ell')$ over the sample space $$\Omega_{\Lambda,\ell} \coloneq \{
	(F_i)_{1\le i \le \ell} \in (\F_q[x]/\Lambda)^\ell : \gcd(F_1,\dots,F_\ell,\Lambda) = 1\},$$ are
	identical. Thus, we have 
	$
	\P( \forall i, \ g \varphi E_i' \in \F_q[x]_{\Lambda,B + \partial(\Lambda)} )
	=
	\P( \forall i, \ \varphi E_i' \in \F_q[x]_{\Lambda,B + \partial(\Lambda)} )
	$.
	
	Let us now show that $\varphi E_i' \in \F_q[x]_{\Lambda,B + \partial(\Lambda)} \Leftrightarrow (\varphi/ \eta) E_i' \in
	\F_q[x]_{\Lambda/\eta,B + \partial(\Lambda/ \eta)}$: The first condition can be rephrased as 
	$
	\varphi E_i'=a_i\Lambda + c_i
	$
	with $a_i,c_i\in\F_q[x]$ and $\partial(c_i)\le B + \partial(\Lambda)$. But then we must have that $ \eta|c_i$. Thus, we can divide
	the above by $\eta$ and obtain
	$
	(\varphi/ \eta) E_i'=a_i\Lambda/ \eta + c_i/ \eta
	$
	with $\partial \left(c_i/ \eta\right)\leq B + \partial\left(\Lambda/ \eta\right)$, which is equivalent to $(\varphi/ \eta) E_i' \in \F_q[x]_{\Lambda/
		\eta,B + \partial\left(\Lambda/ \eta\right)}$.
	Since $\gcd\left(\Lambda/\eta,\varphi/\eta\right) = 1$, the distributions of the vectors $(\varphi/ \eta) \left( E_1',\ldots,E_{\ell}'\right)$ and $(E_1',\ldots,E_{\ell}')$ are identical over the sample space $\Omega_{\Lambda/\eta,\ell}$, thus 
    $$
    \P( \forall i, \ \left(\varphi/\eta\right) E_i' \in \F_q[x]_{\Lambda/\eta,B + \partial(\Lambda/\eta)}) = \P( \forall i, \ E_i' \in \F_q[x]_{\Lambda/\eta,B + \partial(\Lambda/\eta)}).
    $$
	When $B + \partial(\Lambda/\eta) < 0$, the previous condition implies that $E_i' = 0 \bmod \Lambda/
	\eta$ for all $i$. Since $ \eta \ne \Lambda$, this is in contradiction with
	$\gcd(E_1',\dots,E_\ell',\Lambda)=1$ for all random matrix $\bE$. Therefore, the associated
	probability $\P( \forall i, \ g \varphi E_i' \in \F_q[x]_{\Lambda,B + \partial(\Lambda)})$ is zero. For the rest of the proof we assume that $B + \partial(\Lambda/\eta) \geq 0$. 
    
	The condition $E_i' \in \F_q[x]_{\Lambda/
		\eta,B + \partial\left(\Lambda/ \eta\right)}$ only depends 
	on the columns $(\be'_j)$ of the reduced random matrix $\bE' = \bE/Y$ for 
	$j \in \xi_{\Lambda/ \eta} \coloneq \{ j : \eta_j < \nu_{\alpha_j}(\Lambda) \}$,
	these columns are uniformly distributed in the sample space 
	$
	\Omega_{\Lambda/\eta,\ell}.
	$
	
	Therefore, letting
	$
	\Upsilon \coloneq \Bigl\{\bE = (\be_j)_{1 \le j \le n} : \forall i, \  E_i' \in \F_q[x]_{\Lambda/
		\eta,B + \partial\left(\Lambda/ \eta\right)} \Bigr\}
	$, we note that $\# \Upsilon = (\# \F_q[x]_{{\Lambda}/{ \eta},{B + \partial\left(\Lambda/\eta\right)}})^\ell  = q^{\ell(B + \partial\left(\Lambda/\eta\right) + 1)}$, where we used the previously stated hypothesis $B + \partial(\Lambda/\eta) \geq 0$.
    We can deduce that our probability equals 
	$$
	\P(\Upsilon) = \frac{ \# (\Omega_{\Lambda/\eta,\ell} \cap \Upsilon) } { \# \Omega_{\Lambda/\eta,\ell} }
	\leq 
	\frac{ \# \Upsilon } { \# \Omega_{\Lambda/\eta,\ell} }.
	$$
	Finally, Lemma~\ref{lm:EulerFormula} tells us that $\# \Omega_{\Lambda/\eta,\ell} = q^{\ell \left(\partial(\Lambda/\eta)\right)}
	\prod_{\alpha\in\cZ(\Lambda/\eta)}\left(1 - 1/q^{\ell}\right)  $.
\end{proof}
Before proving our results we still need the following technical lemma.

\begin{lemma}
	\label{lm:sumOverDivisorsPOLY}
	Fix $\Lambda \in\F_q[x]$ having all its roots in $\F_q$ and $f(x)$ an arbitrary real-valued function such that $f(0) = 1$. Then 
	\begin{equation*}
		\sum_{\eta|\Lambda}
		\prod_{\alpha\in\cZ(\eta)}
		f(\nu_{\alpha}(\eta))
		=
        \prod_{\alpha\in\cZ(\Lambda)}\left[1 + \sum_{k= 1}^{\nu_{\alpha}(\Lambda)}f(k)\right]
		\end{equation*}
\end{lemma}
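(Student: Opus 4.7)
The plan is to exploit the unique factorization of $\Lambda$ over $\F_q$ to parametrize its divisors by their multi-valuation. Since $\Lambda$ splits as $\Lambda = \prod_{\alpha \in \cZ(\Lambda)} (x-\alpha)^{\nu_\alpha(\Lambda)}$, every divisor $\eta \mid \Lambda$ is uniquely of the form $\eta = \prod_{\alpha \in \cZ(\Lambda)} (x-\alpha)^{k_\alpha}$ for some tuple $(k_\alpha)_{\alpha \in \cZ(\Lambda)}$ with $0 \le k_\alpha \le \nu_\alpha(\Lambda)$. First I would re-index the summation on the left-hand side over these tuples rather than over divisors.

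Next I would rewrite the inner product. By definition $\cZ(\eta) = \{\alpha \in \cZ(\Lambda) : k_\alpha \ge 1\}$, so
\[
\prod_{\alpha \in \cZ(\eta)} f(\nu_\alpha(\eta)) = \prod_{\substack{\alpha \in \cZ(\Lambda) \\ k_\alpha \ge 1}} f(k_\alpha).
\]
The key observation — and this is where the hypothesis $f(0) = 1$ enters — is that I can extend the product to range over all $\alpha \in \cZ(\Lambda)$ by inserting trivial factors $f(0) = 1$ for each $\alpha$ with $k_\alpha = 0$. This turns the expression into $\prod_{\alpha \in \cZ(\Lambda)} f(k_\alpha)$.

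Finally, the left-hand side becomes
\[
\sum_{\substack{(k_\alpha)_{\alpha \in \cZ(\Lambda)} \\ 0 \le k_\alpha \le \nu_\alpha(\Lambda)}} \prod_{\alpha \in \cZ(\Lambda)} f(k_\alpha),
\]
and a routine application of distributivity (interchanging the sum of products over a Cartesian product with the product of sums) factorizes this as
\[
\prod_{\alpha \in \cZ(\Lambda)} \sum_{k=0}^{\nu_\alpha(\Lambda)} f(k) = \prod_{\alpha \in \cZ(\Lambda)} \left[1 + \sum_{k=1}^{\nu_\alpha(\Lambda)} f(k)\right],
\]
where in the last step I separate off $f(0) = 1$ from the inner sum. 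This matches the desired right-hand side. There is no real obstacle here; the only subtle point is remembering to use $f(0) = 1$ to justify extending the indexing set of the product, and keeping the bijection between divisors $\eta \mid \Lambda$ and admissible exponent tuples clean.
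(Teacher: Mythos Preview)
Your proposal is correct and follows essentially the same approach as the paper: both proofs use the bijection between divisors $\eta \mid \Lambda$ and exponent tuples $(k_\alpha)_{\alpha \in \cZ(\Lambda)}$, invoke $f(0)=1$ to pass between a product over $\cZ(\eta)$ and one over $\cZ(\Lambda)$, and apply distributivity to interchange sum and product. The only cosmetic difference is that the paper starts from the right-hand side and expands, while you start from the left-hand side and factor.
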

\begin{proof}
    Since $f(0) = 1$, the product on the right hand side ca be written as $\prod_{\alpha\in\cZ(\Lambda)}
		\sum_{k= 0}^{\nu_{\alpha}(\Lambda)}f(k)$ and then expanded as
	\begin{equation*}
		\prod_{\alpha\in\cZ(\Lambda)}
		\sum_{k= 0}^{\nu_{\alpha}(\Lambda)}f(k)
		=
		\sum_{\substack{\left(\eta_{\alpha}\right)_{\alpha\in \cZ(\Lambda)}\\ 0\le\eta_{\alpha}\le\nu_{\alpha}(\Lambda)}}
		\prod_{\alpha\in \cZ(\Lambda)}
		f(\eta_\alpha).
	\end{equation*}
    We recognize in the right hand side sum a sum over the divisors $\eta$ of $\Lambda$ with $\nu_{\alpha}(\eta) = \eta_{\alpha}$, thus
    \begin{equation*}
        \sum_{\substack{\left(\eta_{\alpha}\right)_{\alpha\in \cZ(\Lambda)}\\ 0\le\eta_{\alpha}\le\nu_{\alpha}(\Lambda)}}
		\prod_{\alpha\in \cZ(\Lambda)}
		f(\eta_\alpha)
        =
        \sum_{\eta|\Lambda}
		\prod_{\alpha\in\cZ(\Lambda)}
		f(\nu_{\alpha}(\eta))
        =
        \sum_{\eta|\Lambda}
		\prod_{\alpha\in\cZ(\eta)}
		f(\nu_{\alpha}(\eta))
    \end{equation*}
    where in the last equality we used the hypothesis that $f(0) = 1$.
\end{proof}

\begin{lemma}
	\label{lm:boundfailureprobability}
	Given a random vector $(E_1',\ldots,E_\ell')$ uniformly distributed in $\Omega_{\Lambda,\ell}$, we have that
	
	\begin{equation*}
		\P
		\left(
			\SE \ne\{0\}
		\right) 
		\leq
		\frac{q^{\ell\left(B + 1\right)}}{q - 1}
		 q^{\partial(\Lambda)}\prod_{\alpha\in\cZ(\Lambda)}\left(\frac{1 - 1/q^{\ell + \nu_{\alpha}(\Lambda)}}{1 -1 / q^{\ell}}\right).
	\end{equation*}
\end{lemma}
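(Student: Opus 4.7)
The plan is to combine the three preceding lemmas via a direct calculation starting from inequality~\eqref{boundInc/ExclSe}. I would partition the sum over $\varphi \in (\F_q[x]/\Lambda)\setminus\{0\}$ according to $\eta \coloneq \gcd(\varphi,\Lambda)$, which by Lemma~\ref{lm:probagpe} determines the probability bound. The number of $\varphi$ with a given $\gcd = \eta$ follows from Lemma~\ref{lm:EulerFormula} specialized to interleaving parameter $1$, namely $q^{\partial(\Lambda/\eta)}\prod_{\alpha\in\cZ(\Lambda/\eta)}(1-1/q)$. Multiplying this count by the bound from Lemma~\ref{lm:probagpe} produces, for each proper divisor $\eta$ of $\Lambda$ with $\partial(\Lambda/\eta)\geq -B$, a contribution of $q^{\ell(B+1)}\,q^{\partial(\Lambda/\eta)}\prod_{\alpha\in\cZ(\Lambda/\eta)}\tfrac{1-1/q}{1-1/q^\ell}$, and $0$ otherwise.

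Next, I would change variables to $\eta'\coloneq\Lambda/\eta$, turning the sum into one indexed by divisors $\eta'$ of $\Lambda$ with $\eta'\ne 1$ and $\partial(\eta')\geq -B$. To match the shape required by Lemma~\ref{lm:sumOverDivisorsPOLY}, I would upper bound the sum by dropping both constraints; since every summand is nonnegative this is legitimate and only costs a bounded amount. Setting $f(0)\coloneq 1$ and $f(k)\coloneq q^k(1-1/q)/(1-1/q^\ell)$ for $k\geq 1$, Lemma~\ref{lm:sumOverDivisorsPOLY} converts the sum over divisors into a product over $\cZ(\Lambda)$.

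The final step is a per-root algebraic simplification. Using the geometric identity $\sum_{k=1}^{m} q^k = q(q^m-1)/(q-1)$ and combining fractions, each local factor reduces to
$$1 + \sum_{k=1}^{\nu_\alpha(\Lambda)} f(k) \;=\; 1 + \frac{q^{\nu_\alpha(\Lambda)}-1}{1-1/q^\ell} \;=\; q^{\nu_\alpha(\Lambda)}\cdot\frac{1 - 1/q^{\ell+\nu_\alpha(\Lambda)}}{1-1/q^\ell}.$$
Taking the product over $\alpha\in\cZ(\Lambda)$ factors out $q^{\partial(\Lambda)}$ and leaves exactly the claimed product; together with the prefactor $q^{\ell(B+1)}/(q-1)$ coming from the outer bound, this yields the statement. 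The main obstacle I anticipate is choosing the function $f$ correctly so that Lemma~\ref{lm:sumOverDivisorsPOLY} applies and the collapse to the factor $(1 - 1/q^{\ell+\nu_\alpha(\Lambda)})/(1 - 1/q^\ell)$ is clean; the remainder is careful bookkeeping with the three lemmas already established.
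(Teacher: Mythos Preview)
Your proposal is correct and follows essentially the same approach as the paper's proof: partition the sum in~\eqref{boundInc/ExclSe} by $\eta=\gcd(\varphi,\Lambda)$, count via Lemma~\ref{lm:EulerFormula} with $\ell=1$, bound each term via Lemma~\ref{lm:probagpe}, extend the sum to all divisors (the paper phrases this as ``extending the sum over all the divisors of $\Lambda$''), and then apply Lemma~\ref{lm:sumOverDivisorsPOLY} with exactly your choice of $f$ to collapse the divisor sum into the claimed product. Your per-root simplification via the geometric series is identical to the paper's.
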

\begin{proof}
    Thanks Lemma~\ref{lm:probagpe}, the sum in~\eqref{boundInc/ExclSe} can be temporarily reduced to nonzero elements $\varphi\in \F_q[x]/\Lambda$ such that $\partial\left(\gcd\left(\varphi,\Lambda\right)\right)\leq B + \partial(\Lambda)$, giving:
\begin{align}
    \label{sumWithCondition}
    \P(\SE\ne\{0\})
    &\leq\frac{1}{q-1}\sum_{\substack{\varphi\in ({\mathbb{F}_q[x]}/{\Lambda})\setminus\{0\}\\ \partial\left(\gcd\left(\varphi,\Lambda\right)\right)\leq B + \partial(\Lambda)}}
    \frac{q^{\ell \left(B + 1\right)}
		          }{
		      \prod_{\alpha\in\cZ(\Lambda/\eta)}\left(1 - 1/q^{\ell}\right)            
	           }\nonumber\\
    &\leq \frac{1}{q-1}\sum_{\varphi\in ({\mathbb{F}_q[x]}/{\Lambda})\setminus\{0\}}
    \frac{q^{\ell \left(B + 1\right)}
		          }{
		      \prod_{\alpha\in\cZ(\Lambda/\eta)}\left(1 - 1/q^{\ell}\right)            
	           }          
\end{align}
	Since the terms in the sum depend only on $\eta$, we regroup the $\varphi$ in the sum by their gcd with $\Lambda$.  
	Note that, thanks to Lemma~\ref{lm:EulerFormula}, the number of elements $\varphi\in \F_q[x]/\Lambda$ such that $\gcd(\varphi,\Lambda)=\eta$, is given by $\#\Omega_{\Lambda/\eta,1} = q^{ \partial(\Lambda/\eta)}
	\prod_{\alpha\in\cZ(\Lambda/\eta)}\left(1 - 1/q\right)$.  
	Therefore, extending the sum over all the divisors of $\Lambda$
	\begin{equation*}
        \P(\SE\ne\{0\})\leq
		\frac{q^{\ell(B+1)}}{q - 1}
		\sum_{\eta|\Lambda}
		q^{ \partial(\Lambda/\eta)}
		\prod_{\alpha\in\cZ(\Lambda/\eta)}
		\frac{\left(1 - 1/q\right)}{\left(1 - 1/q^\ell\right)}
	\end{equation*}	
	we can upper bound the quotient $\P(\SE\ne\{0\})/\frac{q^{\ell(B+1)}}{q - 1}$ with
	\begin{align*}
		\sum_{\eta|\Lambda}
		\prod_{\alpha\in\cZ(\eta)}
		\frac{1 - \frac{1}{q}}{1 - \frac{1}{q^{\ell}}}
		q^{\nu_{\alpha}\left(\eta\right)}
		=
		\prod\limits_{\alpha \in\cZ(\Lambda)}
		\left(
		1+
		\frac{1 - \frac{1}{q}}{1 - \frac{1}{q^{\ell}}}
		\sum_{k = 1}^{\nu_{\alpha}(\Lambda)}
		q^{k}
		\right)
	\end{align*}
	where in the last equality we used Lemma~\ref{lm:sumOverDivisorsPOLY} with 
	\begin{equation*}
		f(x)=
        \begin{cases}
		  \frac{1 - \frac{1}{q}}{1 - \frac{1}{q^\ell}}q^{x} & \text{if }\ x > 0\\
            0 & \text{if }\ x = 0
		\end{cases}
	\end{equation*}
	To conclude we notice that
	\begin{equation*}
		\prod\limits_{\alpha \in\cZ(\Lambda)}
		\left(
		1+
		\frac{1 - \frac{1}{q}}{1 - \frac{1}{q^{\ell}}}
		\sum_{k = 1}^{\nu_{\alpha}(\Lambda)}
		q^{k}
		\right)
		=
		\prod_{\alpha\in\cZ(\Lambda)}
		\frac{q^{\nu_{\alpha}(\Lambda)} - 1 / q^{\ell}}{1 - 1 / q^\ell}
		=
		q^{\partial(\Lambda)}
		\prod_{\alpha\in\cZ(\Lambda)}
		\frac{1 - 1 / q^{\ell + \nu_{\alpha}(\Lambda)}}{1 - 1 / q^\ell}. \qedhere
	\end{equation*}
\end{proof}		

\begin{proof}[Proof of Theorem~\ref{thm:main1}]
	We start by noticing that for every $\ell>0$, any choice of the input parameter $t\leq \tm$ satisfies  Constraint~\ref{c_3}, thus we can apply all the previous lemmas and upper bound the failure probability of Algorithm~\ref{algoSRF} with the quantity given by Lemma~\ref{lm:boundfailureprobability}.
	
	Thanks to the hypothesis of Theorem~\ref{thm:main1} we know that $\partial(\Lambda) \leq t$, 
	and using 
	$q^{\ell\left(B + 1\right)} q^t  = q^{-(\ell + 1)(\tm - t)}$, 
	we have proved Theorem~\ref{thm:main1}.
\end{proof}

\subsubsection{Decoding failure probability with respect to the second error model}
\label{RN_ERR2}

In the second error model, we need to make a distinction between the maximal error locator $\Lambda_m$ (over which there are uniform random errors) and the actual error locator $\Lambda = \Lambda_{\bE}$ which can be a proper divisor of $\Lambda_m$. We will denote $\P_{\ERRd}$ (resp. $\P_{\ERR}$) the probability function under the error model 2 (resp. the error model 1). 
Let $\cF$ be the event of decoding failure with algorithm parameter $t\geq\partial(\Lambda_m)$ \ie\  the set of random matrices $\bE$ such that Algorithm~\ref{algoSRF} returns "decoding failure".
Using the law of total probability, we have
\begin{equation}
	\label{LawTotalProba}
	\P_{\ERRd}( \cF )
	=
	\sum_{\Lambda|\Lambda_m}
	\P_{\ERRd}(\cF \ |\ \Lambda_{\bE} = \Lambda)
	\ 
	\P_{\ERRd}( \Lambda_{\bE} = \Lambda).
\end{equation}
The conditional probabilities 
$\P_{\ERRd}(\cF \ |\ \Lambda_{\bE} = \Lambda)$
in the sum  are equal to 
$\P_{\ERR}(\cF)$, which are upper bounded within the proof of Lemma~\ref{lm:boundfailureprobability} by
\begin{equation}
	\label{IntermediateBuondERR1}
	\P_{\ERR}(\cF)
	\leq
	\frac{q^{\ell\left(B + 1\right)}}{q - 1}
	q^{\partial(\Lambda)}\prod_{\alpha\in\cZ(\Lambda)}\left(\frac{1 - 1/q^{\ell + \nu_{\alpha}(\Lambda)}}{1 -1 / q^{\ell}}\right).
\end{equation}
Moreover, using again Lemma~\ref{lm:EulerFormula}, we have 
\begin{equation}
	\label{ConditionProba}
	\P_{\ERRd}( \Lambda_{\bE} = \Lambda)
	=
	\frac{\#\Omega_\Lambda^\ell}{q^{\ell\partial(\Lm)}}
	=
	\frac{q^{\ell\partial(\Lambda)}}{q^{\ell\partial(\Lm)}}
	\prod_{\alpha\in\cZ(\Lambda)}\left(1 - \frac{1}{q^{\ell}}\right).
\end{equation}
Using these facts we can prove Theorem~\ref{thm:main2}.

\begin{proof}[Proof of Theorem~\ref{thm:main2}]
	Plug Equations~\eqref{IntermediateBuondERR1} and~\eqref{ConditionProba} in Equation~\eqref{LawTotalProba} to obtain that 
	$\P_{\ERRd}( \cF ) / \frac{q^{\ell\left(B + 1\right)}}{(q - 1)q^{\ell\partial(\Lm)}}$ is less than or equal to
	\begin{align*}
		\sum_{\Lambda|\Lambda_m}
		q^{(\ell + 1)\partial(\Lambda)}
		\prod_{\alpha \in\cZ(\Lambda)}\left(1 -\frac{1}{q^{\ell + \nu_{\alpha}(\Lambda)}}\right) 
		&=
		\sum_{\Lambda|\Lambda_m}
		\prod_{\alpha\in\cZ(\Lambda)}q^{\nu_{\alpha}(\Lambda)(\ell + 1)}\left(1 -\frac{1}{q^{\ell + \nu_{\alpha}(\Lambda)}}\right)\\
		&=
		\prod_{\alpha\in\cZ(\Lambda_m)}\left[1 + \sum_{k = 1}^{\nu_{\alpha}(\Lambda_m)}q^{k(\ell + 1)}\left(1 -\frac{1}{q^{\ell + k}}\right)\right]\\
		&\le
		\prod_{\alpha\in\cZ(\Lambda_m)}\left[1 + \left(1 -\frac{1}{q^{\ell + \nu_{\alpha}(\Lambda_m)}}\right)\sum_{k = 1}^{\nu_{\alpha}(\Lambda_m)}q^{k(\ell + 1)}\right],
	\end{align*}
	where we used again Lemma~\ref{lm:sumOverDivisorsPOLY} with 
    \begin{equation*}
        f(x)
        =
        \begin{cases}
            q^{x(\ell + 1)}\left(1 -\frac{1}{q^{\ell + x}}\right) &\text{ if }\ x > 0\\
            0    &\text{ if }\ x = 0
        \end{cases}
    \end{equation*}
	and in the last inequality we used that $1 -1/q^{\ell + k}\leq 1 -1/q^{\ell + \nu_{\alpha}(\Lambda_m)}$ for every $k=1,\ldots,\nu_{\alpha}(\Lambda_m)$. 
	By computing the geometric sum inside the last product, the above is equal to
	\begin{align*}
		&\prod_{\alpha\in\cZ(\Lambda_m)}
		\left[
		1 + \left(1 - \frac{1}{q^{\ell + \nu_{\alpha}(\Lambda_m)}}\right)
		\left(\frac{q^{(\ell + 1)(\nu_{\alpha}(\Lambda_m) + 1)} - q^{\ell + 1}}{q^{\ell + 1} - 1} \right)
		\right]\\
        &=
		\prod_{\alpha\in\cZ(\Lambda_m)}
		\left[
		1 +
		\frac{1 - 1 / q^{\ell + \nu_{\alpha}(\Lambda_m)}}{1 - 1 /q^{\ell + 1}}
		\left(q^{\nu_{\alpha}(\Lambda_m)(\ell + 1)} - 1\right)
		\right].
	\end{align*}
	Since $\nu_{\alpha}(\Lambda_m) \geq 1$ we can upper bound the first $1$ in each term as  
	$$
	1 \leq \frac{1 - 1 / q^{\ell + \nu_{\alpha}(\Lambda_m)}}{1 - 1 /q^{\ell + 1}}
	$$
	and the above product is upper bounded as:
	\begin{equation*}
		\prod_{\alpha\in\cZ(\Lambda_m)}
		\left[
		1 +
		\frac{1 - 1/q^{\ell + \nu_{\alpha}(\Lambda_m)}}{1 - 1/q^{\ell + 1}}
		\left(q^{\nu_{\alpha}(\Lambda_m)(\ell + 1)} - 1\right)
		\right]
		\le
		q^{\partial(\Lm)(\ell + 1)}
		\prod_{\alpha\in\cZ(\Lambda_m)}
		\frac{1 - 1/q^{\ell + \nu_{\alpha}(\Lambda_m)}}{1 - 1/q^{\ell + 1}}
	\end{equation*}

	Now, thanks to the hypothesis of the theorem 
	we know that $\partial(\Lm)\leq t$, thus we can write 
	\begin{align*}
		\P_{\ERRd}( \cF ) &\leq \frac{q^{\ell\left(B + 1\right)}}{q - 1}	q^{\partial(\Lm)}
		\prod_{\alpha\in\cZ(\Lambda_m)}
		\frac{1 - 1/q^{\ell + \nu_{\alpha}(\Lambda_m)}}{1 - 1/q^{\ell + 1}}\\
		&\leq \frac{q^{\ell\left(B + 1\right)}}{q - 1}	q^{t}
		\prod_{\alpha\in\cZ(\Lambda_m)}
		\frac{1 - 1/q^{\ell + \nu_{\alpha}(\Lambda_m)}}{1 - 1/q^{\ell + 1}}.
	\end{align*}
	Using $q^{-(\ell+1)(\tm - t)}=q^{\ell\left(B + 1\right)} q ^t$,
	we have proved Theorem~\ref{thm:main2}.
\end{proof}

\section{Analysis of the decoder for a hybrid error model}
\label{sec:Hybrid Decoding}

In this section we consider a hybrid approach to the failure probability analysis for the multiplicity rational function codes studied above.
The approach is hybrid in the sense that it lies in between unique decoding and interleaving.

As in the previous section, the algorithm parameter $t$ is chosen and the probability of failure is studied under a random distribution of received words at distance at most $t$ from a given codeword associated to a reduced vector of fractions $\bf/g$.
With the hybrid approach, we analyze the failure probability for a modified random distribution, still determined by the algorithm parameter, but in which $t$ splits into two components: $t_{i}$ and $\htu$. While $\htu$ captures constant errors of the distribution and is bounded to fit the unique decoding, $t_{i}$ can be larger as it incorporates random errors and reflects the interleaving decoding technique having its bound $\tmi$ (Equation~\eqref{def:DmaxHyb}) increasing with $\ell$ beyond unique capability and asymptotically up to $L - d_f - d_g + 1 - 2\htu$.
Notably, if $t_{i}=0$, the algorithm never fails as $t = \htu \leq \lf\frac{L - d_f - d_g +1}{2} \rf$. Therefore, the probability of decoding failure is strictly related to $t_{i}$ and is analyzed under probabilistic assumptions, considering a random distribution of errors. 

The motivation for splitting $t$ is that not all errors can be assumed to be purely random. 
For instance, in the context of distributed computation, some errors might be introduced by malicious entities that deliberately choose specific error patterns to force the algorithm to fail. In such cases, the errors captured by $\htu$ remain independent of the error distribution and can still be corrected.

Since we are above the unique decoding radius, not all errors are decodable. Interleaving techniques can provide positive decoding results by considering error sets where most errors are decodable using probabilistic arguments. These techniques focus on fixed error positions and consider all possible errors at each position.
In contrast, in a hybrid setting, one can handle more general sets of errors by analyzing the set of all possible errors across certain subsets of the error positions.
This approach may be of broader interest in coding theory.

A first particular instance of this hybrid model was introduced in~\cite{guerrini2023simultaneous} in the same context studied in this paper but with a different analysis, then it has been jointly generalized in~\cite{abbondati2025simultaneous,brakensiek2025unique}.  
We remark that, as in~\cite{guerrini2023simultaneous}, in the forthcoming case of codes allowing poles (see Section~\ref{Sec:poles}), we are only able to perform interleaving on a subset of all errors (namely evaluation errors)
This suggests that there could be a deeper obstacle preventing the interleaving of the other type of errors (namely valuation errors).

On a technical level this hybrid analysis consists in studying the failure probability with respect to a specific portion of the error's distribution; allowing the errors to vary only over a subset $\xii\subseteq\xi$ of the  error support, while the errors in the complementary set $\xiu\coloneq\xi\setminus\xii$ are held fixed. 
Note that, in this section the above partition might seem arbitrary but, as we will see in the next Section~\ref{Sec:poles} on poles, it is clearly described by some property of the error itself (see Definition~\ref{def:errorsupportbad primes}).
Here we generalize the analysis of the previous section relative to the decoding of $\SRF$ codes (Definition~\ref{def:SRF}) by means of Algorithm~\ref{algoSRF}.
In this setting we decompose the distance parameter $t$ of the algorithm as
\begin{equation}
	\label{distParamAlgoHYB}
	t = \hti + \htu,
\end{equation}
for some  $\hti,\htu \geq 0$ bounds on the sizes of random and fixed errors respectively.

\paragraph*{Error models}
With the given distance parameter $t$ as in Equation~\eqref{distParamAlgoHYB}, 
we perform the hybrid analysis with respect to a distribution
specified by a  factorization of $\Lambda = \Lu\Li$ with $\gcd(\Lu,\Li) = 1$, where $\Lambda$ divides $M$. To specify the error model, we fix a sequence of nonzero error vectors 
$\beps_j\in\left(\F_q[x]/(x - \alpha_j)^{\lambda_j}\right)^{\ell}$ for every $j$ 
such that $\alpha_j\in\cZ(\Lu)$, with $\nu_{\alpha_j}(\beps_j) = \lambda_j - \nu_{\alpha_j}(\Lambda)$. 
Then the random distribution for the hybrid error model is determined by the set of error matrices 
$\bE\in\prod_{j=1}^{n}\left(\F_q[x]/(x - \alpha_j)^{\lambda_j}\right)^{\ell}$ such that the columns $\be_j$ of $\bE$ satisfy
\begin{enumerate}
	\item $\be_j = \boldsymbol{0} \text{ for all } j \text{ such that } \alpha_j\not\in\cZ(\Lambda)$,
	\item $\be_j = \beps_j \ \text{ for all } j\text{ such that } \alpha_j\in\cZ(\Lu)$,
	\item $\nu_{\alpha_j}(\be_j) = \lambda_j - \nu_{\alpha_j}(\Lambda) \text{ for all } j \text{ such that } \alpha_j\in\cZ(\Li)$.
\end{enumerate}
We let $\HYB$ be the set of error matrices specified as above.

\begin{lemma}
	\label{lm:randomuniform}
	If $\bE$ is uniformly distributed in $\HYB$, then 
	the random vector $(E_1' \bmod \Li,\ldots,E_\ell' \bmod \Li)$ is uniformly
	distributed in the sample space $\Omega_{\Li,\ell}$.
\end{lemma}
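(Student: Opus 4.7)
The plan is to exploit the Chinese Remainder decomposition $\F_q[x]/\Lambda \cong \F_q[x]/\Lu \times \F_q[x]/\Li$ (granted by $\gcd(\Lu,\Li)=1$) to isolate the contributions of the columns of $\bE$ to the projection $(E_1'\bmod\Li,\ldots,E_\ell'\bmod\Li)$. The key claim is that only the ``free'' columns located at $\alpha_j\in\cZ(\Li)$ influence this projection, whereas the fixed columns $\beps_j$ at $\alpha_j\in\cZ(\Lu)$ contribute only to the $\Lu$-component $E_i'\bmod\Lu$. Once this separation is established, the uniformity statement will follow by showing that the induced map from the free columns to $\Omega_{\Li,\ell}$ is a bijection.

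At each $\alpha_j\in\cZ(\Li)$, $\gcd(\Lu,\Li)=1$ gives $\nu_{\alpha_j}(\Lambda)=\nu_{\alpha_j}(\Li)$, so the factor $Y=M/\Lambda$ satisfies $\nu_{\alpha_j}(Y)=\lambda_j-\nu_{\alpha_j}(\Li)$. Writing $Y=(x-\alpha_j)^{\lambda_j-\nu_{\alpha_j}(\Li)}\,\tilde Y$ with $\tilde Y$ a unit at $\alpha_j$, the reduction of $E_i = Y E_i' \bmod M$ modulo $(x-\alpha_j)^{\lambda_j}$ becomes
\[
e_{i,j} \equiv (x-\alpha_j)^{\lambda_j-\nu_{\alpha_j}(\Li)}\,\tilde Y\, E_i' \pmod{(x-\alpha_j)^{\lambda_j}},
\]
so that, after dividing through by $(x-\alpha_j)^{\lambda_j-\nu_{\alpha_j}(\Li)}$ and inverting $\tilde Y$ modulo $(x-\alpha_j)^{\nu_{\alpha_j}(\Li)}$, the reduction $E_i'\bmod(x-\alpha_j)^{\nu_{\alpha_j}(\Li)}$ is a uniquely determined function of $\be_j$ alone. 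Moreover, the map $\be_j\mapsto \tilde Y^{-1}\,\be_j/(x-\alpha_j)^{\lambda_j-\nu_{\alpha_j}(\Li)}$ is a bijection between the admissible local columns (those of exact valuation $\lambda_j-\nu_{\alpha_j}(\Li)$) and the vectors of valuation zero in $(\F_q[x]/(x-\alpha_j)^{\nu_{\alpha_j}(\Li)})^\ell$.

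To conclude I would glue these local bijections by CRT across the points $\alpha_j\in\cZ(\Li)$: their product is a bijection from the admissible tuples $(\be_j)_{\alpha_j\in\cZ(\Li)}$ onto $\Omega_{\Li,\ell}$ (in the sense of Lemma~\ref{lm:EulerFormula}). Since $\bE$ uniform in $\HYB$ induces independent uniform distributions on the free columns (the other columns being fixed to $\boldsymbol{0}$ or to $\beps_j$), this bijection pushes the uniform distribution on $\HYB$ forward to the uniform distribution on $\Omega_{\Li,\ell}$, which is the desired statement. The one delicate step is verifying that the fixed columns $\beps_j$ at $\alpha_j\in\cZ(\Lu)$ are truly invisible modulo $\Li$; this is precisely what the CRT decomposition guarantees once one observes that the contribution of each column to $E_i'$ is purely local at $\alpha_j$, so no substantive obstacle arises.
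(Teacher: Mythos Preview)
Your proof is correct and follows essentially the same approach as the paper's: restrict to indices $j$ with $\alpha_j\in\cZ(\Li)$, use $\nu_{\alpha_j}(\Lambda)=\nu_{\alpha_j}(\Li)$, and observe that dividing $\be_j$ by the local factor of $Y$ gives a bijection onto vectors of valuation zero in $(\F_q[x]/(x-\alpha_j)^{\nu_{\alpha_j}(\Li)})^\ell$, then glue by CRT. The paper is terser (it simply says ``$\be_j/Y$ is random of valuation $0$'' without explicitly isolating the unit $\tilde Y$ or spelling out the CRT gluing), but the substance is the same.
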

\begin{proof}
	For the duration of this proof, we will only consider indices $j$ such that
	$\alpha_j\in\cZ(\Li)$, so that $\nu_{\alpha_j}(\Lambda) = \nu_{\alpha_j}(\Li)$. Recall that  $\be_{j}$ is a random vector of 
	$\left(\F_q[x]/(x - \alpha_j)^{\lambda_j}\right)^{\ell}$ of valuation $\lambda_j - \nu_{\alpha_j}(\Lambda)$
	for all those particular $j$. Since $\nu_{\alpha_j}(Y) = \nu_{\alpha_j}(\be_{j}) = \lambda_j - \nu_{\alpha_j}(\Lambda)$,
	we get that the vector $\be_{j}/Y \in
	(\F_q[x]/(x - \alpha_j)^{\nu_{\alpha_j}(\Lambda)})^\ell$ is random of valuation $0$. As a
	consequence, by the definition of $\HYB$, we obtain that $(E_1' \bmod \Li,\ldots,E_\ell' \bmod \Li)$ is random
	among the vectors of $(\F_q[x]/\Li)^\ell$ such that
	$\gcd(E_1',\ldots,E_{\ell}',\Li) = 1$.
\end{proof}

\medskip

As for the hybrid version of the error model $\ERRd$, we fix a maximal error locator $\Lambda_m$ factorized as $\Lambda_m = \Lmi\Lu$ with $\gcd\left(\Lmi,\Lu\right) = 1$, where $\Lm$ divides $M$. We fix a sequence of nonzero error vectors $\beps_j\in\left(\F_q[x]/(x - \alpha_j)^{\lambda_j}\right)^{\ell}$ for every $j$ such that $\alpha_j\in\cZ(\Lu)$, with $\nu_{\alpha_j}(\beps_j) = \lambda_j - \nu_{\alpha_j}(\Lambda_m)$. Then we consider the set of error matrices $\bE\in\prod_{j=1}^{n}\left(\F_q[x]/(x - \alpha_j)^{\lambda_j}\right)^{\ell}$ such that
\begin{enumerate}
	\item $\be_j = \boldsymbol{0} \text{ for all } j \text{ such that } \alpha_j\not\in\cZ(\Lambda_m)$,
	\item $\be_j = \beps_j \ \text{ for all } j\text{ such that } \alpha_j\in\cZ(\Lu)$,
	\item $\nu_{\alpha_j}(\be_j) \ge \lambda_j - \nu_{\alpha_j}(\Lambda_m) \text{ for all } j \text{ such that } \alpha_j\in\cZ(\Lmi)$.
\end{enumerate}
We let $\HYBd$ be the set of error matrices specified as above.

We notice that for a given error matrix $\bE$ in the distribution $\HYBd$ the associated error locator has the form $\Lambda_{\bE} = \Li\Lu$ for some divisor $\Li|\Lmi$.

\paragraph*{Our results} We can now state our results concerning the analysis of the correctness of the decoder \wrt to a hybrid error model. Define
\begin{equation}
	\label{def:DmaxHyb}
	\tmi \coloneq 
	\frac{\ell}{\ell+1} \left[L - d_f - d_g + 1 - 2 \htu\right].
\end{equation}

Note that we must have $2 \htu \le L - d_f - d_g + 1$ in order to ensure $\tmi \ge 0$.

\begin{theorem}
	\label{thm:main1hyb}
	Decoding Algorithm~\ref{algoSRF} on input
	\begin{enumerate}
		\item distance parameter $t = \htu + \hti$ for $\htu\leq \lf\frac{L - d_f - d_g + 1}{2}\rf$ and $\hti\leq \tmi$,
		\item a random received word $\bR$ uniformly distributed in $\bC + \HYB$
		for some code word $\bC  \in \SRF_\ell(M;d_f,d_g)$ and error locator $\Lambda = \Li\Lu$ such that $\partial (\Lu) \leq \htu$ and $\partial(\Li)\leq \hti$,
	\end{enumerate}
	outputs the center  code word $\bC$ of the distribution
	with a probability of failure 
	\begin{equation*}
		\P_{\fail}
		\leq 
		\frac{q^{-(\ell +1)(\tmi-\hti)}}{q-1}
		\prod_{\alpha\in\cZ(\Li)}\left(\frac{1 - 1/q^{\ell + \nu_\alpha(\Li)}}{1 -1/q^{\ell}}\right).
	\end{equation*} 
\end{theorem}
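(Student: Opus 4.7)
The plan is to adapt the structure of the proof of Theorem~\ref{thm:main1} (Subsection~\ref{RN_ERR1}) to the hybrid distribution $\HYB$, exploiting Lemma~\ref{lm:randomuniform}: only the reduction of the random error vector modulo $\Li$ is uniformly distributed in $\Omega_{\Li,\ell}$, while its reduction modulo $\Lu$ is determined by the fixed $\beps_j$'s. I would first verify that Constraint~\ref{c_3} is satisfied under the hypotheses $\htu \leq \lf(L - d_f - d_g + 1)/2 \rf$ and $\hti \leq \tmi$, which allows me to invoke Lemma~\ref{BaseLemmaRN} and conclude $\P_{\fail} \leq \P(\SE \neq \{0\})$. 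The standard vector-space averaging argument then gives $\P(\SE \neq \{0\}) \leq \frac{1}{q-1} \sum_{\varphi \neq 0} \P(\varphi \in \SE)$ exactly as in the pure-random case.

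The heart of the argument is to reduce the condition $\varphi \in \SE$ (which lives modulo $\Lambda = \Lu \Li$) to an analogous condition living modulo $\Li$ alone. Concretely, I would introduce a set $\SEh \subseteq \F_q[x]/\Li$ defined by a condition of the same shape as that defining $\SE$, but with $\Lambda$ replaced by $\Li$ and the degree bound $B + \partial(\Lambda)$ replaced by $B' + \partial(\Li)$ for $B' \coloneq B + \htu$. The implication $\varphi \in \SE \Rightarrow \varphi \bmod \Li \in \SEh$ should follow from the CRT decomposition $\F_q[x]/\Lambda \cong \F_q[x]/\Lu \times \F_q[x]/\Li$ together with the observation that, after factoring out the $\Lu$-part of the computation, the fixed errors contribute a deterministic polynomial whose effect on the degree bound is controlled by $\partial(\Lu) \leq \htu$. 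This is the analog of how $\eta$ is factored out in the proof of Lemma~\ref{lm:probagpe}, the novelty being that $\Lu$ is handled deterministically rather than probabilistically.

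Once this reduction is in place, the remaining steps are essentially a verbatim transcription of Lemmas~\ref{lm:probagpe} and~\ref{lm:boundfailureprobability} with $\Lambda$ replaced by $\Li$ and $B$ replaced by $B'$. Grouping the nonzero $\varphi_i \in \F_q[x]/\Li$ by $\eta_i = \gcd(\varphi_i, \Li)$, bounding each conditional probability using the uniform distribution of $(E_i' \bmod \Li)_i$ on $\Omega_{\Li,\ell}$ together with Lemma~\ref{lm:EulerFormula}, and summing over the divisors of $\Li$ via Lemma~\ref{lm:sumOverDivisorsPOLY} should yield
\begin{equation*}
\P(\SEh \neq \{0\}) \leq \frac{q^{\ell(B'+1)}}{q-1}\, q^{\partial(\Li)} \prod_{\alpha\in \cZ(\Li)} \frac{1 - 1/q^{\ell + \nu_\alpha(\Li)}}{1 - 1/q^{\ell}}.
\end{equation*}
An elementary identity verifies that $\ell(B'+1) + \hti = -(\ell+1)(\tmi - \hti)$; combined with $\partial(\Li) \leq \hti$, this produces the announced bound.

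The main obstacle will be the reduction from $\SE$ to $\SEh$: rigorously tracking how the fixed errors on $\cZ(\Lu)$ enter the degree constraint, and showing that they can be absorbed cleanly into a shift of the degree bound by exactly $\htu$ without incurring any further loss. Everything downstream of this reduction is mechanical and mirrors the arguments already carried out in Subsection~\ref{RN_ERR1}.
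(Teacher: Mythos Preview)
Your overall strategy is sound and lands on the same estimate as the paper, but there is a gap in the passage from $\SE$ to $\SEh$. You establish $\P_{\fail}\le\P(\SE\neq\{0\})$ via Lemma~\ref{BaseLemmaRN}, then aim to bound $\P(\SEh\neq\{0\})$. The bridge you propose---the implication $\varphi\in\SE\Rightarrow\varphi\bmod\Li\in\SEh$---is correct but does \emph{not} yield $\SE\neq\{0\}\Rightarrow\SEh\neq\{0\}$: a nonzero $\varphi\in\SE$ may well satisfy $\varphi\equiv 0\bmod\Li$ (take $\varphi=a\Li$ with $a\not\equiv 0\bmod\Lu$). If instead you push the implication through the expectation sum term by term, each nonzero $\varphi_i\in\F_q[x]/\Li$ is hit by $q^{\partial(\Lu)}$ lifts from $\F_q[x]/\Lambda$, and this extra factor spoils the exponent.

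The gap is fixable, but it needs the stronger constraint $B+\htu<0$ (Constraint~\ref{cst: ParamHybridVers}), not merely Constraint~\ref{c_3} as you write. Under it one can show that any $\varphi\in\SE$ with $\Li\mid\varphi$ is in fact zero: writing $\varphi=a\Li$, the $\SE$-condition gives $ga\Li E_i'\equiv c_i\bmod\Lambda$ with $\partial(c_i)\le B+\partial(\Lambda)<\partial(\Li)$, so reducing modulo $\Li$ forces $\Li\mid c_i$ hence $c_i=0$; then modulo $\Lu$ the coprimality $\gcd(E_1',\dots,E_\ell',\Lu)=1$ (and invertibility of $g,\Li$) forces $a\equiv 0$. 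This is a deterministic unique-decoding argument on the fixed $\Lu$-part that your proposal does not mention. The paper bypasses the detour entirely: it proves directly (Lemma~\ref{BaseLemmaRNHybrid}) that $\SEh=\{0\}\Rightarrow S_{\bR}\subseteq v_{\bC}\F_q[x]$ under Constraint~\ref{cst: ParamHybridVers}, by rerunning the proof of Lemma~\ref{BaseLemmaRN} modulo $\Li$ from the outset---shorter, and no need to revisit $\SE$ at all.
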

\begin{theorem}
	\label{thm:main2hyb}
	Decoding Algorithm~\ref{algoSRF} on input
	\begin{enumerate}
		\item distance parameter $t = \htu + \hti$ for $\htu\leq \lf\frac{L - d_f - d_g + 1}{2}\rf$ and $\hti\leq \tmi$,
		\item a random received word $\bR$ uniformly distributed in  $\bC  + \HYBd$
		for some code word $\bC \in \SRN_\ell(M;d_f,d_g)$ and error locator $\Lambda_m = \Lmi\Lu$ such that $\partial (\Lu) \leq \htu$ and $\partial(\Lmi)\leq \hti$,
	\end{enumerate}
	outputs the center  code word $\bC$ of the distribution
	with a probability of failure 
	\begin{equation*}
		\P_{\fail}
		\leq 
		\frac{q^{-(\ell +1)(\tmi-\hti)}}{q-1}
		\prod_{\alpha\in\cZ(\Lmi)}
		\left(\frac{1 - 1/q^{\ell + \nu_{\alpha}(\Lmi)}}{1 - 1/q^{\ell + 1}}\right).
	\end{equation*} 
\end{theorem}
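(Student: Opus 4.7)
The plan is to mirror the proof of Theorem~\ref{thm:main2}, replacing the pair $(\Lambda,\Lm)$ by $(\Li,\Lmi)$, and to reduce the analysis of the hybrid maximal model $\HYBd$ to the hybrid exact model $\HYB$ exactly as Theorem~\ref{thm:main2} reduces $\ERRd$ to $\ERR$.

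First, since the fixed part $\Lu$ and the prescribed vectors $\beps_j$ at positions $\alpha_j\in\cZ(\Lu)$ are common to every error matrix $\bE\in\HYBd$, the actual error locator is always of the form $\Lambda_{\bE}=\Li\Lu$ for some divisor $\Li|\Lmi$. Conditioning on such an $\Li$, the residual distribution on the columns $\be_j$ with $\alpha_j\in\cZ(\Lmi)$ is uniform on vectors of valuation exactly $\lambda_j-\nu_{\alpha_j}(\Li)$, so the conditional distribution is precisely the hybrid exact model $\HYB$ (with the given $\Lu$, $\beps_j$, and error locator $\Li\Lu$). Writing $\cF$ for the decoding-failure event, the law of total probability gives
\begin{equation*}
\P_{\HYBd}(\cF)\;=\;\sum_{\Li\mid\Lmi}\P_{\HYB}(\cF\mid \Lambda_{\bE}=\Li\Lu)\;\P_{\HYBd}(\Lambda_{\bE}=\Li\Lu).
\end{equation*}

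Next I would extract from the proof of Theorem~\ref{thm:main1hyb} its intermediate bound (the analogue of Equation~\eqref{IntermediateBuondERR1}), which has the shape
\begin{equation*}
\P_{\HYB}(\cF)\;\le\;\frac{K(\htu,\hti)}{q-1}\;q^{\partial(\Li)}\prod_{\alpha\in\cZ(\Li)}\frac{1-1/q^{\ell+\nu_\alpha(\Li)}}{1-1/q^{\ell}},
\end{equation*}
where $K(\htu,\hti)$ is the factor that, after using $\partial(\Li)\le\hti$, collapses into $q^{-(\ell+1)(\tmi-\hti)}/(q-1)$. For the marginal, Lemma~\ref{lm:EulerFormula} (applied factor by factor on $\cZ(\Lmi)$) yields
\begin{equation*}
\P_{\HYBd}(\Lambda_{\bE}=\Li\Lu)\;=\;\frac{q^{\ell\partial(\Li)}}{q^{\ell\partial(\Lmi)}}\prod_{\alpha\in\cZ(\Li)}\left(1-\frac{1}{q^{\ell}}\right),
\end{equation*}
in perfect analogy with Equation~\eqref{ConditionProba}.

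Plugging these two ingredients into the total-probability decomposition, the products over $\cZ(\Li)$ involving $(1-1/q^{\ell})$ cancel, leaving exactly the sum
\begin{equation*}
\sum_{\Li\mid\Lmi} q^{(\ell+1)\partial(\Li)}\prod_{\alpha\in\cZ(\Li)}\left(1-\frac{1}{q^{\ell+\nu_\alpha(\Li)}}\right)
\end{equation*}
that already appears in the proof of Theorem~\ref{thm:main2}. Applying Lemma~\ref{lm:sumOverDivisorsPOLY} with the same choice of $f$ (defined to vanish at $0$ and equal to $q^{k(\ell+1)}(1-q^{-\ell-k})$ for $k>0$), bounding each factor $(1-1/q^{\ell+k})$ by $(1-1/q^{\ell+\nu_\alpha(\Lmi)})$, computing the geometric sum in $k$, and finally absorbing the leading $1$ into the dominant term using $\nu_\alpha(\Lmi)\ge 1$ (again, mimicking the end of Theorem~\ref{thm:main2}), I obtain
\begin{equation*}
\sum_{\Li\mid\Lmi}\cdots\;\le\; q^{(\ell+1)\partial(\Lmi)}\prod_{\alpha\in\cZ(\Lmi)}\frac{1-1/q^{\ell+\nu_\alpha(\Lmi)}}{1-1/q^{\ell+1}}.
\end{equation*}
Combining with the prefactor, cancelling $q^{\ell\partial(\Lmi)}$, and using $\partial(\Lmi)\le\hti$ together with the identity $K(\htu,\hti)\,q^{\hti}=q^{-(\ell+1)(\tmi-\hti)}$ that is hard-wired in Theorem~\ref{thm:main1hyb}, yields the claimed bound. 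The only genuinely delicate step is the first one: one must check carefully that conditioning on $\Lambda_{\bE}=\Li\Lu$ inside $\HYBd$ really reproduces $\HYB$ (in particular that the fixed vectors $\beps_j$ on $\cZ(\Lu)$ play the same role in both models), so that Theorem~\ref{thm:main1hyb}'s intermediate estimate applies verbatim; the remaining manipulations are a routine copy of the proof of Theorem~\ref{thm:main2}.
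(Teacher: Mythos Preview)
Your proposal is correct and follows essentially the same approach as the paper: conditioning via the law of total probability on the random part $\Li\mid\Lmi$, identifying the conditional law with $\HYB$, using the intermediate bound from Lemma~\ref{lm:boundfailureprobabilityhybrid} (your $K(\htu,\hti)$ is $q^{\ell(B+1+\htu)}$), computing the marginal as in Equation~\eqref{ConditionProba}, and then bounding the resulting divisor sum exactly as in the proof of Theorem~\ref{thm:main2}. The only cosmetic slip is that in your displayed total-probability formula you wrote $\P_{\HYB}(\cF\mid\Lambda_{\bE}=\Li\Lu)$ where the conditional should first be taken under $\P_{\HYBd}$ and then identified with $\P_{\HYB}(\cF)$, but your surrounding prose makes clear you understand this.
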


\begin{exam}\label{example}
	Let's give a scenario that would highlight how Theorem~\ref{thm:main2hyb}
	can be used in practice. Assume that a code is fixed such that 
	$ L - d_f - d_g + 1 = 200$, so that $\tm = 160$ when one interleaves for $\ell=4$.
	Assume one wanted to make sure that the failure probability is less than a target probability of $q^{-31}$, and also that $50$ weighted errors can always be corrected ($\htu = 50$), for instance for protecting against a malicious entity.
	Then $\tmi = 80$ and one would have to choose the parameter $t = 134$ (thus $\hti = 74$) for the decoder (where we approximate the failure probability by $q^{-(\ell+1)(\tmi - \hti)}/q$).
	Then Theorem~\ref{thm:main2hyb} would ensure that for any error with locator 
	$\Lu$ such that $\partial(\Lu) \le 50$ and for any random error distributed uniformly
	on an error locator $\Lmi$ such that $\partial(\Lmi) \le 74$ (with $\Lmi$ and $\Lu$ coprime), the failure probability is less than $q^{-31}$.
\end{exam}

We introduce a modified version of the set $\SE$ defined as
\begin{equation*}
	\SEh\coloneq
	\left\{
	\varphi\in\F_q[x]/\Li : \forall i, \ g\varphi E'_i \in \F_q[x]_{\Li,B + \partial(\Lambda)}
	\right\}
\end{equation*}
with $B\coloneq d_f + d_g + t - L - 2 = d_f + d_g + \hti +\htu - L - 2 $. 
The hybrid versions of Constraint~\ref{c_3} and Lemma~\ref{BaseLemmaRN} are as follows:

\begin{constraint}
	\label{cst: ParamHybridVers}
	The parameters of Algorithm~\ref{algoSRF} satisfy $B + \htu < 0$.
\end{constraint}
\begin{lemma}
	\label{BaseLemmaRNHybrid}
	If Constraint~\ref{cst: ParamHybridVers} is satisfied then $\SEh = \{0\}
	\Rightarrow S_{\bR}\subseteq v_{\bC}\F_q[x]$.
\end{lemma}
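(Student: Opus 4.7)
The plan is to mirror the proof of Lemma~\ref{BaseLemmaRN}, adapting it to the finer factorization $\Lambda = \Li \Lu$ underlying the hybrid error model. Starting from any $(\varphi,\psi_1,\ldots,\psi_\ell) \in \SR$, I would set $\psi'_i := (g\psi_i - f_i\varphi)/Y$ exactly as in Lemma~\ref{BaseLemmaRN}; the same algebraic manipulations yield the identity $g \varphi E'_i \equiv \psi'_i \pmod{\Lambda}$ in $\F_q[x]$, together with the degree bound $\partial(\psi'_i) \leq B + \partial(\Lambda)$. Reducing this congruence modulo $\Li$ (which divides $\Lambda$) shows that $\varphi \bmod \Li$ belongs to $\SEh$, and the hypothesis $\SEh = \{0\}$ forces $\Li \mid \varphi$ in $\F_q[x]$.

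Whereas Lemma~\ref{BaseLemmaRN} immediately upgrades this to the full divisibility $\Lambda \mid \varphi$, here two additional steps are needed to extract $\Lu \mid \varphi$. First, because $\Li \mid \Lambda$ and $\Li \mid \varphi$, the equality $g\varphi E'_i = a_i \Lambda + \psi'_i$ in $\F_q[x]$ forces $\Li \mid \psi'_i$. The quotient $\psi'_i/\Li$ then has degree at most $B + \partial(\Lu) \leq B + \htu$, which is negative by Constraint~\ref{cst: ParamHybridVers}. Hence $\psi'_i = 0$ in $\F_q[x]$, and the congruence collapses to $\Lambda \mid g \varphi E'_i$; coprimality $\gcd(g,\Lambda) = 1$ then yields $\Lu \mid \varphi E'_i$ for every $i$.

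The main obstacle, and the step where the hybrid assumption enters in an essential way, is to convert this divisibility into $\Lu \mid \varphi$. I would exploit the fixed part of the error model: for every $\alpha_j \in \cZ(\Lu)$ the column $\be_j$ is constrained to be the prescribed $\beps_j$, whose valuation at $\alpha_j$ equals $\lambda_j - \nu_{\alpha_j}(\Lambda) = \nu_{\alpha_j}(Y)$; hence at least one component $E'_i$ satisfies $\nu_{\alpha_j}(E'_i) = 0$. Collecting these conditions for each $\alpha_j \in \cZ(\Lu)$ gives $\gcd(E'_1,\ldots,E'_\ell,\Lu) = 1$, and combined with $\Lu \mid \varphi E'_i$ for all $i$ this finally forces $\Lu \mid \varphi$, so that $\Lambda = \Li\Lu \mid \varphi$.

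Once $\Lambda \mid \varphi$ and $\psi'_i = 0$ are established, the conclusion is the same as at the end of Lemma~\ref{BaseLemmaRN}: the relations $g\psi_i = f_i\varphi$ together with $\gcd(f_1,\ldots,f_\ell,g) = 1$ give $g \mid \varphi$, and since $\gcd(g,\Lambda) = 1$ one obtains $g\Lambda \mid \varphi$; writing $\varphi = p\,g\Lambda$ and substituting back yields $\psi_i = p\Lambda f_i$ for all $i$, so that $(\varphi,\Bold{\psi}) \in v_{\bC}\,\F_q[x]$.
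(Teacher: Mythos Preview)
Your proof is correct and follows essentially the same approach as the paper's; the paper is simply terser, writing ``the end of the proof is identical to the one of Lemma~\ref{BaseLemmaRN}'' after obtaining $\psi'_i = 0$, whereas you spell out the step $\Lu \mid \varphi$ (and hence $\Lambda \mid \varphi$) that this phrase leaves implicit. One minor simplification: the coprimality $\gcd(E'_1,\ldots,E'_\ell,\Lu)=1$ is an instance of the general fact $\gcd(E'_1,\ldots,E'_\ell,\Lambda)=1$, which holds by construction of $E'_i = E_i/Y$ with $Y = \gcd(E_1,\ldots,E_\ell,M)$, so you need not invoke the specific structure of the fixed errors $\beps_j$.
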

\begin{proof}
	Let $(\varphi,\psi_1,\ldots,\psi_\ell)\in S_{\bR}$. 
	The proof of Lemma~\ref{BaseLemmaRN} shows that $g\varphi E'_i$ is equal to 
	$\psi'_i \coloneq \frac{g\psi_i-f_i\varphi}{Y}$ modulo $\Lambda$, hence also modulo $\Li$.
	The same proof gives $\partial(\psi'_i)  \leq B + \partial(\Lambda)$.
	This means that $\varphi\in \SEh$, thus thanks to the hypothesis $\SEh=\{0\}$, we get
	$\Lambda_i|\varphi$, thus $g\varphi E'_i = \psi'_i = 0 \bmod \Li.$
	Thanks to Constraint~\ref{cst: ParamHybridVers} we have that $\partial(\psi'_i)  \leq B + \partial(\Lambda) < \partial(\Li)$, therefore $\psi'_i=0$ in $\F_q[x]$.
	The end of the proof is identical to the one of Lemma~\ref{BaseLemmaRN}.
\end{proof}

As in Equation~\eqref{boundInc/ExclSe}, we have 
$$
\P(\SEh\ne\{0\})
\leq
\frac{1}{q-1}\sum_{\varphi\in({\mathbb{F}_q[x]}/{\Li})\setminus\{0\}}\P\left(\forall i, \ g\varphi E_i'\in 
\F_q[x]_{\Li,B + \partial(\Lambda)}\right), 
$$
whose sum we now bound.

\begin{lemma}
	\label{lm:boundfailureprobabilityhybrid}
	Given a random vector $(E_1',\ldots,E_\ell')$ uniformly distributed in $\Omega_{\Li,\ell}$, we have that
	
	\begin{align*}
	\sum_{\varphi\in({\mathbb{F}_q[x]}/{\Li})\setminus\{0\}}\P\left(\forall i, \ g\varphi E_i'\in 
		\F_q[x]_{\Li,B + \partial(\Lambda)}\right)
		\leq
		q^{\ell(B +1 +\htu )}
		 q^{\partial(\Li)}
		\prod_{\alpha\in\cZ(\Li)}\left(\frac{1 - 1 / q^{\ell + \nu_{\alpha}(\Li)}}{1 -1 / q^{\ell}}
		\right).
	\end{align*}
\end{lemma}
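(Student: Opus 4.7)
The plan is to mirror the proof of Lemma~\ref{lm:boundfailureprobability} with $\Lambda$ systematically replaced by $\Li$, keeping track of the fact that the slack in the remainder bound $B + \partial(\Lambda) = B + \partial(\Li) + \partial(\Lu)$ now contains an extra $\partial(\Lu)$ term. This extra term will propagate through the calculation and yield the additional factor $q^{\ell \htu}$ in the final bound once we use $\partial(\Lu) \le \htu$.

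First I would establish a hybrid analogue of Lemma~\ref{lm:probagpe}: for $\varphi \in \F_q[x]/\Li$ with $\gcd(\varphi,\Li) = \eta \neq \Li$, I claim
\begin{equation*}
\P\bigl(\forall i,\ g\varphi E_i' \in \F_q[x]_{\Li, B + \partial(\Lambda)}\bigr)
\;\le\;
\frac{q^{\ell(B + \partial(\Lu) + 1)}}{\prod_{\alpha \in \cZ(\Li/\eta)}(1 - 1/q^\ell)},
\end{equation*}
with the probability being $0$ when $B + \partial(\Lu) + \partial(\Li/\eta) < 0$. The proof transports the argument of Lemma~\ref{lm:probagpe} to this setting: since $\gcd(g,M)=1$ and $\Li \mid M$, we have $\gcd(g,\Li)=1$, so multiplying by $g$ preserves the distribution. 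The key step is rewriting $\varphi E_i' \in \F_q[x]_{\Li, B + \partial(\Lambda)}$ as $(\varphi/\eta) E_i' \in \F_q[x]_{\Li/\eta, B + \partial(\Lambda/\eta)}$, where now $\partial(\Lambda/\eta) = \partial(\Li/\eta) + \partial(\Lu)$. By Lemma~\ref{lm:randomuniform}, $(E_i' \bmod \Li)_i$ is uniform on $\Omega_{\Li,\ell}$, so the same counting yields a numerator of $q^{\ell(B + \partial(\Li/\eta) + \partial(\Lu) + 1)}$ (whenever this exponent is nonnegative) and a denominator $\#\Omega_{\Li/\eta,\ell} = q^{\ell \partial(\Li/\eta)}\prod_{\alpha\in\cZ(\Li/\eta)}(1-1/q^\ell)$ from Lemma~\ref{lm:EulerFormula}; cancellation of $q^{\ell \partial(\Li/\eta)}$ gives the stated form.

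Next I would group nonzero $\varphi \in \F_q[x]/\Li$ by $\eta = \gcd(\varphi,\Li)$. By Lemma~\ref{lm:EulerFormula}, the number of such $\varphi$ with fixed $\eta$ equals $\#\Omega_{\Li/\eta,1} = q^{\partial(\Li/\eta)}\prod_{\alpha\in\cZ(\Li/\eta)}(1-1/q)$. Extending the sum over all divisors $\eta$ of $\Li$ gives
\begin{equation*}
\sum_{\varphi \neq 0}\P(\cdots)
\;\le\;
q^{\ell(B + \partial(\Lu) + 1)}
\sum_{\eta \mid \Li} q^{\partial(\Li/\eta)}
\prod_{\alpha \in \cZ(\Li/\eta)} \frac{1 - 1/q}{1 - 1/q^\ell}.
\end{equation*}
Applying Lemma~\ref{lm:sumOverDivisorsPOLY} to the inner sum with exactly the same choice of $f$ as in the proof of Lemma~\ref{lm:boundfailureprobability} transforms this into $q^{\partial(\Li)}\prod_{\alpha\in\cZ(\Li)}\frac{1-1/q^{\ell+\nu_\alpha(\Li)}}{1-1/q^\ell}$. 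Bounding $\partial(\Lu) \le \htu$ finally produces the claimed inequality.

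The main obstacle is essentially bookkeeping: one must carefully verify that the slack parameter behaves as $B + \partial(\Lambda/\eta) = B + \partial(\Li/\eta) + \partial(\Lu)$ (so that the extra $\partial(\Lu)$ contributes a clean multiplicative $q^{\ell \partial(\Lu)}$ factor independent of $\eta$) and that Lemma~\ref{lm:randomuniform} provides exactly the uniform distribution on $\Omega_{\Li,\ell}$ needed to invoke the same counting as in the non-hybrid case. Once these two points are in place, the algebra of the original lemma carries over unchanged.
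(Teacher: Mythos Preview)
Your proposal is correct and follows essentially the same route as the paper's proof: bound each term via the analogue of Lemma~\ref{lm:probagpe} (picking up the extra $q^{\ell\partial(\Lu)}$ from $\partial(\Lambda/\eta)=\partial(\Li/\eta)+\partial(\Lu)$), group by $\eta=\gcd(\varphi,\Li)$, apply Lemma~\ref{lm:sumOverDivisorsPOLY} exactly as in Lemma~\ref{lm:boundfailureprobability}, and finish with $\partial(\Lu)\le\htu$. The only superfluous step is your appeal to Lemma~\ref{lm:randomuniform}, since uniformity on $\Omega_{\Li,\ell}$ is already the hypothesis of the lemma.
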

\begin{proof}
	As in the proof of Lemma~\ref{lm:probagpe}, assuming Constraint~\ref{cst: ParamHybridVers} we can upper bound the generic term of the above sum over $\varphi\in ({\mathbb{F}_q[x]}/{\Li})\setminus\{0\}$ in terms of $\eta\coloneq\gcd(\Li,\varphi)$ as
	$$
		\P\left(\forall i, \ g\varphi E_i'\in 
		\F_q[x]_{\Li,B + \partial(\Lambda)}\right)
		\leq
		\frac{q^{\ell(B+1+\partial(\Li/\eta)+\partial(\Lu))}}{\# \Omega_{\Li,\eta}}
		=
		\frac{q^{\ell(B+1+\partial(\Lu))}}{\prod_{\alpha\in\cZ(\Li/\eta)}\left(1 - 1/q^\ell\right)}.	
	$$
	Thus, by considering that for each gcd $\eta$ the number of terms in the sum is  $\#\Omega_{\Li/\eta,1} = q^{\partial(\Li/\eta)}\prod_{\alpha\in\cZ(\Li/\eta)}\left(1 - 1/q\right)$ we can upper bound the sum as
	\begin{align*}
		q^{\ell(B+1 + \partial(\Lu))}
		\sum_{\eta|\Li}
		q^{ \partial(\Li/\eta)}
		\prod_{\alpha\in\cZ(\Li/\eta)}
		\frac{\left(1 - 1/q\right)}{\left(1 - 1/q^\ell\right)}
		& \leq 
	q^{\ell(B+1 + \partial(\Lu))}
	q^{\partial(\Li)}\prod_{\alpha\in\cZ(\Li)}\left(\frac{1 - 1/q^{\ell + \nu_{\alpha}(\Li)}}{1 -1 / q^{\ell}}\right).
	\end{align*}
    where the last inequality is obtained, as in the proof of Lemma~\ref{lm:boundfailureprobability}, by using Lemma~\ref{lm:sumOverDivisorsPOLY} with 
	\begin{equation*}
		f(x)=
        \begin{cases}
		  \frac{1 - \frac{1}{q}}{1 - \frac{1}{q^\ell}}q^{x} & \text{if }\ x > 0\\
            0 & \text{if }\ x = 0
		\end{cases}
	\end{equation*}
    and computing the resulting geometric sum.
	Using that $\partial(\Lu)\leq \htu$ we obtain our statement.
\end{proof}

\begin{proof}[Proof of Theorem~\ref{thm:main1hyb}]
	As in the proof of Theorem~\ref{thm:main1}, we start by noticing that our choice of parameters satisfy Constraint~\ref{cst: ParamHybridVers}. We first notice that $\hti\leq \tmi = \ell/(\ell+1) \left[L - d_f - d_g + 1 - 2 \htu\right] < L - d_f - d_g + 1 - 2 \htu$, thus Constraint~\ref{cst: ParamHybridVers} is satisfied.
	Thanks to Lemma~\ref{BaseLemmaRNHybrid} and Lemma~\ref{lm:boundfailureprobabilityhybrid}, we can upper bound the failure probability by
	\begin{equation*}
		\P_{fail}
		\leq
		\P(\SEh\ne\{0\})
		\leq
        \frac{q^{\ell(B +1 +\htu )}}{q - 1}
		q^{\partial(\Li)}
		\prod_{\alpha\in\cZ(\Li)}\left(\frac{1 - 1 / q^{\ell + \nu_{\alpha}(\Li)}}{1 -1 / q^{\ell}}
		\right).
	\end{equation*}
	Since $\partial(\Li)\leq \hti$, we have
	$
	q^{\ell(B +1 +\htu )}
	q^{\partial(\Li)}
	\le 
	q^{\ell(B +1 +\htu )}
	q^{\hti}
	=
	q^{-(\ell+1)(\tmi - \hti)}
	$.
\end{proof}

\begin{proof}[Proof of Theorem~\ref{thm:main2hyb}]
	Let $\cF$ be the event of decoding failure, \ie{}the set of random matrices $\bE$ such that Algorithm~\ref{algoSRF} returns "decoding failure" with input parameter $t = \hti + \htu$ as in the statement of Theorem~\ref{thm:main2hyb}.
	We will denote $\P_{\HYBd}$ (resp. $\P_{\HYB}$) the probability function under the hybrid error model 2 (resp. model 1) specified by a given factorization of the error locator, and by a sequence of fixed error vectors $\beps_j$ for every $j$ such that $\alpha_j\in\cZ\left(\Lu\right)$.

	Using the law of total probability, we have that $\P_{\HYBd}( \cF )$ can be decomposed as the sum
	
	$$
	\P_{\HYBd}(\cF)
	=
	\sum_{\Li|\Lmi}
	\P_{\HYBd}(\cF \ |\ \Lambda_{\bE} = \Li\Lu)
	\ 
	\P_{\HYBd}( \Lambda_{\bE} = \Li\Lu),
	$$
	where 
	$
	\P_{\HYBd}(\cF \ |\ \Lambda_{\bE} = \Li\Lu)
	=
	\P_{\HYB}(\cF)
	$,
	whereas
	$$
	\P_{\HYBd}(\Lambda_{\bE} = \Li\Lu)
	= 
	\frac{q^{\ell\partial(\Li)}}{q^{\ell\partial(\Lmi)}}
	\prod_{\alpha\in\cZ(\Li)}\left(1 - \frac{1}{q^{\ell}}\right)
	$$
	as in Equation~\eqref{ConditionProba}.
	
	Plugging the above two expressions in the decomposition from the law of total probability, similarly as done in the proof of Theorem~\ref{thm:main2}, we can upper bound 
	$
	\P_{\HYBd}(\cF)
    /\frac{q^{\ell\left(B + 1 + \htu\right)}}{(q - 1)q^{\ell\partial(\Lmi)}}
	$ 
	by
	\begin{equation*}
		\sum_{\Li|\Lmi}
		q^{(\ell + 1)\partial(\Li)}
		\prod_{\alpha \in\cZ(\Li)}\left(1 -\frac{1}{q^{\ell + \nu_{\alpha}(\Li)}}\right)
		\leq
		q^{(\ell + 1)\partial(\Lmi)}
		\prod_{\alpha\in\cZ(\Lmi)}
		\frac{1 - 1/q^{\ell + \nu_{\alpha}(\Lmi)}}{1 - 1/q^{\ell + 1}}.
	\end{equation*}
	Thus,
	\begin{align*}
		\P_{\HYBd}( \cF )
		&\leq
		\frac{q^{\ell(B +1 +\htu)}}{q - 1}  q^{\partial(\Lmi)}
		\prod_{\alpha\in\cZ(\Lmi)}
		\frac{1 - 1/q^{\ell + \nu_{\alpha}(\Lmi)}}{1 - 1/q^{\ell + 1}}\\
		&\leq 
		\frac{q^{\ell(B +1 +\htu)}}{q - 1}  q^{\hti}
		\prod_{\alpha\in\cZ(\Lmi)}
		\frac{1 - 1/q^{\ell + \nu_{\alpha}(\Lmi)}}{1 - 1/q^{\ell + 1}}
	\end{align*}
	and we conclude by using that 
	$
	q^{\ell(B +1 +\htu)}  q^{\hti}
	=
	q^{-(\ell+1)(\tmi - \hti)}
	$.
\end{proof}

\section{The case of poles}
\label{Sec:poles}
In this section we use the hybrid analysis technique presented above to extend our study of the decoding failure in a context where the hypothesis $\gcd(g,M) = 1$ of Definition~\ref{def:SRF} does not hold, thus some reductions in the encoding of $\bf/g$ may not be defined.
Evaluation points relative to undefined reductions are called \textit{poles}. 

We find two approaches in the literature to deal with poles: in~\cite{kaltofen2020hermite} an extra symbol $\infty$ is used, 
while in~\cite{guerrini2023simultaneous} coordinates are given by shifted Laurent series representations of the vector of rational functions.

In particular, the authors of~\cite{guerrini2023simultaneous} introduced the following multi-precision encoding composed of a valuation part and a reduction part: 
\begin{defi}[Multi-precision encoding]
	\label{def:MultiprecisionEncoding}
	Given a sequence of evaluation points $\alpha_1,\ldots,\alpha_n\in\F_q[x]$ along with associated multiplicities $\lambda_1,\ldots,\lambda_n\in \Zpos$, and a reduced vector of fractions $\bf/g\in\F_q(x)^\ell$,  we define its multi-precision encoding to be the sequence of couples $\Evi \left(\bf/g\right)\coloneq\left(\nu_{\alpha_j}(g),\cS_j(\bf/g)\right)_{1 \le j \le n}$ such that
	\begin{equation*}
		\cS_j(\bf/g)
		\coloneq
		\bf/\left(g/(x - \alpha_j)^{\nu_{\alpha_j}(g)}\right) \ 
		\bmod
		(x - \alpha_j)^{\lambda_j - \nu_{\alpha_j}(g)}.
	\end{equation*}
	By convention, we set $\cS_j(\bf/g) = \Bold{1}$ when $\nu_{\alpha_j}(g) = \lambda_j$.
\end{defi}

Under the hypothesis $L\ge d_f + d_g - 1$ the authors of~\cite{guerrini2023simultaneous} proved the injectivity of the above encoding.

\begin{prop}
	\label{prop:InjectiveEncodingpoles}
	Let $\bf/g,\bf' /g'\in\F_q(x)^{\ell}$ be reduced with $\max_i\{\partial(f_i)\}< d_f, \ \partial(g) <  d_g$ such that $\Evi \left(\bf/g\right) = \Evi \left(\bf'/g'\right)$. If we assume that $L\ge d_f + d_g - 1$, 
	the equality $\bf/g = \bf' /g'$ holds.
\end{prop}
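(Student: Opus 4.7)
The plan is to show, for each $i$ and each $j$, that the polynomial $f_i g' - f_i' g$ is divisible by $(x-\alpha_j)^{\lambda_j}$. Once this local divisibility is established at every $\alpha_j$, since the $(x-\alpha_j)^{\lambda_j}$ are pairwise coprime, their product $M$ of degree $L$ will divide $f_i g' - f_i' g$. A degree count then forces $f_i g' - f_i' g = 0$ and the reducedness hypothesis will let us conclude $\bf/g = \bf'/g'$.

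First I would fix $j$ and analyze the local condition. Write $\nu_j \coloneq \nu_{\alpha_j}(g)$, which equals $\nu_{\alpha_j}(g')$ by the equality of valuation parts in $\Evi$. There are two cases. If $\nu_j < \lambda_j$, then $h_j \coloneq g/(x-\alpha_j)^{\nu_j}$ and $h_j' \coloneq g'/(x-\alpha_j)^{\nu_j}$ are both coprime to $(x-\alpha_j)$, hence invertible modulo $(x-\alpha_j)^{\lambda_j - \nu_j}$, and the equality $\cS_j(\bf/g) = \cS_j(\bf'/g')$ unpacks as $\bf h_j' \equiv \bf' h_j \pmod{(x-\alpha_j)^{\lambda_j-\nu_j}}$. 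Multiplying through by $(x-\alpha_j)^{\nu_j}$ yields $(x-\alpha_j)^{\lambda_j} \mid \bf g' - \bf' g$. If instead $\nu_j = \lambda_j$, then $(x-\alpha_j)^{\lambda_j}$ divides both $g$ and $g'$, hence trivially divides each $f_i g' - f_i' g$.

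Next I would bound degrees. Since $\partial(f_i), \partial(f_i') < d_f$ and $\partial(g), \partial(g') < d_g$, we have $\partial(f_i g' - f_i' g) \leq d_f + d_g - 2 < L$ using the hypothesis $L \geq d_f + d_g - 1$. Combined with the local divisibility collected for $j=1,\ldots,n$, this forces $f_i g' - f_i' g = 0$ for every $i$.

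Finally, from $f_i g' = f_i' g$ for all $i$, the reducedness condition $\gcd(\gcd(\bf),g) = 1$ produces (via a Bezout identity $\sum_i a_i f_i + b g = 1$) the relation $g' = g \bigl(\sum_i a_i f_i' + b g'\bigr)$, so $g \mid g'$. By symmetry $g' \mid g$, hence $g' = c\, g$ for some $c \in \F_q^\times$ and then $f_i' = c\, f_i$, giving $\bf/g = \bf'/g'$. The main obstacle is really just the case split of Step~1, in particular handling the pole case $\nu_j = \lambda_j$ (where the convention $\cS_j = \Bold{1}$ carries no information and one must argue separately via divisibility of both $g$ and $g'$); the rest is a routine degree argument and a standard coprimality manipulation.
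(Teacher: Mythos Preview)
Your proof is correct and follows essentially the same route as the paper: establish $(x-\alpha_j)^{\lambda_j}\mid \bf g' - \bf' g$ for each $j$, combine via CRT and a degree bound to get $\bf g' = \bf' g$, and conclude. Your explicit case split for $\nu_j = \lambda_j$ and the final Bezout argument are more detailed than the paper (which passes over both points in one line), but the underlying argument is the same.
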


\begin{proof}
	For every $j=1,\ldots,n$ we let $\vr_j \coloneq\nu_{\alpha_j}(g) = \nu_{\alpha_j}(g')$. By hypothesis $\cS_j(\bf/g) = \cS_j(\bf'/g')$, i.e.
	$
	\bf/\left(g/(x - \alpha_j)^{\vr_j}\right)
	=
	\bf' /\left(g'/(x - \alpha_j)^{\vr_j}\right) \bmod (x - \alpha_j)^{\lambda_j - \vr_j}
	$, thus $
	\bf g'/(x - \alpha_j)^{\vr_j}
	=
	\bf' g/(x - \alpha_j)^{\vr_j} \bmod (x - \alpha_j)^{\lambda_j - \vr_j}.
	$
	In other words $\bf g' = \bf' g  \
	\bmod (x - \alpha_j)^{\lambda_j}$, which implies that $\bf g' = \bf' g \bmod M$. Since by hypothesis $\partial(\bf g' - \bf' g)\le d_f + d_g - 2 < L$, we conclude that $\bf g' = \bf' g$ in $\F_q(x)^{\ell}$.
\end{proof}

Under the hypothesis $L\ge d_f + d_g - 1$, we can then introduce the \textit{simultaneous rational function code with poles} as the set 
\begin{equation*}
	\SRF^{\infty}_{\ell}(M;d_f,d_g)
	\coloneq
	\left\{
	\Evi\left(\frac{\bf}{g}\right)
	:\
	\begin{array}{c}
		\partial(\bf)<d_f, \quad \partial(g)<d_g,\\
		\gcd(f_1,\ldots,f_\ell,g)=1\\
	\end{array}
	\right\}.
\end{equation*}
We will refer to it as the SRF code with poles.

Being composed of two parts, codewords $\Evi(\bf/g)$ can be affected by two kinds of errors (valuation and evaluation errors). 
Here we adapt the hybrid analysis of Section~\ref{sec:Hybrid Decoding}, with the factorization of the error locator $\Lambda = \Li\Lu$ reflecting these two types of errors (see Definition~\ref{def:errorsupportbad primes}).

\begin{defi}
	\label{def:ambientspecebadprimes}
	Let the \textit{ambient space of received words} be the quotient
	\begin{equation*}
		\SL
		\coloneq
		\left(\prod_{j=1}^{n}
		[0,\lambda_j]\times\left(
		\F_q[x]/
		(x - \alpha_j)^{\lambda_j}
		\right)^\ell\right)/\sim
	\end{equation*}
	where $\sim$ is the equivalence relation for which $(\vr_j,\br_j)_{1 \le j \le
		n}\ \sim\ (\vr'_j,\br'_j)_{1 \le j \le n}$ if and only if  for every
	$j=1,\ldots,n, \ (x - \alpha_j)^{\vr'_j}\br_j = (x - \alpha_j)^{\vr_j}\br'_j\ \bmod
	(x - \alpha_j)^{\lambda_j}$. We say that a representative $(\vr_j,\br_j)_{1 \le j \le
		n}$ is \emph{reduced} if $\gcd(\br_j, (x - \alpha_j)^{\vr_j}) = 1$ for every
	$j=1,\ldots,n$. Define $R_i \coloneq \CRT_M(r_{i,1}, \dots, r_{i,n})$ for
	every $i=1,\ldots,\ell$.
\end{defi}

In what follows we can
always assume that the received word $(\vr_j,\br_j)_{1 \le j \le n}$ is reduced, thanks to the following proposition:
\begin{prop}
	Any equivalence class contains a reduced representative.
\end{prop}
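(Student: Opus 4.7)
The plan is to produce a reduced representative by working one coordinate $j$ at a time: for each $j$, we cancel the common factor of $(x-\alpha_j)$ shared by $\br_j$ and the power $(x-\alpha_j)^{\vr_j}$, transferring it from $\br_j$ to the exponent. Fix a representative $(\vr_j,\br_j)_{1\le j\le n}$ of the given class, and for each $j$ set
\[
\mu_j \;\coloneq\; \min\bigl(\vr_j,\ \nu_{\alpha_j}(\br_j)\bigr),
\]
where $\nu_{\alpha_j}(\br_j)=\min_i \nu_{\alpha_j}(r_{i,j})$ as in Section~\ref{sec:NotationDefinitions}. Since $(x-\alpha_j)^{\mu_j}$ divides every component of $\br_j$ modulo $(x-\alpha_j)^{\lambda_j}$, the quotient $\br_j/(x-\alpha_j)^{\mu_j}$ is well-defined modulo $(x-\alpha_j)^{\lambda_j-\mu_j}$; pick an arbitrary lift $\br'_j\in(\F_q[x]/(x-\alpha_j)^{\lambda_j})^{\ell}$ and set $\vr'_j\coloneq \vr_j-\mu_j\in[0,\lambda_j]$.

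Next I would check the two required properties. For equivalence, write $\br_j\equiv (x-\alpha_j)^{\mu_j}\br'_j\bmod (x-\alpha_j)^{\lambda_j}$; multiplying by $(x-\alpha_j)^{\vr'_j}=(x-\alpha_j)^{\vr_j-\mu_j}$ gives
\[
(x-\alpha_j)^{\vr'_j}\br_j \;\equiv\; (x-\alpha_j)^{\vr_j}\br'_j \pmod{(x-\alpha_j)^{\lambda_j+\vr_j-\mu_j}},
\]
and since $\vr_j\ge\mu_j$, this congruence remains valid modulo $(x-\alpha_j)^{\lambda_j}$, so $(\vr'_j,\br'_j)_j\sim(\vr_j,\br_j)_j$. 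For reducedness, there are two cases: if $\mu_j=\vr_j$, then $\vr'_j=0$ and $(x-\alpha_j)^{\vr'_j}=1$, so $\gcd(\br'_j,(x-\alpha_j)^{\vr'_j})=1$ trivially; if instead $\mu_j=\nu_{\alpha_j}(\br_j)<\vr_j$, then by definition of $\mu_j$ the vector $\br'_j$ has some component with $(x-\alpha_j)$-valuation $0$, so $\gcd(\br'_j,(x-\alpha_j)^{\vr'_j})=1$ again.

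The argument is essentially routine; the only mild subtlety is ensuring that the ``division'' $\br_j/(x-\alpha_j)^{\mu_j}$ is compatible with the quotient structure and that the choice of lift does not affect membership in the equivalence class. Both are immediate from the definition of $\sim$ once one observes that two different lifts differ by a multiple of $(x-\alpha_j)^{\lambda_j-\mu_j}$, which when multiplied by $(x-\alpha_j)^{\vr_j}$ gives a multiple of $(x-\alpha_j)^{\lambda_j+\vr_j-\mu_j}$ and thus vanishes modulo $(x-\alpha_j)^{\lambda_j}$.
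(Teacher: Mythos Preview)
Your proof is correct and follows essentially the same approach as the paper: you define $\mu_j=\min(\vr_j,\nu_{\alpha_j}(\br_j))$, which is exactly the exponent $\eta_j$ determined by $(x-\alpha_j)^{\eta_j}=\gcd(\br_j,(x-\alpha_j)^{\vr_j})$ in the paper, and then pass to the representative $(\vr_j-\mu_j,\br_j/(x-\alpha_j)^{\mu_j})$. Your write-up is in fact more careful than the paper's, which simply declares the new representative ``clearly reduced'' without the case distinction or the discussion of lifts.
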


\begin{proof}
	Given any received word $(\vr_j,\br_j)_{1 \le j \le n}$,  for
	every $j=1,\ldots,n$ we let
	$(x - \alpha_j)^{\eta_j} \coloneq \gcd(\br_j,(x - \alpha_j)^{\vr_j}) $. Then $$(\vr_j,\br_j)_{1 \le j \le n}\sim\left(\vr_j -
	\eta_j, \br_j^{\lambda_j}/(x - \alpha_j)^{\eta_j} \ \bmod
	(x - \alpha_j)^{\lambda_j}\right)_{1 \le j \le n},$$ with the representative on the right-hand side
	clearly reduced by the definition of $(x - \alpha_j)^{\eta_j}$. 
\end{proof}

In the ambient space $\SL$ we identify received words which represent the same reduced vector of fractions in the sense that, by definition
\begin{itemize}
	\item $(\vr_j,\br_j)_{1 \le j \le n}\
	\sim\
	(\vr_j,\br'_j)_{1 \le j \le n}
	\ \Leftrightarrow
	\br_j = \br'_j \ \bmod (x - \alpha_j)^{\lambda_j - \vr_j}$.
	\item Given a received valuation $0\leq\vr_j\leq\lambda_j$ then for every $1\leq\delta_j\leq \lambda_j - \vr_j$
	\begin{equation*}
		(\vr_j,\br_j)_{1 \le j \le n}\
		\sim\
		(\vr_j + \delta_j,(x - \alpha_j)^{\delta_j}\br_j)_{1 \le j \le n}.
	\end{equation*}
\end{itemize}

\begin{remark}
	Thanks to the first of the above two points we can map the evaluation of a reduced vector of rationals $\Ev^\infty(\bf/g)$ into the space of received words.
\end{remark}

\begin{defi}
	\label{def:DistFuncbad primes}
	Given two elements $\bR_1 := (\vr_j,\br_j)_{1 \le j \le n}, \bR_2 :=
	(\vr'_j,\br'_j)_{1 \le j \le n}$ in $\SL$, we
	define the columns $\be_j$ of the relative error matrix $\bE_{\bR_1,\bR_2}$ as
	\begin{equation*}
		\be_j \coloneq
		(x - \alpha_j)^{\vr_j} \br'_j - (x - \alpha_j)^{\vr'_j} \br_j
		\bmod (x - \alpha_j)^{\lambda_j}.
	\end{equation*}   
	We let the relative error and truth locator be
	\begin{equation*}
		\Lambda_{\bR_1,\bR_2} \coloneq \prod_{j=1}^n(x - \alpha_j)^{\lambda_j - \nu_{\alpha_j}(\be_j)}, 
		\quad
		Y_{\bR_1,\bR_2} \coloneq \prod_{j=1}^n (x - \alpha_j)^{\nu_{\alpha_j}(\be_j)}
	\end{equation*}
	respectively, and the relative distance 
	$d\left(\bR_1,\bR_2\right) \coloneq \partial\left(\Lambda_{\bR_1,\bR_2} \right)$.
\end{defi}

\begin{remark}
	\label{rmk:errorsbad primes}
	Unlike the errors considered in Sections~\ref{Sec:SRF}
	and~\ref{sec:Hybrid Decoding}, in this case the usual relation
	$\bR_1 =\bR_2 + \bE$ does not hold.  
	For this reason the error models (see Subsection~\ref{subsect:Errormodels poles RF}) will be defined directly by distributions in the space of received words $\SL$.
\end{remark}    
In spite of the above remark, we note the consistency of the error $\be_j$ with the equivalence relation $\sim$, indeed by definition 
\begin{equation*}
	\be_j = \boldsymbol{0} \bmod (x - \alpha_j)^{\lambda_j} \quad 
	\forall j = 1,\ldots,n
	\ \ 
	\Leftrightarrow
	\ \
	(\vr_j,\br_j)_{1\le j \le n} \sim (\vr'_j,\br'_j)_{1\le j \le n}.
\end{equation*}

Due to the properties of $\sim$, we can partition the set of error positions into valuation and evaluation errors.
\begin{defi}
	\label{def:errorsupportbad primes}
	Given two evaluations 
	$
	(\vr_j,\br_j)_{1 \le j \le n}$, $(\vr'_j,\br'_j)_{1 \le j \le n} 
	\in \SL
	$ 
	satisfying
	$\gcd((x - \alpha_j)^{\vr_j},\br_j)=1$, 
	we divide the error support
	\begin{equation*}
		\xi=\{j \ | \ (x - \alpha_j)^{\vr_j} \br'_j \neq (x - \alpha_j)^{\vr'_j} \br_j \bmod (x - \alpha_j)^{\lambda_j}\}
		=
		\{j \ | \ (\vr_j,\br_j) \not\sim (\vr'_j,\br'_j)\}
	\end{equation*}
	into the \emph{valuation errors}
	$$\xi_v := \{j \ | \ \vr_j \neq \vr'_j \}$$
	and the remaining \emph{evaluation errors}
	$$\xi_e = \{j \ | \ (\vr_j = \vr'_j) \text{ and } (\br_j \neq \br'_j \bmod
	(x - \alpha_j)^{\lambda_j - \vr_j}) \}.$$
\end{defi}

We provide an equivalent, yet more practical, representation of the errors.

\begin{remark}
	\label{rem:rewriteErrVectorsbad primes}
	Given a codeword $\left(\nu_{\alpha_j}(g),\cS_j(\bf/g)\right)_{1 \le j \le n}$ (as in
	Definition~\ref{def:MultiprecisionEncoding}) and a received word $(\vr_j,\br_j)_{1 \le j \le n}\in
	\SL$,
	the sequence of error vectors $(\be_j)_{1 \le j \le n}$ is given by
	\begin{equation*}
		\be_j = (x - \alpha_j)^{\vr_j} \cS_j(\bf/g) - (x - \alpha_j)^{\nu_{\alpha_j}(g)}\br_j \ \bmod (x - \alpha_j)^{\lambda_j}.
	\end{equation*}
	Multiplying the above by the invertible element $g/(x - \alpha_j)^{\nu_{\alpha_j}(g)}$, we obtain that up to invertible transformations of the error
	sequence components (leaving the distance unchanged), we can equivalently view the sequence of
	error vectors as given by
	\begin{equation*}
		\widetilde{\be}_j \coloneq \frac{g}{(x - \alpha_j)^{\nu_{\alpha_j}(g)}} \be_j =(x - \alpha_j)^{\vr_j}\bf - g\br_j \ \bmod (x - \alpha_j)^{\lambda_j}.
	\end{equation*}
	
\end{remark}

\paragraph*{Study of potential errors and received words around a fixed codeword}
Due to Remark~\ref{rmk:errorsbad primes}, we need to study what kind of errors and received words we can obtain around a fixed vector of fractions $\bf/g$,
in particular with respect to the distinction between valuation and evaluation errors.
Regarding the error positions as long as $\xi_e, \xi_v \subset \{1, \dots, n\}$ and $\xi_e \cap \xi_v = \varnothing$ we have no constraints: all valuation (resp. evaluation) error supports $\xi_v$ (resp. $\xi_e$) are attained.
Once the error positions have been fixed and partitioned as $\xi_v \cup \xi_e$, the valuations of the error vectors need to satisfy $\mu_j = \nu_{\alpha_j}(\be_j) = \lambda_j$ for every position $j$ which is not erroneous, \ie $\forall j \notin \xi_e \cup \xi_v$.
Let us examine what can happen in the evaluation and valuation error cases respectively:
\begin{itemize}
	\item If $j\in \xi_e$, we have an evaluation error, thus any received word $\bR$ must satisfy $\vr_j = \nu_{\alpha_j}(g)$, furthermore we must have that the valuation of any error vector $\be_j$ must satisfy $\mu_j = \nu_{\alpha_j}((x - \alpha_j)^{\vr_j}\bf - g\br_j) \ge \nu_{\alpha_j}(g)$ thus, dividing by $(x - \alpha_j)^{\vr_j}$, we have that $\cS_j(\bf/g) - r_j$ can be any element of valuation 
	$\mu_j - \nu_{\alpha_j}(g)$.
	\item If $j\in\xi_v$, we have a valuation error, thus for every received word we have either 
	\begin{enumerate}
		\item $\vr_j<\nu_{\alpha_j}(g)$: in this case the valuation of the error vector and the received word must coincide, \ie $\mu_j = \vr_j$, and from the definition of $\widetilde{\be}_j$ we must have that $\widetilde{\be}_j = (x - \alpha_j)^{\mu_j}\bf \ \bmod (x - \alpha_j)^{\nu_{\alpha_j}(g)}$, regardless of the reduction part $\br_j$. Thus, in this case we do not have any constraints on $\br_j$.
		\item $\vr_j>\nu_{\alpha_j}(g)$: in this case the valuation of the error vector must coincide with the valuation of $g$, \ie $\mu_j = \nu_{\alpha_j}(g)$. Besides this valuation constraint, the error vectors can take any value, as well as the received reductions $\br_j$.
	\end{enumerate}
\end{itemize}

\paragraph*{Minimal distance}
As for the SRF code without poles (see Lemma~\ref{lm:minDistSRFcode}), also in this case we can prove the following lower bound on the minimal distance of the code.
\begin{lemma}
	\label{lm: MinDistSRFpoles} 
	We have $d\left(\SRF^{\infty}_\ell(M;d_f,d_g)\right) \geq L - d_f - d_g + 2$.
\end{lemma}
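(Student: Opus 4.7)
The plan is to mimic the structure of Lemma~\ref{lm:minDistSRFcode}: rewrite each $\be_j$ as a (unit) multiple of the polynomial $\bf g' - \bf' g$ modulo $(x-\alpha_j)^{\lambda_j}$, conclude that $Y_{\bC_1,\bC_2}$ divides this polynomial, and then bound its degree using the encoding parameters.

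Concretely, fix two distinct codewords $\bC_1 = \Evi(\bf/g)$ and $\bC_2=\Evi(\bf'/g')$, and for each index $j$ set $\vr_j\coloneq \nu_{\alpha_j}(g)$, $\vr_j'\coloneq \nu_{\alpha_j}(g')$. Write $\tilde g \coloneq g/(x-\alpha_j)^{\vr_j}$ and $\tilde g' \coloneq g'/(x-\alpha_j)^{\vr_j'}$, both of which are invertible at $\alpha_j$. Substituting $\br_j = \bf\tilde g^{-1}$ and $\br_j' = \bf'(\tilde g')^{-1}$ (modulo the appropriate power of $(x-\alpha_j)$) into the defining formula for $\be_j$ in Definition~\ref{def:DistFuncbad primes} and multiplying by the unit $\tilde g\tilde g'$, I would obtain
\begin{equation*}
\tilde g\tilde g'\,\be_j \;=\; \bf' g - \bf g' \pmod{(x-\alpha_j)^{\lambda_j}}.
\end{equation*}
Since $\tilde g\tilde g'$ is invertible modulo every power of $(x-\alpha_j)$, this gives $\nu_{\alpha_j}(\bf' g - \bf g' \bmod (x-\alpha_j)^{\lambda_j}) = \mu_j$, and in particular $(x-\alpha_j)^{\mu_j}$ divides $\bf' g - \bf g'$ as a polynomial (this is immediate when $\mu_j<\lambda_j$, and reduces to the congruence $(x-\alpha_j)^{\lambda_j}\mid \bf'g-\bf g'$ when $\mu_j=\lambda_j$).

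Collecting the local divisibilities over all $j$ yields $Y_{\bC_1,\bC_2}\mid \bf' g - \bf g'$ in $\F_q[x]$. The degree bounds $\partial(\bf),\partial(\bf')<d_f$ and $\partial(g),\partial(g')<d_g$ imply $\partial(\bf' g - \bf g')\le d_f + d_g - 2$, and the hypothesis $L\ge d_f+d_g-1$ together with the injectivity statement of Proposition~\ref{prop:InjectiveEncodingpoles} ensures $\bf'g-\bf g'\ne 0$ whenever $\bC_1\ne\bC_2$. Therefore $\partial(Y_{\bC_1,\bC_2})\le d_f+d_g-2$, and the distance equals $\partial(\Lambda_{\bC_1,\bC_2}) = L - \partial(Y_{\bC_1,\bC_2}) \ge L - d_f - d_g + 2$, as claimed.

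The only delicate point is the second paragraph: verifying that the modular identity $\tilde g\tilde g'\be_j \equiv \bf'g-\bf g'\pmod{(x-\alpha_j)^{\lambda_j}}$ really holds with the correct precision, given that $\br_j$ and $\br_j'$ are only defined modulo $(x-\alpha_j)^{\lambda_j-\vr_j}$ and $(x-\alpha_j)^{\lambda_j-\vr_j'}$ respectively, so that after multiplication by $(x-\alpha_j)^{\vr_j}$ or $(x-\alpha_j)^{\vr_j'}$ the products recover full precision $(x-\alpha_j)^{\lambda_j}$. Once this check is in place, everything else is a direct transposition of the Lemma~\ref{lm:minDistSRFcode} argument to the multi-precision encoding.
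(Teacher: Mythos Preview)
Your proof is correct and follows essentially the same route as the paper: multiply $\be_j$ by the unit $\tilde g\tilde g'$ to recognize $\bf'g-\bf g'\pmod{(x-\alpha_j)^{\lambda_j}}$, deduce $Y_{\bC_1,\bC_2}\mid\bf'g-\bf g'$, and bound degrees. One minor remark: to conclude $\bf'g-\bf g'\ne 0$ from $\bC_1\ne\bC_2$ you do not need injectivity (Proposition~\ref{prop:InjectiveEncodingpoles}) but only that $\Evi$ is a well-defined function on reduced fractions, since $\bf'g=\bf g'$ with both fractions reduced forces $\bf/g=\bf'/g'$ and hence $\bC_1=\bC_2$.
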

\begin{proof}
	Let $\bC_1 = (\nu_{\alpha_j}(g),\cS_j(\bf/g))_{1\le j \le n}, \bC_2 = (\nu_{\alpha_j}(g'),\cS_j(\bf'/g'))_{1\le j \le n}$ be two distinct codewords. 
	From 
	\begin{equation*}
		\be_j 
		=
		(x - \alpha_j)^{\nu_{\alpha_j}(g)}\left(\frac{\bf'}{g'/(x - \alpha_j)^{\nu_{\alpha_j}(g')}}\right)
		-
		(x - \alpha_j)^{\nu_{\alpha_j}(g')}\left(\frac{\bf}{g/(x - \alpha_j)^{\nu_{\alpha_j}(g)}}\right)
		\
		\bmod
		(x - \alpha_j)^{\lambda_j},
	\end{equation*}
	we see that
	\begin{equation*}
		\frac{g}{(x - \alpha_j)^{\nu_{\alpha_j}(g)}}\frac{g'}{(x - \alpha_j)^{\nu_{\alpha_j}(g')}}\be_j =  \bf' g - \bf g' \ \bmod (x - \alpha_j)^{\lambda_j}. 
	\end{equation*}
	Using $\Lambda \be_j = 0 \bmod (x - \alpha_j)^{\lambda_j}$ for all $j$, we obtain
	$$
	\forall 1 \le j \le n, \quad
	0 = \Lambda \frac{g}{(x - \alpha_j)^{\nu_{\alpha_j}(g)}}\frac{g'}{(x - \alpha_j)^{\nu_{\alpha_j}(g')}}\be_j = \Lambda (\bf' g - \bf g') \ \bmod (x - \alpha_j)^{\lambda_j}.
	$$
	Therefore, $M$ divides $\Lambda (\bf' g - \bf g')$, so $Y = M/\Lambda$ divides $(\bf' g - \bf g')$. 
	Hence, for all codewords $\bC_1 \neq \bC_2$, we bound $d(\bC_1,\bC_2) = \partial(\Lambda) = \partial(M/Y) > L - d_f - d_g + 1$, and we have proven the lemma.
    \end{proof}

Due to the different multiplicities among the evaluation points, we need to assume Constraint~\ref{cst:SubsetSumThresholdPoles} on the parameters of the code in order to prove that the lower bound $L - d_f - d_g + 2$ on the minimial distance of the code can be tight (see Lemma~\ref{lm: MinDistSRFpolesTight}).
\begin{constraint}
	\label{cst:SubsetSumThresholdPoles}
	Given the multiplicities $\left(\lambda_1,\ldots,\lambda_n\right)\in \Zpos^n$ and the degree bounds $d_f,d_g\in\Zpos$ such that $L\coloneq \sum_{j=1}^{n}\lambda_j>d_f + d_g - 2$, we assume that there exist two disjoint subsets $S_0,S_{\infty} \subseteq\{1,\ldots,n\}$ and two (not necessarily distinct) indexes $\eta,\gamma \notin S_0 \cup S_{\infty}$, such that:
	\begin{itemize}
		\item $d_f - 1 = \delta_0 + \sum_{j\in S_0}\lambda_j$ with $0 \leq \delta_0 < \lambda_\eta$
		\item $d_g - 1 = \delta_{\infty} + \sum_{j\in S_{\infty}}\lambda_j$ with $0 \leq \delta_{\infty} < \lambda_\gamma$
		\item $\delta_0, \delta_{\infty} > 0 \Rightarrow \eta \ne \gamma$
	\end{itemize}
\end{constraint}

The verification of this constraint represents a particular case of the \textit{multiple subset sum} problem, which is known to be NP-complete on general instances~\cite{caprara2000multiple}).
The proof of the next lemma is going to highlight the reasons behind the conditions of Constraint~\ref{cst:SubsetSumThresholdPoles}.
\begin{lemma}
	\label{lm: MinDistSRFpolesTight} 
    Assuming Constraint~\ref{cst:SubsetSumThresholdPoles}, we have $d\left(\SRF^{\infty}_\ell(M;d_f,d_g)\right) = L - d_f - d_g + 2$.
\end{lemma}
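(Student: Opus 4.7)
The plan is to complement the lower bound $d(\SRF^{\infty}_\ell(M;d_f,d_g)) \ge L - d_f - d_g + 2$ already provided by Lemma~\ref{lm: MinDistSRFpoles} with an explicit pair of codewords at distance exactly $L - d_f - d_g + 2$. Constraint~\ref{cst:SubsetSumThresholdPoles} is precisely what allows one to split the two ``degree budgets'' $d_f-1$ and $d_g-1$ into compatible partial multiplicities at the evaluation points, which is otherwise non-trivial since each local contribution is capped by $\lambda_j$.

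Using the data provided by the constraint, I would first set
\[
Y_0 \coloneq (x-\alpha_\eta)^{\delta_0}\prod_{j\in S_0}(x-\alpha_j)^{\lambda_j},
\qquad
Y_\infty \coloneq (x-\alpha_\gamma)^{\delta_\infty}\prod_{j\in S_\infty}(x-\alpha_j)^{\lambda_j},
\]
so that $\partial(Y_0)=d_f-1$ and $\partial(Y_\infty)=d_g-1$. The disjointness of $S_0$ and $S_\infty$, together with the clause $\eta\neq\gamma$ whenever $\delta_0,\delta_\infty>0$, guarantees $\gcd(Y_0,Y_\infty)=1$. Then, assuming $\ell\ge 2$, I would consider the two reduced vectors of fractions $\bf/g$ with $\bf=(Y_0, 0,\dots,0)$, $g=Y_\infty$, and $\bf'/g'$ with $\bf'=(0, Y_0, 0,\dots,0)$, $g'=Y_\infty$. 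The coprimality of $Y_0$ and $Y_\infty$ ensures both satisfy $\gcd(f_1,\dots,f_\ell,g)=1$; the prescribed degree bounds are met; and by injectivity of the multi-precision encoding (Proposition~\ref{prop:InjectiveEncodingpoles}) the resulting codewords $\bC_1,\bC_2\in\SRF^{\infty}_\ell(M;d_f,d_g)$ are distinct.

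To evaluate $d(\bC_1,\bC_2)$, I would reuse the identity from the proof of Lemma~\ref{lm: MinDistSRFpoles}: for every $j$ the error vector $\be_j$ differs from $\bf'g-\bf g'$ by a unit modulo $(x-\alpha_j)^{\lambda_j}$, so $\mu_j \coloneq \nu_{\alpha_j}(\be_j) = \min\{\nu_{\alpha_j}(\bf'g-\bf g'),\lambda_j\}$. For the above construction $\bf'g-\bf g'=(-Y_0Y_\infty,\,Y_0Y_\infty,\,0,\dots,0)$, hence $\nu_{\alpha_j}(\bf'g-\bf g')=\nu_{\alpha_j}(Y_0Y_\infty)$. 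A short case analysis over $j\in S_0$, $j\in S_\infty$, $j\in\{\eta,\gamma\}$, and the remaining indices shows that $\nu_{\alpha_j}(Y_0Y_\infty)\le\lambda_j$ in every case, so $\mu_j=\nu_{\alpha_j}(Y_0Y_\infty)$. Summing gives $\partial(Y)=\partial(Y_0Y_\infty)=(d_f-1)+(d_g-1)$, hence $d(\bC_1,\bC_2)=L-\partial(Y)=L-d_f-d_g+2$, matching the lower bound.

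The main subtlety I expect is in that case check: one must verify that the bounds $\delta_0<\lambda_\eta$, $\delta_\infty<\lambda_\gamma$ and the condition $\eta\neq\gamma$ whenever $\delta_0,\delta_\infty>0$ are precisely what prevents $\nu_{\alpha_\eta}(Y_0Y_\infty)$ or $\nu_{\alpha_\gamma}(Y_0Y_\infty)$ from exceeding $\lambda_\eta$ or $\lambda_\gamma$, which would truncate $\mu_j$ and destroy the count. A minor side point is that the two-coordinate construction above requires $\ell\ge 2$; for $\ell=1$, the same analysis applies to the pair $c_1Y_0/Y_\infty$, $c_2Y_0/Y_\infty$ with distinct $c_1,c_2\in\F_q\setminus\{0\}$, available as soon as $|\F_q|\ge 3$.
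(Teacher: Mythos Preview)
Your proposal is correct and follows essentially the same construction as the paper: build coprime polynomials $Y_0,Y_\infty$ of degrees $d_f-1,d_g-1$ from the data of Constraint~\ref{cst:SubsetSumThresholdPoles}, then exhibit two codewords whose cross-difference $\bf'g-\bf g'$ is a unit times $Y_0Y_\infty$ so that $\partial(Y)=d_f+d_g-2$. The only cosmetic difference is that the paper uses the scalar pair $\bf=f_1\Bold{1}$, $\bf'=\beta f_1\Bold{1}$ for all $\ell$ (hence implicitly $q\ge 3$), whereas you use a coordinate shift for $\ell\ge 2$ and fall back to the scalar construction for $\ell=1$; your exact computation of each $\mu_j$ via the unit-factor identity is slightly cleaner than the paper's coordinate-by-coordinate lower bound on $\partial(Y)$, but the argument is the same.
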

\begin{proof}
	We already proved that the minimal distance is lower bounded by the quantity: $L - d_f - d_g + 2$.
	Thanks to Constraint~\ref{cst:SubsetSumThresholdPoles} we can define the relatively prime polynomials
	\begin{align*}
		f_1 \coloneq \left(x - \alpha_\eta\right)^{\delta_0}\prod_{j\in S_0}(x - \alpha_j)^{\lambda_j}, \quad
		g \coloneq \left(x - \alpha_\gamma\right)^{\delta_{\infty}}\prod_{j\in S_{\infty}}(x - \alpha_j)^{\lambda_j},
	\end{align*}
    whose degrees are respectively $d_f - 1$ and $d_g - 1$ thanks to the first two points of Constraint~\ref{cst:SubsetSumThresholdPoles}. The third point of Constraint~\ref{cst:SubsetSumThresholdPoles} (together with $S_0 \cap S_{\infty} = \emptyset$) ensure that $\gcd(f_1,g)= 1$.
	
	For $\beta\in\F_q\setminus\{0,1\}$, we let  $f_2 \coloneq \beta f_1$ and we define the two codewords $\bC_i \coloneq \Evi\left(f_i\Bold{1}/g\right) = \left(\nu_{\alpha_j}(g),C_{i,j}\right)_{1\leq j \leq n} $ for $i=1,2$, where $\Bold{1}=\left(1,\ldots,1\right)\in\F_q^\ell$.
    We note that for every $j\in S_0\cup S_{\infty}$ the following holds 
    \begin{equation*}
        (\nu_{\alpha_j}(g),C_{1,j})
        =(\nu_{\alpha_j}(g),C_{2,j})
        =
        \begin{cases}
            (0,\Bold{0}) & \text{ if } j \in S_0\\
            (\lambda_j,\Bold{1}) & \text{ if } j \in S_{\infty}
        \end{cases}
    \end{equation*}
    thus $\nu_{\alpha_j}(Y_{\bC_1,\bC_2}) = \lambda_j$ for every $j\in S_0 \cup S_{\infty}$.
    
    Furthermore we see that $\nu_{\alpha_{\eta}}(g) = 0$ and $C_{1,\eta} = C_{2,\eta} = \Bold{0} \bmod (x - \alpha_\eta)^{\delta_0}$, from which it follows that $\nu_{\alpha_{\eta}}(Y_{\bC_1,\bC_2}) \geq \delta_0$. Similarly, we have that $\nu_{\alpha_{\gamma}}\left(Y_{\bC_1,\bC_2}\right) \ge \delta_{\infty}$.
    
	We notice that 
    \begin{align*}
        &\frac{\be_{\eta}}{(x - \alpha_{\eta})^{\delta_0}}
        =
        \frac{\prod_{j\in S_0}(x - \alpha_j)^{\lambda_j}}{\left(x - \alpha_\gamma\right)^{\delta_{\infty}}\prod_{j\in S_{\infty}}(x - \alpha_j)^{\lambda_j}}
        (1 - \beta)\ne \Bold{0}
        \ \bmod \left(x - \alpha_\eta\right)^{\lambda_\eta - \delta_0}\\
        &\frac{\be_{\gamma}}{(x - \alpha_{\gamma})^{\delta_\infty}}
        =
        \frac{\left(x - \alpha_\eta\right)^{\delta_0}\prod_{j\in S_0}(x - \alpha_j)^{\lambda_j}}{\prod_{j\in S_{\infty}}(x - \alpha_j)^{\lambda_j}}
        (1 - \beta)\ne \Bold{0}
        \ \bmod \left(x - \alpha_\eta\right)^{\lambda_\gamma - \delta_{\infty}}
    \end{align*}
    from which it follows that $\bC_1\ne \bC_2$.                            
    
    Thus
	\begin{equation*}
		d_f + d_g - 2 = \left(\sum_{j\in S_0 \cup S_{\infty}}
		\lambda_j\right) + \delta_0 + \delta_{\infty} \leq \partial(Y_{\bC_1,\bC_2}),
	\end{equation*}
	and we conclude that $d\left(\bC_1,\bC_2\right) = L - \partial(Y_{\bC_1,\bC_2}) \leq L - d_f - d_g + 2$
\end{proof}

\begin{remark}
	Constraint~\ref{cst:SubsetSumThresholdPoles} takes inspiration from the proof of Lemma~\ref{lm: MinDistSRFpolesTight} in absence of multiplicities ($\lambda_j = 1$ for every $j=1,\ldots,n$)~\cite[Theorem 2.3.1]{pernet2014high}, in which case it  is a natural consequence of the hypothesis $L>d_f + d_g - 2$, by taking $S_0 = \{1,2,\ldots,d_f - 1\}, S_{\infty} = \{d_f + 1, d_f + 2, \ldots, d_f + d_g - 1\}$, $\delta_0 = \delta_{\infty} = 0$ and $\eta = \gamma = d_f$.
\end{remark}

\begin{coro}
	Given the parameters of the code $\Bold{\lambda} \coloneq \left(\lambda_1,\ldots,\lambda_n\right)$, $d_f,d_g$ satisfying the injectivity condition of Proposition~\ref{prop:InjectiveEncodingpoles}, \ie $L \geq d_f + d_g - 1$. If $\Bold{\lambda} = \lambda\Bold{1}$ for some $0<\lambda<\min\{d_f,d_g\}$, then Constraint~\ref{cst:SubsetSumThresholdPoles} is satisfied.
\end{coro}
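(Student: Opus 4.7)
The plan is to construct the required $S_0, S_\infty, \eta, \gamma$ explicitly via Euclidean divisions
\begin{equation*}
d_f - 1 = q_0 \lambda + \delta_0, \qquad d_g - 1 = q_\infty \lambda + \delta_\infty, \qquad 0 \le \delta_0, \delta_\infty < \lambda,
\end{equation*}
and then verify the counting via the hypothesis $L = n\lambda \ge d_f + d_g - 1$. Since $\lambda_j = \lambda$ is uniform, the first two bullets of Constraint~\ref{cst:SubsetSumThresholdPoles} are satisfied by \emph{any} pair of disjoint subsets $S_0, S_\infty \subseteq \{1, \dots, n\}$ of cardinalities $|S_0| = q_0$ and $|S_\infty| = q_\infty$, with the bounds $\delta_0 < \lambda = \lambda_\eta$ and $\delta_\infty < \lambda = \lambda_\gamma$ being automatic for any choice of $\eta$ and $\gamma$.

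What remains is to argue that $n$ is large enough to choose such subsets together with the required $\eta, \gamma \notin S_0 \cup S_\infty$, where $\eta \ne \gamma$ precisely when both $\delta_0, \delta_\infty > 0$. The hypothesis $L \ge d_f + d_g - 1$ rewrites as
\begin{equation*}
(n - q_0 - q_\infty)\lambda \ge \delta_0 + \delta_\infty + 1,
\end{equation*}
which I exploit by case analysis on whether $\delta_0$ and $\delta_\infty$ vanish. Note that $\lambda < \min\{d_f, d_g\}$ ensures $q_0, q_\infty \ge 1$, so the subsets are well-defined. When at least one of $\delta_0, \delta_\infty$ is zero, the right-hand side is strictly positive, forcing $n - q_0 - q_\infty \ge 1$, which is enough since in this regime the third bullet permits the choice $\eta = \gamma$.

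The main obstacle lies in the case where both $\delta_0 > 0$ and $\delta_\infty > 0$: here one must produce \emph{two distinct} indices outside $S_0 \cup S_\infty$, i.e., certify $n - q_0 - q_\infty \ge 2$. The argument here must combine the inequality above with the strict bounds $\delta_0, \delta_\infty \le \lambda - 1$ and the uniform-multiplicity hypothesis to rule out the borderline configuration $n - q_0 - q_\infty = 1$; concretely, one shows that $\delta_0 + \delta_\infty + 1 > \lambda$ under these hypotheses, so that $(n - q_0 - q_\infty)\lambda > \lambda$ gives the needed integer bound. This last inequality is the only nontrivial step; once established, assembling the quadruple $(S_0, S_\infty, \eta, \gamma)$ is immediate and Constraint~\ref{cst:SubsetSumThresholdPoles} is satisfied.
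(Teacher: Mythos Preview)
Your approach via Euclidean division mirrors the paper's, and you are in fact more careful than the paper: the paper's proof stops at $n > q_0 + q_\infty$ and simply asserts that Constraint~\ref{cst:SubsetSumThresholdPoles} holds, without discussing the third bullet (the requirement $\eta \ne \gamma$ when both remainders are positive). You explicitly identify this as the only nontrivial point.

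However, your resolution of that case contains a genuine error: the inequality $\delta_0 + \delta_\infty + 1 > \lambda$ does \emph{not} follow from $\delta_0, \delta_\infty \in \{1,\dots,\lambda-1\}$ together with the uniform-multiplicity hypothesis. For instance $\delta_0 = \delta_\infty = 1$ and $\lambda = 10$ gives $3 \not> 10$. Worse, this is not a repairable slip. Take $n = 3$, $\lambda = 10$, $d_f = d_g = 12$: then $L = 30 \ge 23 = d_f + d_g - 1$ and $0 < 10 < 12$, so all hypotheses of the corollary are met; yet $q_0 = q_\infty = 1$ and $\delta_0 = \delta_\infty = 1$, leaving exactly one index outside any admissible $S_0 \cup S_\infty$, while the third bullet demands two distinct indices $\eta \ne \gamma$. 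Since all $\lambda_j$ are equal, the bounds $0 \le \delta_0, \delta_\infty < \lambda$ force the Euclidean decomposition and there is no alternative choice of $S_0, S_\infty$. Thus the statement as written appears to be false; your proposal correctly isolates the obstruction but then disposes of it with an inequality that does not hold.
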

\begin{proof}
	Since $0<\lambda<\min\{d_f,d_g\}$, we can write by Euclidean divisions
	\begin{align*}
		d_f - 1 = q_0 \lambda + \delta_0,& \quad 0\leq\delta_0<\lambda\\
		d_g - 1 = q_{\infty} \lambda + \delta_{\infty},& \quad 0\leq\delta_{\infty}<\lambda.
	\end{align*}
	Substituting the above in the hypothesis $L = \lambda n \geq d_f + d_g - 1$ we obtain:
	\begin{equation*}
		n \geq q_0 + q_{\infty} + \frac{\delta_0 + \delta_{\infty} + 1}{\lambda}> q_0 + q_{\infty},
	\end{equation*} 
	thus we can let $S_0, S_{\infty}$ be any two disjoint subsets with $\# S_0 = q_0, \# S_{\infty} = q_{\infty}$ and Constraint~\ref{cst:SubsetSumThresholdPoles} holds.  
\end{proof}

\subsection{Key equations}
As for the code of Section~\ref{Sec:SRF}, the decoding of SRF codes with poles is based on the resolution of a system of linear \textit{key equations}.
Thanks to Remark~\ref{rem:rewriteErrVectorsbad primes} and the definition of $\Lambda$, we have that 
$\Lambda \be_j = 0 \bmod (x - \alpha_j)^{\lambda_j}$, and so 
$0 = \Lambda \widetilde{\be}_j = (x - \alpha_j)^{\vr_j} \Lambda \bf - \Lambda g\br_j \bmod (x - \alpha_j)^{\lambda_j}$.
Thus, for every couple of received word $\left(\vr_j,\br_j\right)_{1 \le j \le n}$ and reduced vector of
fractions $\bf/g$ the equation 
$
\CRT_M\left((x - \alpha_j)^{\vr_j}\right) \Lambda f_i
=
\Lambda g R_i \ \bmod M
$
holds for every $i=1,\ldots,\ell$. By defining the new variables $\varphi \coloneq \Lambda g$,
$\Bold{\psi} = \Lambda \bf$ we get the \textit{key equations} in presence of poles: 
\begin{equation}
	\label{keyEqRF poles}
	\forall i=1,\ldots,\ell, 
	\quad
	\CRT_M\left((x - \alpha_j)^{\vr_j}\right) \psi_i
	=
	\varphi R_i
	\
	\bmod M.
\end{equation} 
For some distance parameter $t$, we let the set of solutions be
\begin{align*}
	\SRinf
	\coloneq
	\left\{\left(\varphi,\Bold{\psi}\right)\in\F_q[x]^{\ell+1}:
	\begin{array}{c}
		\CRT_M\left((x - \alpha_j)^{\vr_j}\right) \psi_i
		=
		\varphi R_i
		\
		\bmod M,
		\quad\forall i
		\\
		\partial(\varphi) < d_g + t,
		\
		\partial(\Bold{\psi}) < d_f + t
	\end{array}
	\right\}.
\end{align*} 

If $\partial(\Lambda)\leq t$ we see that $v_{\bC}\coloneq(\Lambda g,\Lambda\bf)\in\SRinf$.

\paragraph*{Reduced key equations}
It is possible to give an equivalent description of the solutions in $\SRinf$, whose degree bounds are smaller.
Letting $M_{\infty}:=\prod_{j=1}^n (x - \alpha_j)^{\vr_j}$ we note that, thanks to Equation~\eqref{keyEqRF poles}, $M_{\infty}|\varphi$ since $M_{\infty}|\CRT_M\left((x - \alpha_j)^{\vr_j}\right)$, $M_{\infty}|M$ and by hypothesis $\gcd(M_{\infty},R_i) = 1$ as received words are assumed to be reduced. Thus, we can rewrite Equation~\eqref{keyEqRF poles} in the following form, which we call \textit{reduced key equations}
\begin{equation}
	\label{keyEqRF polesReduit}
	\forall i=1,\ldots,\ell, 
	\quad
	\begin{array}{c}
		\psi_i
		=
		\varphi' R_i' \bmod \frac{M}{M_{\infty}} \\
		\partial(\varphi')< d_g + t - \partial(M_{\infty}) \text{ and } \partial(\Bold{\psi})< d_f + t
	\end{array}
\end{equation}
where $\varphi' \coloneq \varphi/M_{\infty}$ and $R_i' \coloneq R_i \CRT_{M/M_{\infty}}\left(\frac{M_{\infty}}{(x - \alpha_j)^{\vr_j}}\right)$.

\subsection[Decoding SRF codes with poles]{Decoding $\SRF^\infty_\ell$ codes}
\label{subsect:DecodingSRNcodesbad primes}
In this section we give our decoding algorithm for $\SRF$ codes with poles, which performs the same steps as Algorithm~\ref{algoSRF} only on a different system of linear equations. 

\begin{algorithm}[H]
	\caption{$\SRF^\infty_\ell$ codes decoder.}
	\label{algoSRFpoles}
	\SetAlgoLined \SetKw{KwBy}{par} \KwIn{$\SRF^{\infty}_\ell(M;d_f,d_g)$, received word $\bR:=(\vr_j,\br_j)_{1\le j\le n}$, distance
		bound $t$} \KwOut{A reduced vector of fractions $\bPsi'/\varphi'$ s.t. $d(\Evi(\bPsi'/\varphi'),\bR) \leq t$ or ``decoding failure''}
	
	\vspace{5pt}

    Compute $\Bold{0}\ne v_s \coloneq \left(\varphi,\psi_1,\ldots,\psi_{\ell}\right)\in\SRinf$, s.t. $\max\{\partial(\varphi),\partial(\Bold{\psi})\}$ is minimal.\\
	Let $\eta \coloneq \gcd(\varphi,\psi_{1},\dots,\psi_{\ell})$, $\varphi'\coloneq\varphi/\eta$ and
	$\forall i, \ \psi'_i\coloneq\psi_i/\eta$ \label{step:lambdabad primes}\\
	\If{$\partial(\eta) \le t$, $\partial(\varphi') < d_g$ and $\forall i, \ \partial(\psi_i')< d_f$}
	{\textbf{return} $(\psi_1'/\varphi', \dots, \psi_\ell'/\varphi')$}
	\lElse{ \textbf{return} "decoding failure"}
\end{algorithm}
\begin{lemma}
	\label{lm:succeedsImpliesCorrectbad primes} 
	If Algorithm~\ref{algoSRFpoles} returns $\bPsi'/\varphi'$ on input $\bR$ and
	parameter $t$, then $\bPsi'/\varphi'$ is associated to a code word of
	$\SRF^{\infty}_{\ell}(M;d_f,d_g)$ close to $\bR$, \ie it is a reduced vector of fractions with
	$\partial(\bPsi')<d_f$, $\partial(\varphi')<d_g$ and
	\mbox{$\dist(\Ev^\infty(\bPsi'/\varphi'),\bR) \le t$}.
\end{lemma}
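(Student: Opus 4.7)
The statement bundles three claims: $\bPsi'/\varphi'$ is a reduced vector of fractions, the degree bounds $\partial(\bPsi') < d_f$ and $\partial(\varphi') < d_g$ hold, and $\dist(\Ev^\infty(\bPsi'/\varphi'), \bR) \le t$. The first two are essentially free. Reducedness follows by construction from step~\ref{step:lambdabad primes}: since $\eta = \gcd(\varphi, \psi_1, \ldots, \psi_\ell)$, after dividing through we have $\gcd(\varphi', \psi_1', \ldots, \psi_\ell') = 1$, which is exactly the reducedness condition appearing in the definition of $\SRF^{\infty}_\ell(M;d_f,d_g)$. The degree inequalities are precisely the conditions the algorithm tests in its \textbf{If} before returning, so they hold whenever the algorithm returns a vector rather than ``decoding failure''.

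The substantive work is the distance bound. My plan is to start from the fact that the minimal solution $v_s = (\varphi, \bPsi) \in \SRinf$ satisfies the key equations~\eqref{keyEqRF poles}: for all $i,j$, one has $(x - \alpha_j)^{\vr_j} \psi_i \equiv \varphi R_i \pmod{(x - \alpha_j)^{\lambda_j}}$, with $R_i \equiv r_{j,i} \pmod{(x - \alpha_j)^{\lambda_j}}$. Substituting $\psi_i = \eta \psi_i'$ and $\varphi = \eta \varphi'$ yields
$$\eta \cdot \left[(x - \alpha_j)^{\vr_j} \psi_i' - \varphi' r_{j,i}\right] \equiv 0 \pmod{(x - \alpha_j)^{\lambda_j}}.$$
By Remark~\ref{rem:rewriteErrVectorsbad primes}, the bracketed expression is, up to multiplication by the unit $\varphi'/(x-\alpha_j)^{\nu_{\alpha_j}(\varphi')}$ modulo $(x-\alpha_j)^{\lambda_j}$, the $i$-th component of the error vector $\be_j$ between $\Ev^\infty(\bPsi'/\varphi')$ and $\bR$. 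Since multiplication by a unit preserves valuation, the above congruence gives $\nu_{\alpha_j}(\be_{j,i}) \ge \lambda_j - \min(\lambda_j, \nu_{\alpha_j}(\eta))$ for every $i$, hence $\nu_{\alpha_j}(\be_j) \ge \lambda_j - \min(\lambda_j, \nu_{\alpha_j}(\eta))$.

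From this I can conclude: the error locator $\Lambda$ between $\Ev^\infty(\bPsi'/\varphi')$ and $\bR$ satisfies $\nu_{\alpha_j}(\Lambda) = \lambda_j - \nu_{\alpha_j}(\be_j) \le \min(\lambda_j, \nu_{\alpha_j}(\eta)) \le \nu_{\alpha_j}(\eta)$, so summing gives
$$\dist(\Ev^\infty(\bPsi'/\varphi'), \bR) = \partial(\Lambda) \le \sum_{j=1}^n \nu_{\alpha_j}(\eta) \le \partial(\eta) \le t,$$
using the last inequality from the algorithm's check. The only mild subtleties are the bookkeeping at indices where $\nu_{\alpha_j}(\eta) \ge \lambda_j$ (there the congruence degenerates, but the trivial bound $\nu_{\alpha_j}(\Lambda) \le \lambda_j$ suffices) and the translation between the two equivalent error-vector descriptions of Definition~\ref{def:DistFuncbad primes} and Remark~\ref{rem:rewriteErrVectorsbad primes}; aside from that the argument is a direct consequence of the key equations and the construction of $\eta$.
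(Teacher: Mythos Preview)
Your proof is correct and follows essentially the same approach as the paper: both substitute $(\varphi,\bPsi)=(\eta\varphi',\eta\bPsi')$ into the key equations to obtain $\eta\bigl[(x-\alpha_j)^{\vr_j}\bPsi'-\varphi'\br_j\bigr]\equiv 0\pmod{(x-\alpha_j)^{\lambda_j}}$, identify the bracket with the transformed error $\widetilde{\be}_j$ of Remark~\ref{rem:rewriteErrVectorsbad primes}, and conclude $\nu_{\alpha_j}(\Lambda)\le\nu_{\alpha_j}(\eta)$, hence $\partial(\Lambda)\le\partial(\eta)\le t$. The paper phrases the conclusion as $\Lambda\mid\eta$ rather than summing valuations, and omits the explicit remark on reducedness that you include, but the argument is the same.
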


\begin{proof}
	The output vector $\bPsi/\varphi$ is associated to a code word of
	$\SRF^{\infty}_{\ell}(M;d_f,d_g)$ since the algorithm has verified the degree conditions $\partial(\varphi')<d_g$,
	$\partial(\bPsi')<d_f$.
	Now, we use that $(\varphi,\bPsi)=(\eta \varphi',\eta\bPsi')\in\SRinf$, so that for every $i =1,\ldots,\ell$ the equation $
    \eta\left(\CRT_M((x - \alpha_j)^{\vr_j}) \bPsi' - \varphi' R_i\right) = 0 \bmod M
    $ holds,
     which implies that $\nu_{\alpha_j}(\eta)\geq \lambda_j - \mu_j =
	\nu_{\alpha_j}(\Lambda)$ with $\Lambda$ being the error locator between $\Evi (\bPsi/\varphi)$ and the input $\bR$. Thus, $\Lambda|\eta$, and we can conclude that
	$d(\Ev^\infty(\bPsi'/\varphi'),\bR) = \partial(\Lambda) \le \partial(\eta) \le t$.
\end{proof}

Also in this setting (with the same proof idea), we can state an equivalent of Lemma~\ref{lm:algofailureconditionPOLY}, ensuring that for the random distributions specified in Subsection~\ref{subsect:Errormodels poles RF}, the failure probability of Algorithm~\ref{algoSRFpoles} can be upper bounded by $\P(\SRinf\not\subseteq v_{\bC}\F_q[x])$.  

\subsection{Unique decoding}

We notice that Algorithm~\ref{algoSRFpoles} it is always correct in decoding SRF codes with poles whenever the distance parameter $t\le \lf\frac{L - d_f - d_g + 1}{2} \rf$, \ie it is below unique decoding capacity.

\begin{prop}
	\label{Unicity}
    Given the SRF code with poles and a received word $\bR = (\vr_j,\br_j)_{1 \le j \le n}$, we suppose that the distance parameter input of Algorithm~\ref{algoSRFpoles} satisfies $t \leq  \lf\frac{L - d_f - d_g + 1}{2} \rf$. Furthermore, suppose there exists a reduced vector of rational functions $\bf/g\in\F_q(x)^{\ell}$ with $\partial(\bf)< d_f$ and $\partial(g)< d_g$ such that $d(\Evi(\bf/g), (\vr_j,\br_j)_{1 \le j \le n}) \le t$, then $\SRinf \subset v_{\bC} \F_q[x]$.
\end{prop}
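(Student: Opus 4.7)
The plan is to show that every element $v_s = (\varphi, \bPsi) \in \SRinf$ is a polynomial multiple of $v_{\bC} = (\Lambda g, \Lambda \bf)$ (which belongs to $\SRinf$ because $\partial(\Lambda) \le t$). The main tool will be the \emph{reduced} key equations~\eqref{keyEqRF polesReduit} combined with a degree argument exploiting that $t \le \lf (L - d_f - d_g + 1)/2 \rf$ lies strictly below half the minimum distance $L - d_f - d_g + 2$ of $\SRF^{\infty}_\ell(M; d_f, d_g)$ (Lemma~\ref{lm: MinDistSRFpoles}).

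First, I would observe that reducing the key equations of $v_s$ modulo $(x-\alpha_j)^{\vr_j}$ and using the reducedness condition $\gcd(\br_j, (x-\alpha_j)^{\vr_j}) = 1$ forces $M_\infty \coloneq \prod_j (x-\alpha_j)^{\vr_j}$ to divide $\varphi$. The same argument applied to $v_{\bC}$ yields $M_\infty \mid \Lambda g$. Setting $\varphi' \coloneq \varphi/M_\infty$ and $\tilde g \coloneq \Lambda g/M_\infty$, both solutions then satisfy the reduced key equations $\psi_i \equiv \varphi' R_i' \pmod{M/M_\infty}$ and $\Lambda f_i \equiv \tilde g R_i' \pmod{M/M_\infty}$ for every $i$. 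Cross-multiplying, I obtain
\[
	\psi_i \tilde g \equiv \varphi' \Lambda f_i \pmod{M/M_\infty}, \qquad i = 1, \ldots, \ell.
\]

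The crucial step is the degree comparison. Using $\partial(\psi_i), \partial(\Lambda f_i) < d_f + t$ and $\partial(\varphi'), \partial(\tilde g) < d_g + t - \partial(M_\infty)$, one bounds $\partial(\psi_i \tilde g - \varphi' \Lambda f_i) \le d_f + d_g + 2t - \partial(M_\infty) - 2$, whereas $\partial(M/M_\infty) = L - \partial(M_\infty)$. The hypothesis on $t$ gives $2t \le L - d_f - d_g + 1 < L - d_f - d_g + 2$, so the degree of the difference is strictly less than $\partial(M/M_\infty)$, forcing $\psi_i \tilde g = \varphi' \Lambda f_i$ in $\F_q[x]$; multiplying back by $M_\infty$ yields $g \psi_i = \varphi f_i$ for every $i$. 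Since $\bf/g$ is reduced, $\gcd(\bf, g) = 1$ forces $g \mid \varphi$: writing $\varphi = \kappa g$ gives $\bPsi = \kappa \bf$ and hence $v_s = \kappa (g, \bf)$.

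To conclude $v_s \in v_{\bC}\F_q[x]$, it remains to show $\Lambda \mid \kappa$ (so that $v_s = (\kappa/\Lambda) v_{\bC}$). Substituting $\varphi = \kappa g$ and $\bPsi = \kappa \bf$ into the key equations at each position $j$, and invoking the representation of Remark~\ref{rem:rewriteErrVectorsbad primes}, the conditions become $\kappa \widetilde{\be}_j \equiv \boldsymbol{0} \pmod{(x-\alpha_j)^{\lambda_j}}$, which imply $\nu_{\alpha_j}(\kappa) \ge \lambda_j - \nu_{\alpha_j}(\widetilde{\be}_j) = \nu_{\alpha_j}(\Lambda)$ for every $j$, as needed. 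The main obstacle is the degree bookkeeping in the cross-multiplication step; the saving comes precisely from using the reduced key equations, which symmetrically remove the contribution of $M_\infty$ from both sides, keeping the degrees tight enough to absorb the factor of $2$ inherent in a half-the-minimum-distance argument.
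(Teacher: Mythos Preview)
Your proof is correct and follows essentially the same approach as the paper's: both establish $M_\infty \mid \varphi$ and $M_\infty \mid \Lambda g$ from the reducedness of the received word, cross-multiply the two key equations, apply the degree bound $2t \le L - d_f - d_g + 1$ to lift the congruence to an equality $g\psi_i = \varphi f_i$, and then use $\gcd(\bf,g)=1$ together with $\kappa\,\widetilde{\be}_j \equiv 0$ to conclude $\Lambda \mid \kappa$. The only cosmetic difference is that you package the cross-multiplication through the reduced key equations~\eqref{keyEqRF polesReduit} modulo $M/M_\infty$, whereas the paper works locally modulo $(x-\alpha_j)^{\lambda_j+\vr_j}$ before dividing by $(x-\alpha_j)^{\vr_j}$ and assembling via the CRT; the resulting degree inequality is literally the same after cancelling $\partial(M_\infty)$ on both sides.
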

\begin{proof}
	By hypothesis $d(\Evi(\bf/g), (\vr_j,\br_j)_{1 \le j \le n}) \le t$, we have $v_{\bC} = (\Lambda g,\Lambda\bf)\in\SRinf$.
	Let $(\varphi,\bPsi)\in\SRinf$ be another solution of the key equations.
	We have that 
	\begin{equation*}
		\begin{cases}
			(x - \alpha_j)^{\vr_j}\Lambda\bf =
			\br_j \Lambda g &\phantom{}\bmod (x - \alpha_j)^{\lambda_j}\\
			(x - \alpha_j)^{\vr_j}\bPsi\ =
			\br_j \varphi &\phantom{}\bmod (x - \alpha_j)^{\lambda_j}
		\end{cases}
	\end{equation*}
	for some 
	$\Lambda\in\F_q[x]$ with $\partial\left(\Lambda\right) \le t
	\le\lf\frac{L - d_f - d_g + 1}{2} \rf$. Since the received word $(\vr_j,\br_j)_{1 \le j \le n}$ is assumed to be reduced, \ie $\gcd((x - \alpha_j)^{\vr_j},\br_j) = 1$, from the above we get that 
	$(x - \alpha_j)^{\vr_j}|\Lambda g$ and $(x - \alpha_j)^{\vr_j}|\varphi$ for every $j=1,\ldots,n$. Thus, when multiplying the first equation by $\varphi$ and the second one by
	$\Lambda g$ we get
	\begin{align*}
		\begin{cases}
			(x - \alpha_j)^{\vr_j}\Lambda\varphi \bf =
			\br_j \Lambda g \varphi &\phantom{}\bmod (x - \alpha_j)^{\lambda_j + \vr_j}\\
			(x - \alpha_j)^{\vr_j}\Lambda g\bPsi  =
			\br_j \Lambda g \varphi &\phantom{}\bmod (x - \alpha_j)^{\lambda_j + \vr_j}
		\end{cases}
	\end{align*}
	subtracting one another, and dividing by $(x - \alpha_j)^{\vr_j}$, we obtain 
	$
	\Lambda \left(\varphi\bf - g\bPsi \right) = 0 \bmod M.
	$
	By hypothesis, we have that
	$
	\partial\left(\Lambda (\varphi\bf - g\bPsi)\right) \leq 
	2t + d_f + d_g - 2 < L 
	$
	which implies that $\Lambda \left(\varphi\bf - g\bPsi \right) = 0$ thus $\varphi\bf
	= g\bPsi $ in $\F_q[x]^\ell$.  
	
	Since $\bf/g$ is a reduced vector of rational functions, there exists $p\in\F_q[x]$ such that $(\varphi,\bPsi) = p (g,\bf)$.
	Substituting in the key equations for $(\varphi,\bPsi)$, we get
	$
	p ((x - \alpha_j)^{\vr_j}\bf - \br_j g) = 0  \bmod (x - \alpha_j)^{\lambda_j}
	$.
	However, $\Lambda$ divides $p$ by definition of $\Lambda$, so $(\varphi,\bPsi) \in v_{\bC} \F_q[x]$.
\end{proof}

\subsection{Hybrid Error Models for Poles}
\label{subsect:Errormodels poles RF}
In this subsection we adapt the hybrid error models of the previous section to the current context with poles. Recall that the hybrid error model is composed of both
fixed errors and random errors. As done in~\cite{guerrini2023simultaneous}, here we consider a hybrid error model where valuation errors are fixed, while evaluation errors are random.
In previous Sections~\ref{Sec:SRF} and~\ref{sec:Hybrid Decoding}, the error models were defined on the error
matrices $\bE$, then the theorems applied to received words $\bR$ such that $\bR
= \bC + \bE$. In this section, as pointed out in Remark~\ref{rmk:errorsbad primes}, we have a more complicated relation between
$\bR$, $\bC$ and $\bE$. So we are going to define the error model directly on
$\bR$.

Our error model needs to fix the following parameters:
\begin{itemize}
	\item a reduced vector of rational functions $\bf/g\in\F_q(x)^{\ell}$ such that
	$
	\partial(\bf)<d_f,$ $ \partial(g)<d_g
	$,
	
	\item valuation $\xi_v$ and evaluation $\xi_e$ error supports such that
	$\xi_e, \xi_v \subset \{1, \dots, n\}$ and $\xi_e \cap \xi_v = \varnothing$,
	
	\item error valuations $(\mu_j)_{1 \le j \le n}$ such that 
	\begin{itemize}
		\item $\mu_j = \lambda_j$ for $j \notin \xi_e \cup \xi_v$,
		\item $\mu_j \ge \nu_{\alpha_j}(g)$ and $\mu_j < \lambda_j$ for $j \in \xi_e$,
		\item $\mu_j \le \nu_{\alpha_j}(g)$ and $\mu_j < \lambda_j$ for $j \in \xi_v$,
	\end{itemize}
	
	\item a partial received word $\bfR_j = (\vr_j, \br_j)$ for all $j \in
	\xi_v$ such that 
	\begin{itemize}
		\item $\vr_j = \mu_j$ when $\mu_j < \nu_{\alpha_j}(g)$,
		\item $\vr_j > \nu_{\alpha_j}(g)$ when $\mu_j = \nu_{\alpha_j}(g)$.
	\end{itemize}
\end{itemize}
Denote $\Le := \prod_{j \in \xi_e} (x - \alpha_j)^{\lambda_j - \mu_j}, \Lv := \prod_{j \in
	\xi_v} (x - \alpha_j)^{\lambda_j - \mu_j}$ and $\Lambda = \Le \Lv$. Remark that $\Le, \Lv,
\Lambda$ contain all the information of $\xi_v, \xi_e$ and $\mu_j$ since $\xi_v
= \cZ(\Lv)$, $\xi_e = \cZ(\Le)$ and $\mu_j = \lambda_j - \nu_{\alpha_j}(\Lambda)$.

We are ready to define our error models. The random received words $\bR = (\vr_j, \br_j)_j$ are uniformly
distributed in the following set $\HYBp$
\begin{enumerate}
	\item $\bR_j = \Evi(\bf/g)_j \text{ for all } j \text{ such that } \alpha_j\not\in\cZ(\Lambda)$,
	\item $\bR_j = \bfR_j \text{ for all } j\text{ such that } \alpha_j\in\cZ(\Lv)$,
	\item $\bR_j = (\nu_{\alpha_j}(g), \br_j)$ with $\nu_{\alpha_j}\left(\br_j -
	\cS_j(\bf/g)\right) = \mu_j - \nu_{\alpha_j}(g)$ for all $j$ such that $\alpha_j\in\cZ(\Le)$.
\end{enumerate}

As before, we will determine the distribution of the error matrices
$\bE_{\bR,\Evi(\bf/g)}$ when $\bf/g$ is fixed and $\bR$ is random. For
$i\in\{1,\dots,\ell\}$, we still denote $E_i \in \F_q[x]/M$ the CRT interpolant of
the $i$-th row of $\bE$, and we obtain that $Y | E_i$ for every index $i=
1,\ldots,\ell$ as in Subsection~\ref{subsect:ErrModelRatFunc}. We define the
modular polynomials $E_i'\coloneq E_i/Y \in \F_q[x]/\Lambda$, which verify
$\gcd(E_1',\ldots,E_{\ell}',\Lambda) = 1$.

Because of our hybrid error model where the randomness only appears on the
columns $j \in\cZ(\Le)$, we need to study the random vector $(E_1' \bmod \Le,\ldots,E_{\ell}' \bmod \Le)$.
\begin{lemma}
	\label{lm:randombad primesuniform}
	If $\bR$ is uniformly distributed in $\HYBp$, then 
	the random vector $(E_1' \bmod \Le,\ldots,E_\ell' \bmod \Le)$ is uniformly
	distributed in the sample space $\Omega_{\Lambda_e,\ell}$.
\end{lemma}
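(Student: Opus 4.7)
The plan is to mirror the proof of Lemma~\ref{lm:randomuniform}, but with the randomness now coming from the evaluation part of the received word at positions $j\in\cZ(\Le)$ rather than from a uniformly random error matrix. The positions $j\notin\cZ(\Lambda)$ contribute nothing to the error since $\bR_j=\Evi(\bf/g)_j$, and the positions $j\in\cZ(\Lv)$ are fixed (not random), so the analysis reduces purely to the columns $j\in\cZ(\Le)$.

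First, I will compute $\be_j$ for $j\in\cZ(\Le)$ from the formula in Definition~\ref{def:DistFuncbad primes}. Since both $\Evi(\bf/g)_j$ and $\bR_j$ share the same valuation component $\nu_{\alpha_j}(g)$ in this case, the formula simplifies to
\begin{equation*}
\be_j \;=\; (x-\alpha_j)^{\nu_{\alpha_j}(g)}\bigl(\br_j - \cS_j(\bf/g)\bigr)\ \bmod\ (x-\alpha_j)^{\lambda_j}.
\end{equation*}
The definition of $\HYBp$ forces $\nu_{\alpha_j}(\br_j-\cS_j(\bf/g))=\mu_j-\nu_{\alpha_j}(g)$, so $\br_j-\cS_j(\bf/g)$ is uniformly distributed among vectors of $(\F_q[x]/(x-\alpha_j)^{\lambda_j-\nu_{\alpha_j}(g)})^\ell$ of valuation exactly $\mu_j-\nu_{\alpha_j}(g)$. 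Multiplication by $(x-\alpha_j)^{\nu_{\alpha_j}(g)}$ is injective modulo $(x-\alpha_j)^{\lambda_j}$, so $\be_j$ is uniformly distributed among vectors of $(\F_q[x]/(x-\alpha_j)^{\lambda_j})^\ell$ of valuation exactly $\mu_j=\lambda_j-\nu_{\alpha_j}(\Le)$.

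Next I pass from $\be_j$ to the reduced error $E_i' = E_i/Y \bmod \Lambda$ restricted modulo $\Le$. Since $\nu_{\alpha_j}(Y)=\mu_j$ for every $j\in\cZ(\Le)$, locally at $\alpha_j$ the factor $Y/(x-\alpha_j)^{\mu_j}$ is a unit in $\F_q[x]/(x-\alpha_j)^{\nu_{\alpha_j}(\Le)}$; so dividing the components of $\be_j$ by $(x-\alpha_j)^{\mu_j}$ and then by this unit yields the reductions of $(E_1',\dots,E_\ell')$ modulo $(x-\alpha_j)^{\nu_{\alpha_j}(\Le)}$, uniformly distributed among $\ell$-tuples of valuation $0$ at $\alpha_j$. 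Taking the product over all $j\in\cZ(\Le)$ via the Chinese Remainder Theorem shows that $(E_1'\bmod\Le,\dots,E_\ell'\bmod\Le)$ is uniformly distributed in
\begin{equation*}
\Omega_{\Le,\ell} = \{(F_i)_i\in(\F_q[x]/\Le)^\ell : \gcd(F_1,\dots,F_\ell,\Le)=1\},
\end{equation*}
which is precisely the conclusion.

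The main technical point to be careful about is the bookkeeping of valuations when passing between $\br_j-\cS_j(\bf/g)$, $\be_j$, and $E_i'$: the multiplication by $(x-\alpha_j)^{\nu_{\alpha_j}(g)}$ and the subsequent division by $Y$ must be verified to be measure-preserving bijections between the relevant sets of local $\ell$-tuples. Once that is established, uniformity transfers across each step and the global uniformity on $\Omega_{\Le,\ell}$ follows from the independence of the columns in $\HYBp$ for distinct $j\in\cZ(\Le)$.
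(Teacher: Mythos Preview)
Your proof is correct and follows essentially the same approach as the paper's own proof: restrict to indices $j\in\cZ(\Le)$, compute $\be_j=(x-\alpha_j)^{\nu_{\alpha_j}(g)}(\br_j-\cS_j(\bf/g))$, observe that it has valuation $\mu_j=\lambda_j-\nu_{\alpha_j}(\Le)$, then divide by $Y$ to obtain a uniformly random vector of valuation $0$ locally, and conclude via CRT. Your version is in fact more explicit than the paper's about the measure-preserving bijections and the role of independence across columns.
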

\begin{proof}
	For the duration of this proof, we will only consider indices $j$ such that
	$p_j\in\cZ(\Le)$. Recall that
	$
	\be_{j} = (x - \alpha_j)^{\nu_{\alpha_j}(g)} (\br_{j} - \cS_j(\bf/g)) \bmod (x - \alpha_j)^{\lambda_j}
	$
	for all those particular $j$.
	Since $\nu_{\alpha_j}(Y) = \nu_{\alpha_j}(\be_{j}) = \lambda_j - \nu_{\alpha_j}(\Le)$,
	we get that the vector $\be_{j}/Y \in
	(\F_q[x]/(x - \alpha_j)^{\nu_{\alpha_j}(\Le)})^\ell$ is random of valuation $0$. As a
	consequence, by the definition of $\HYBp$, we obtain that $(E_1' \bmod \Le,\ldots,E_\ell' \bmod \Le)$ is random
	among the vectors of $(\F_q[x]/\Le)^\ell$ such that
	$\gcd(E_1',\ldots,E_{\ell}',\Le) = 1$.
\end{proof}

\paragraph*{Second error model}
Similarly, we need to fix a reduced vector of rational functions $\bf/g\in\F_q(x)^{\ell}$,
valuation $\xi_v$ and evaluation $\xi_{m,e}$ error supports, error valuations
$(\mu_j)_{1 \le j \le n}$ and a partial received word $\bfR_j = (\vr_j, \br_j)$
for all $j \in \xi_v$. All these parameters must satisfy the same conditions as
the first error model. 

The set $\xi_{m,e}$ is now called the maximal error
support because actual errors could result in an evaluation error support
$\xi_{e} \subset \xi_{m,e}$.
Denote $\Lambda_{m,e} := \prod_{j \in \xi_{m,e}} (x - \alpha_j)^{\lambda_j - \mu_j}$,
$\Lv := \prod_{j \in \xi_v} (x - \alpha_j)^{\lambda_j - \mu_j}$ and $\Lambda_m = \Lambda_{m,e} \Lv$.

In the second error model, the random received words $\bR = (\vr_j, \br_j)_j$
are uniformly distributed in the following set $\HYBpd$
\begin{enumerate}
	\item $\bR_j = \Evi(\bf/g)_j \text{ for all } j \text{ such that } \alpha_j\not\in\cZ(\Lambda_m)$,
	\item $\bR_j = \bfR_j \text{ for all } j\text{ such that } \alpha_j\in\cZ(\Lv)$,
	\item $\bR_j = (\nu_{\alpha_j}(g), \br_j)$ with $\nu_{\alpha_j}\left(\br_j -
	\cS_j(\bf/g)\right) \ge \mu_j - \nu_{\alpha_j}(g)$ for all $j$ such that $\alpha_j\in\cZ(\Lambda_{m,e})$.
\end{enumerate}
Notice that for a given received word in the set $\HYBpd$, the associated
error locator has the form $\Lambda = \Le\Lv$ for some divisor $\Le|\Lme$.

\subsection{Our results on poles}

We are ready to state our results regarding the failure probability of the decoding algorithm in
presence of poles. We let $\tme$ be the maximal distance on the evaluation errors 
\begin{equation}
	\label{def:Dmaxpoles}
	\tme \coloneq 
	\frac{\ell}{\ell+1} \left[L - d_f - d_g + 1 - 2 \htv\right]
\end{equation}
\begin{theorem}
	\label{thm:main1poles}
	Decoding Algorithm~\ref{algoSRFpoles} on input 
	\begin{enumerate}
		\item distance parameter $t = \htv
		+ \hte$ for $\htv\leq \lf\frac{L - d_f - d_g + 1}{2} \rf$
		and $\hte\leq \tme$,
		\item a random received word $\bR = (\vr_j,\br_j)_{1 \le j \le n}$
		uniformly distributed in $\HYBp$, for some reduced vector of rational functions
		$\bf/g\in\F_q(x)^{\ell}$ with $\partial(\bf) < d_f, \ \partial(g)<d_g$, and $\partial
		(\Lv) \leq \htv$ and $\partial(\Le)\leq \hte$,
	\end{enumerate}
	outputs the center vector $\bf/g$ of the distribution
	with a probability of failure 
	\begin{equation*}
		\P_{\fail}
		\leq 
		q^{-(\ell+1)(\tme - \hte)}\prod_{\alpha\in\cZ(\Le)}\left(\frac{1 - 1/q^{\ell + \nu_{\alpha}(\Lambda_{e)}}}{1 - 1/q^{\ell}}\right).
	\end{equation*} 
\end{theorem}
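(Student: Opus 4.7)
My plan is to mirror the proof of Theorem~\ref{thm:main1hyb} from Section~\ref{sec:Hybrid Decoding}, transposed to the pole setting. The decomposition $\Lambda = \Le \Lv$ plays the role of $\Lambda = \Li \Lu$: the fixed errors live on $\cZ(\Lv)$ and the random ones on $\cZ(\Le)$, as made precise by Lemma~\ref{lm:randombad primesuniform} which already tells us that $(\widetilde{E}_1' \bmod \Le, \dots, \widetilde{E}_\ell' \bmod \Le)$ is uniformly distributed in $\Omega_{\Le,\ell}$. Define the pole-analog of $\SEh$ as
\begin{equation*}
\SEinf \coloneq \left\{ \varphi \in \F_q[x]/\Le : \forall i, \ g_e \varphi \widetilde{E}_i' \in \F_q[x]_{\Le, B + \partial(\Lambda)} \right\},
\end{equation*}
where $B \coloneq d_f + d_g + t - L - 2$ and $g_e \coloneq g / \gcd(g,\Lv^{\infty})$ is the part of $g$ coprime to $\Le$. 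The algorithm fails only if $\SRinf \not\subseteq v_{\bC} \F_q[x]$, and the main task is to prove the analog of Lemma~\ref{BaseLemmaRNHybrid}: $\SEinf = \{0\} \Rightarrow \SRinf \subseteq v_{\bC} \F_q[x]$.

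For the key equation manipulation, I would take $(\varphi, \bPsi) \in \SRinf$ and use $\widetilde{E}_i = \CRT_M((x-\alpha_j)^{\vr_j}) f_i - g R_i$ (from Remark~\ref{rem:rewriteErrVectorsbad primes}) multiplied by $\varphi$, together with the key equation $\CRT_M((x-\alpha_j)^{\vr_j}) \psi_i \equiv \varphi R_i \bmod M$, to obtain
\begin{equation*}
\varphi \widetilde{E}_i \equiv \CRT_M((x-\alpha_j)^{\vr_j})(f_i \varphi - g \psi_i) \bmod M.
\end{equation*}
Dividing by $Y$ and reducing modulo $\Le$, one checks that $\psi_i' \coloneq (f_i \varphi - g \psi_i)/Y$ is a polynomial (the valuation accounting works because $\nu_{\alpha_j}(Y) = \mu_j \ge \nu_{\alpha_j}(g) = \vr_j$ on $\cZ(\Le)$), that $\partial(\psi_i') \le B + \partial(\Lambda)$, and that $g_e \varphi \widetilde{E}_i' \equiv \psi_i' \bmod \Le$. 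Hence $\varphi \bmod \Le$ lies in $\SEinf$; if $\SEinf = \{0\}$ then $\Le \mid \varphi$, forcing $\psi_i' \equiv 0 \bmod \Le$. Constraint~\ref{cst: ParamHybridVers} (which is satisfied, since $B + \htv < 0$ follows from $\hte \le \tme$) then gives $\partial(\psi_i') < \partial(\Le)$, so $\psi_i' = 0$ in $\F_q[x]$, i.e.\ $g \bPsi = \varphi \bf$; the hypothesis $\gcd(f_1,\dots,f_\ell,g)=1$ and the divisibility by $\Lv$ (which follows from the non-random constraints on positions in $\cZ(\Lv)$) then conclude $v_s \in v_{\bC} \F_q[x]$.

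Once the structural lemma is in place, the probabilistic part is a direct copy of the proof of Theorem~\ref{thm:main1hyb}. A standard Markov/union bound gives
\begin{equation*}
\P(\SEinf \ne \{0\}) \le \frac{1}{q-1} \sum_{\varphi \in (\F_q[x]/\Le)\setminus\{0\}} \P\!\left(\forall i,\ g_e \varphi \widetilde{E}_i' \in \F_q[x]_{\Le, B+\partial(\Lambda)}\right),
\end{equation*}
and Lemma~\ref{lm:randombad primesuniform} lets me invoke the exact computation done in Lemma~\ref{lm:boundfailureprobabilityhybrid} with $\Li$ replaced by $\Le$ and $\Lu$ replaced by $\Lv$ (the presence of $g_e$ is harmless because $g_e$ is invertible modulo $\Le$, so the distributions of $\varphi \widetilde{E}_i'$ and $g_e \varphi \widetilde{E}_i'$ on $\Omega_{\Le,\ell}$ coincide). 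This yields the upper bound
\begin{equation*}
\frac{q^{\ell(B+1+\htv)}}{q-1}\, q^{\partial(\Le)} \prod_{\alpha \in \cZ(\Le)} \frac{1 - 1/q^{\ell + \nu_\alpha(\Le)}}{1 - 1/q^{\ell}},
\end{equation*}
and plugging $\partial(\Le) \le \hte$ together with $q^{\ell(B+1+\htv)} q^{\hte} = q^{-(\ell+1)(\tme - \hte)}$ finishes the proof.

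The main obstacle is the first paragraph's algebraic manipulation: unlike Sections~\ref{Sec:SRF}-\ref{sec:Hybrid Decoding} where $\gcd(g,M)=1$ gave us an invertible $g$ modulo $\Lambda$, here $g$ may share roots with $\Lv$, so I must carefully factor out the pole-part of $g$ and verify that what remains is invertible modulo $\Le$. The choice to reduce modulo $\Le$ (rather than modulo all of $\Lambda$) is precisely what makes this work, since the evaluation-error positions are exactly where $\nu_{\alpha_j}(g) = \vr_j$ and $g/(x-\alpha_j)^{\vr_j}$ is a unit in $\F_q[x]/(x-\alpha_j)^{\lambda_j - \vr_j}$. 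Checking the valuation bookkeeping for $\psi_i' = (f_i \varphi - g \psi_i)/Y$ (i.e.\ that $Y$ indeed divides the numerator on every coordinate) is the most delicate step and uses the distinction between the valuation-error and evaluation-error cases from the discussion preceding Lemma~\ref{lm: MinDistSRFpoles}.
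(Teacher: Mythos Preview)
Your overall plan and the probabilistic half are correct and match the paper: once the structural implication $\SEinf=\{0\}\Rightarrow\SRinf\subseteq v_{\bC}\F_q[x]$ is established, the bound follows exactly as you say by invoking Lemma~\ref{lm:randombad primesuniform} and Lemma~\ref{lm:boundfailureprobabilityhybrid} with $(\Li,\Lu)\leftarrow(\Le,\Lv)$.

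The gap is in the structural lemma. Two related points fail as written. First, $g_e:=g/\gcd(g,\Lv^{\infty})$ is \emph{not} invertible modulo $\Le$ in general: evaluation-error positions are exactly those with $\vr_j=\nu_{\alpha_j}(g)$, and nothing forces $\nu_{\alpha_j}(g)=0$ there, so $g_e$ can vanish at $\alpha_j\in\cZ(\Le)$. Second, and more seriously, $\psi_i':=(f_i\varphi-g\psi_i)/Y$ need not be a polynomial. From $\varphi\widetilde{E}_i\equiv\CRT_M((x-\alpha_j)^{\vr_j})(f_i\varphi-g\psi_i)\bmod M$ and $Y\mid\widetilde{E}_i$ you only get $\nu_{\alpha_j}(f_i\varphi-g\psi_i)\ge\mu_j-\vr_j$, which is strictly less than $\nu_{\alpha_j}(Y)=\mu_j$ whenever $\vr_j>0$; in particular at a non-error position $j\notin\xi$ with $\nu_{\alpha_j}(g)>0$ one has $\mu_j=\lambda_j$ but only $\nu_{\alpha_j}(f_i\varphi-g\psi_i)\ge\lambda_j-\vr_j$.

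The paper repairs this by a different manipulation (Lemma~\ref{lm:SE_bad primes}): it multiplies the local identity by $\Lv$ and works modulo $(x-\alpha_j)^{\lambda_j+\vr_j}$, using the key valuation fact $\nu_{\alpha_j}(\Lv g)\ge \vr_j$ for every $j$, so that the extraneous $gh_{i,j}(x-\alpha_j)^{\lambda_j}$ term disappears. After dividing by $(x-\alpha_j)^{\vr_j}$ and interpolating one obtains $\CRT_M(\varphi/(x-\alpha_j)^{\vr_j})\,\Lv\,\widetilde{E}_i\equiv\Lv(f_i\varphi-g\psi_i)\bmod M$, and now $Y\Lv\mid M$ together with $Y\mid\widetilde{E}_i$ genuinely yields $Y\mid(f_i\varphi-g\psi_i)$. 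Correspondingly, the element landing in $\SEinf$ is $\omega:=\CRT_{\Le}(\varphi/(x-\alpha_j)^{\vr_j})$ rather than $\varphi$ itself, and the $g$-factor is the \emph{local} unit $\CRT_{\Le}(g/(x-\alpha_j)^{\nu_{\alpha_j}(g)})$ (already absorbed into the paper's $\widetilde{E}_i'$), not a global quotient like your $g_e$.
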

\begin{theorem}
	\label{thm:main2poles}
	Decoding Algorithm~\ref{algoSRFpoles} on input 
	\begin{enumerate}
		\item distance parameter $t = \htv
		+ \hte$ for $\htv\leq \lf\frac{L - d_f - d_g + 1}{2} \rf$
		and $\hte\leq \tme$,
		\item a random received word $\bR = (\vr_j,\br_j)_{1 \le j \le n}$
		uniformly distributed in $\HYBpd$, for some reduced vector of rational functions
		$\bf/g\in\F_q(x)^{\ell}$ with $\partial(\bf) < d_f, \ \partial(g)<d_g$, and $\partial
		(\Lv) \leq \htv$ and $\partial(\Lme)\leq \hte$,
	\end{enumerate}
	outputs the center vector $\bf/g$ of the distribution
	with a probability of failure 
	\begin{equation*}
		\P_{\fail}
		\leq 
		q^{-(\ell+1)(\tme - \hte)}
		\prod_{\alpha\in\cZ(\Lme)}
		\left(\frac{1 - 1/q^{\ell + \nu_{\alpha}(\Lme)}}{1 - 1/q^{\ell + 1}}\right).
	\end{equation*} 
\end{theorem}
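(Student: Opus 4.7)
The plan is to mirror the proof of Theorem~\ref{thm:main2hyb} as closely as possible, using the law of total probability to reduce the analysis for the second pole-aware error model $\HYBpd$ to the first one $\HYBp$ for which we already have Theorem~\ref{thm:main1poles}. Concretely, I let $\cF$ denote the decoding failure event for Algorithm~\ref{algoSRFpoles} run with parameter $t=\htv+\hte$, and I note that in $\HYBpd$ the valuation part of the error is fixed (prescribed by $\Lv$ and the partial received word $\bfR_j$ for $j \in \xi_v$), while the evaluation error locator $\Le$ can be any divisor of $\Lme$. Hence I write
\begin{equation*}
\P_{\HYBpd}(\cF)
=\sum_{\Le\mid \Lme}
\P_{\HYBpd}\!\left(\cF\,\middle|\,\Lambda_{e,\bE}=\Le\right)\,
\P_{\HYBpd}\!\left(\Lambda_{e,\bE}=\Le\right).
\end{equation*}

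The first factor equals $\P_{\HYBp}(\cF)$ for the first hybrid pole model with actual evaluation locator $\Le$ (and the same fixed valuation data), so Theorem~\ref{thm:main1poles} bounds it by $q^{-(\ell+1)(\tme-\hte)}\prod_{\alpha\in\cZ(\Le)}\frac{1-1/q^{\ell+\nu_\alpha(\Le)}}{1-1/q^\ell}$; unpacking its definition, this bound is actually of the shape $\frac{q^{\ell(B+1+\htv)}}{q-1}\,q^{\partial(\Le)}\prod_{\alpha\in\cZ(\Le)}\frac{1-1/q^{\ell+\nu_\alpha(\Le)}}{1-1/q^\ell}$, as in Lemma~\ref{lm:boundfailureprobabilityhybrid}. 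For the second factor, since only the evaluation coordinates $j\in\cZ(\Lme)$ fluctuate and, by Lemma~\ref{lm:randombad primesuniform}, the reduced random vector there is uniform in $\Omega_{\Lme,\ell}$, Lemma~\ref{lm:EulerFormula} yields
\begin{equation*}
\P_{\HYBpd}\!\left(\Lambda_{e,\bE}=\Le\right)
=\frac{q^{\ell\partial(\Le)}}{q^{\ell\partial(\Lme)}}\prod_{\alpha\in\cZ(\Le)}\!\left(1-\tfrac{1}{q^\ell}\right),
\end{equation*}
exactly as in Equation~\eqref{ConditionProba} but with $\Lme$ playing the role of $\Lm$.

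Plugging these two estimates into the total-probability sum, the factors $\bigl(1-1/q^\ell\bigr)$ in the denominator of the conditional bound cancel against those in $\P_{\HYBpd}(\Lambda_{e,\bE}=\Le)$, leaving
\begin{equation*}
\P_{\HYBpd}(\cF)\Big/\!\frac{q^{\ell(B+1+\htv)}}{(q-1)\,q^{\ell\partial(\Lme)}}
\ \le\ \sum_{\Le\mid\Lme} q^{(\ell+1)\partial(\Le)}\prod_{\alpha\in\cZ(\Le)}\!\left(1-\tfrac{1}{q^{\ell+\nu_\alpha(\Le)}}\right).
\end{equation*}
I then apply Lemma~\ref{lm:sumOverDivisorsPOLY} to the function $f(k)=q^{k(\ell+1)}(1-1/q^{\ell+k})$ for $k>0$ (and $f(0)=1$), converting the sum over divisors into a product over $\alpha\in\cZ(\Lme)$ of $1+\sum_{k=1}^{\nu_\alpha(\Lme)}f(k)$. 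This is the same manipulation used in the proof of Theorem~\ref{thm:main2hyb}, and I expect to be able to upper bound each inner sum by $\bigl(1-1/q^{\ell+\nu_\alpha(\Lme)}\bigr)\sum_{k=1}^{\nu_\alpha(\Lme)}q^{k(\ell+1)}$, then compute the geometric sum, absorb the leading $1$ using $1\le \tfrac{1-1/q^{\ell+\nu_\alpha(\Lme)}}{1-1/q^{\ell+1}}$, and conclude that the product is at most $q^{(\ell+1)\partial(\Lme)}\prod_{\alpha\in\cZ(\Lme)}\frac{1-1/q^{\ell+\nu_\alpha(\Lme)}}{1-1/q^{\ell+1}}$.

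Combining with the prefactor $\frac{q^{\ell(B+1+\htv)}}{(q-1)\,q^{\ell\partial(\Lme)}}$ and using $\partial(\Lme)\le\hte$ together with the identity $q^{\ell(B+1+\htv)}\,q^{\hte}=q^{-(\ell+1)(\tme-\hte)}$ (by definition of $B$ and $\tme$), I obtain precisely the stated bound. The main obstacle, as in the no-poles case, is purely technical: carrying out the geometric-sum manipulation cleanly and checking that the constraint $\htv\le\lfloor(L-d_f-d_g+1)/2\rfloor$ together with $\hte\le\tme$ indeed implies Constraint~\ref{cst: ParamHybridVers} in the appropriate pole-aware form, so that Theorem~\ref{thm:main1poles} applies uniformly to every $\Le\mid\Lme$ appearing in the sum.
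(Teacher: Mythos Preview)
Your proposal is correct and follows essentially the same route as the paper: decompose via the law of total probability over divisors $\Le\mid\Lme$, identify the conditional failure probability with $\P_{\HYBp}(\cF)$ (bounded via the first-error-model analysis), compute the marginal $\P_{\HYBpd}(\Lambda_{\bE}=\Le\Lv)$ using Lemma~\ref{lm:EulerFormula}, and then carry out the divisor-sum/geometric-series manipulation exactly as in the proof of Theorem~\ref{thm:main2hyb} with $\Lmi,\Li,\Lu$ replaced by $\Lme,\Le,\Lv$. The only cosmetic slips are that the relevant constraint here is Constraint~\ref{cst: Param polesVers} (not~\ref{cst: ParamHybridVers}), and your computation actually yields the bound with an extra factor $1/(q-1)\le 1$, which is slightly stronger than---and hence implies---the stated inequality.
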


\begin{remark}
	We remark that the results given in this paper provide several
	improvements on the state of the art 
	(see~\cite[Theorem 3.4]{guerrini2023simultaneous}). For instance, the
	failure probability bound decreases exponentially when the actual error
	distance is less than the maximal error distance in this paper, whereas the failure probability bound in~\cite{guerrini2023simultaneous} is a linear function of the distance parameter. Furthermore, our bound removes the technical dependency of the multiplicity balancing, making the results independent of how the multiplicities are distributed.
	
\end{remark}

\subsection{Decoding failure probability with respect to the first error model} 
We let 
\begin{equation*}
	\SEinf\coloneq
	\left\{
	\omega\in\F_q[x]/\Le : \forall i, \ \omega \widetilde{E}'_i \in \F_q[x]_{\Le,B + \partial(\Lambda)}
	\right\}
\end{equation*}
with $B\coloneq d_f + d_g + t - L - 2 = d_f + d_g + \hte +\htv - L - 2  $ and
$\widetilde{E}_i \coloneq \CRT_M\left(g/(x - \alpha_j)^{\nu_{\alpha_j}(g)}\right) E_i \ \bmod M$. 
We can now prove the version of Lemma~\ref{BaseLemmaRN} with poles.
\begin{constraint}
	\label{cst: Param polesVers}
	The parameters of Algorithm~\ref{algoSRFpoles} satisfy $B + \htv < 0$.
\end{constraint}
\begin{lemma}
	\label{lm:SE_bad primes}
	If Constraint~\ref{cst: Param polesVers} is satisfied then $\SEinf = \{0\} \Rightarrow
    \SRinf	\subseteq v_{\bC}\F_q[x]$.
\end{lemma}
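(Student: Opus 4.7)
The plan is to adapt the argument of Lemma~\ref{BaseLemmaRN} (and its hybrid extension Lemma~\ref{BaseLemmaRNHybrid}) to the pole setting. Given $(\varphi,\psi_1,\ldots,\psi_\ell)\in\SRinf$, I would exhibit an element $\omega\in\F_q[x]/\Le$ and a polynomial $\psi'_i$ with $\partial(\psi'_i)\le B+\partial(\Lambda)$ such that $\omega\widetilde{E}'_i\equiv\psi'_i\bmod\Le$, placing $\omega$ in $\SEinf$. The hypothesis $\SEinf=\{0\}$ then forces $\omega=0$, and Constraint~\ref{cst: Param polesVers} combined with $\partial(\Lv)\le\htv$ yields $\partial(\psi'_i)\le B+\partial(\Le)+\htv<\partial(\Le)$, so the congruence $\psi'_i\equiv0\bmod\Le$ upgrades to $\psi'_i=0$ in $\F_q[x]$.

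To derive this identity, I would multiply the key equation $\CRT_M((x-\alpha_j)^{\vr_j})\psi_i\equiv\varphi R_i\bmod M$ by $g$ and use the relation $gR_i\equiv\CRT_M((x-\alpha_j)^{\vr_j})f_i-\widetilde{E}_i\bmod M$ (which follows from Remark~\ref{rem:rewriteErrVectorsbad primes}). This gives the master equation $\varphi\widetilde{E}_i\equiv\CRT_M((x-\alpha_j)^{\vr_j})(\varphi f_i-g\psi_i)\bmod M$, or prime-by-prime $(x-\alpha_j)^{\vr_j}(g\psi_i-\varphi f_i)\equiv-\varphi\widetilde{\be}_j\bmod(x-\alpha_j)^{\lambda_j}$. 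Using that $M_\infty\mid\varphi$ (a consequence of the reduced form of the received word combined with the key equations, as already exploited in the proof of Proposition~\ref{Unicity}), I would derive the local identity $(\varphi f_i-g\psi_i)\equiv(x-\alpha_j)^{\mu_j-\vr_j}\varphi\widetilde{E}'_i\bmod(x-\alpha_j)^{\lambda_j-\mu_j}$ at every $\alpha_j\in\cZ(\Le)$, and assemble these by CRT into $\omega\widetilde{E}'_i\equiv\psi'_i\bmod\Le$ with $\omega:=\varphi\bmod\Le$.

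The main technical obstacle is producing $\psi'_i$ of degree at most $B+\partial(\Lambda)$. A direct division of $(\varphi f_i-g\psi_i)$ by $Y$ fails at the non-error pole positions $\alpha_j\notin\xi$ with $\nu_{\alpha_j}(g)>0$, where only $(x-\alpha_j)^{\lambda_j-\nu_{\alpha_j}(g)}$ divides $(\varphi f_i-g\psi_i)$ instead of the full $(x-\alpha_j)^{\mu_j}=(x-\alpha_j)^{\lambda_j}$. To absorb the pole contributions, I would pass to the reduced variables $\varphi'=\varphi/M_\infty$ and the reduced key equation $\psi_i\equiv\varphi' R'_i\bmod M/M_\infty$, together with the decomposition $\CRT_M((x-\alpha_j)^{\vr_j})\equiv M_\infty V\bmod M$ for the appropriate CRT unit $V\in\F_q[x]/(M/M_\infty)$. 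This yields the cleaner identity $\varphi'\widetilde{E}_i\equiv V(\varphi f_i-g\psi_i)\bmod M/M_\infty$, from which, after careful tracking of the divisibilities, a polynomial $\psi'_i$ of the correct degree emerges after dividing by $Y$ and reducing modulo $\Le$.

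To conclude, once $\psi'_i=0$ and $\Le\mid\varphi$ are in hand, $g\psi_i=f_i\varphi$ holds in $\F_q[x]$ for every $i$; since $\bf/g$ is reduced, $\gcd(f_1,\ldots,f_\ell,g)=1$ forces $g\mid\varphi$, so $(\varphi,\bPsi)=a(g,\bf)$ for some $a\in\F_q[x]$. Substituting $\psi_i=af_i$ back into the key equations gives $a[\CRT_M((x-\alpha_j)^{\vr_j})f_i-gR_i]\equiv0\bmod M$, equivalently $a\widetilde{E}_i\equiv0\bmod M$ for every $i$. At each prime $\alpha_j\in\cZ(\Lambda)$, some index $i$ realizes $\nu_{\alpha_j}(\widetilde{E}_i)=\mu_j$ (by the very definition of $\mu_j$ as a minimum over components), which forces $\nu_{\alpha_j}(a)\ge\lambda_j-\mu_j=\nu_{\alpha_j}(\Lambda)$; hence $\Lambda\mid a$, and $(\varphi,\bPsi)=(a/\Lambda)(\Lambda g,\Lambda\bf)\in v_{\bC}\F_q[x]$.
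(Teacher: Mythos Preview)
Your overall architecture is right and matches the paper, but there are two concrete errors in the middle of the argument.

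\medskip
\textbf{Wrong choice of $\omega$.} Your own local identity at $\alpha_j\in\cZ(\Le)$ reads
\[
(\varphi f_i-g\psi_i)\equiv (x-\alpha_j)^{\mu_j-\vr_j}\,\varphi\,\widetilde{E}'_i \pmod{(x-\alpha_j)^{\lambda_j-\mu_j}},
\]
so after CRT the element multiplying $\widetilde{E}'_i$ is $\CRT_{\Le}\bigl((x-\alpha_j)^{\mu_j-\vr_j}\varphi\bigr)$, not $\varphi$. Since $\mu_j>\vr_j$ whenever $\nu_{\alpha_j}(g)<\mu_j$ (which is the generic situation for evaluation errors), your choice $\omega=\varphi\bmod\Le$ does not produce a small-degree $\psi'_i$. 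The paper takes $\omega=\CRT_{\Le}(\varphi/(x-\alpha_j)^{\vr_j})$, which is exactly what your local computation yields once you rewrite $(x-\alpha_j)^{\mu_j-\vr_j}\varphi\,\widetilde{E}'_i=(\varphi/(x-\alpha_j)^{\vr_j})\,\widetilde{e}_{i,j}$.

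\medskip
\textbf{The ``obstacle'' at non-error pole positions is a phantom.} Your claim that for $j\notin\xi$ with $\nu_{\alpha_j}(g)>0$ only $(x-\alpha_j)^{\lambda_j-\nu_{\alpha_j}(g)}$ divides $\varphi f_i-g\psi_i$ is false. Writing the key equation as $\varphi r_{i,j}=(x-\alpha_j)^{\vr_j}\psi_i+h_{i,j}(x-\alpha_j)^{\lambda_j}$ and $\widetilde{e}_{i,j}=(x-\alpha_j)^{\vr_j}f_i-gr_{i,j}-k_{i,j}(x-\alpha_j)^{\lambda_j}$, one gets the exact polynomial identity
\[
\varphi\,\widetilde{e}_{i,j}=(x-\alpha_j)^{\vr_j}(\varphi f_i-g\psi_i)-(gh_{i,j}+\varphi k_{i,j})(x-\alpha_j)^{\lambda_j}.
\]
For $j\notin\xi$ one has $\widetilde{e}_{i,j}=0$ and $\vr_j=\nu_{\alpha_j}(g)$, and since $M_\infty\mid\varphi$ both $g$ and $\varphi$ have $\alpha_j$-valuation $\ge\vr_j$; hence $\nu_{\alpha_j}\bigl((x-\alpha_j)^{\vr_j}(\varphi f_i-g\psi_i)\bigr)\ge\lambda_j+\vr_j$, giving $\nu_{\alpha_j}(\varphi f_i-g\psi_i)\ge\lambda_j=\mu_j$. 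A similar direct check works for $j\in\xi_e$ and $j\in\xi_v$, so $Y\mid(\varphi f_i-g\psi_i)$ always holds and $\psi'_i:=(\varphi f_i-g\psi_i)/Y$ is a genuine polynomial of degree $\le B+\partial(\Lambda)$. Your detour through the reduced variables $\varphi'=\varphi/M_\infty$ is therefore unnecessary, and in any case the sketch you give there (``after careful tracking of the divisibilities\ldots'') does not resolve the issue you raised.

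\medskip
The paper reaches the same conclusion by a slightly different device: it multiplies by $\Lv$ and uses $\nu_{\alpha_j}(\Lv g)\ge\vr_j$ for every $j$, which upgrades the congruence to one modulo $M$ (rather than $M/M_\infty$); then $Y\Lv\mid M$ and $Y\Lv\mid\Lv\widetilde{E}_i$ yield $Y\mid(\varphi f_i-g\psi_i)$ in one stroke and simultaneously produce the correct $\omega$. Your concluding paragraph (from $g\psi_i=f_i\varphi$ onward) is correct and matches the paper.
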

\begin{proof}
	Let $(\varphi,\psi_1,\ldots,\psi_\ell)\in \SRinf$.
	 From~\eqref{keyEqRF poles} we know that $\prod_{j=1}^n(x - \alpha_j)^{\vr_j}|\varphi$ and that for every $i,j$ there exists $h_{i,j}\in\F_q[x]$ such that $\varphi r_{i,j}=(x - \alpha_j)^{\vr_j}\psi_i + h_{i,j} (x - \alpha_j)^{\lambda_j}$. Furthermore,
	\begin{equation}
		\label{eq:intermediatefromLemmaSEbad primes}
		\varphi \Lv \widetilde{e}_{i,j}
		= 
		(x - \alpha_j)^{\vr_j}\Lv\left(\varphi f_i - g\psi_i\right) - \Lv g h_{i,j} (x - \alpha_j)^{\lambda_j} \ \bmod (x - \alpha_j)^{\lambda_j + \vr_j}.    
	\end{equation}
	From
	\begin{equation*}
		\nu_{\alpha_j}(\Lv g)=
		\begin{cases}
			\lambda_j - \min\{\vr_j,\nu_{\alpha_j}(g)\} + \nu_{\alpha_j}(g) & \text{ if }\vr_j\ne\nu_{\alpha_j}(g)\\
			\nu_{\alpha_j}(g)& \text{ if }\vr_j=\nu_{\alpha_j}(g)
		\end{cases},
	\end{equation*}
	as $\lambda_j\geq \vr_j $, we conclude that $\nu_{\alpha_j}(\Lv g)\geq \vr_j$ for every $j=1,\ldots,n$. Taking the CRT interpolant modulo $M$ on both sides of~\eqref{eq:intermediatefromLemmaSEbad primes} after dividing by $(x - \alpha_j)^{\vr_j}$, we conclude that
	$$
	\CRT_M(\varphi/(x - \alpha_j)^{\vr_j})\Lv \widetilde{E}_{i} 
	=
	\Lv
	(
	\varphi f_i - g \psi_i
	)
	\ \bmod M
	$$
	with $\widetilde{E}_i \coloneq \CRT_M\left(g/(x - \alpha_j)^{\nu_{\alpha_j}(g)}\right) E_i \ \bmod M$.
	The polynomial $Y\Lv$ divides both $\Lv \widetilde{E}_{i}$ and $M$, so it divides  
	$\Lv ( \varphi f_i - g \psi_i )$.
	Dividing by $Y\Lv$, we obtain 
	\begin{equation*}
		\CRT_{\Le}\left(\frac{\varphi}{(x - \alpha_j)^{\vr_j}}\right)\widetilde{E}'_{i} 
		=
		\frac{\varphi f_i - g \psi_i}{Y}
		\ \bmod \Le,
	\end{equation*}
	with $\widetilde{E}'_i \coloneq \CRT_{\Le}\left(g/(x - \alpha_j)^{\nu_{\alpha_j}(g)}\right) E'_i \ \bmod
	\Le$. Thus, $\omega \coloneq \CRT_{\Le}\left(\varphi/(x - \alpha_j)^{\vr_j}\right)\in      
	\SEinf$ and, thanks to the hypothesis $\SEinf = \{0\}$, $\left(\varphi
	f_i - g \psi_i\right)/Y = 0 \ \bmod\Le$. Thanks to Constraint~\ref{cst: Param polesVers} and since $\partial(\Lv)\leq \htv$, we have
	$
	\partial\left(\frac{g\psi_i-f_i\varphi}{Y}\right) \leq B + \partial(\Lambda) < \partial(\Le).
	$
	As a result, $g\psi_i = f_i\varphi$ for all $i=1,\ldots,\ell$.
	Since $\gcd(f_1,\ldots,f_\ell, g)=1$,  we must have that $g|\varphi$, i.e. $\varphi = sg$ for
	some $s\in\F_q[x]$ and, from the above conclusion, as well that $\Bold{\psi} = s\bf$. Let us
	note 
	\begin{equation*}
		s\widetilde{\be}_j =
		(x - \alpha_j)^{\vr_j}\Bold{\psi} - \varphi\br_j = \Bold{0} \bmod (x - \alpha_j)^{\lambda_j}.
	\end{equation*} 
	As $\nu_{\alpha_j}\left(\widetilde{\be}_j\right) = \nu_{\alpha_j}\left(\be_j\right) = \lambda_j -
	\Val_j(\Lambda)$, we obtain $\nu_{j}(s)\ge \lambda_j - (\lambda_j - \Val_j(\Lambda))$
	for every $j$, i.e. $\Lambda$ divides $s$.
\end{proof}

\begin{remark}
	Along the same lines of Remark~\ref{rmk:unicity} relative to the analysis of Section~\ref{Sec:SRF}, also in this context we see that when the distance parameter $t$ of the decoding algorithm satisfies $t<\lf\frac{L - d_f - d_g + 1}{2}\rf$ and thus, thanks to Lemma~\ref{lm: MinDistSRFpoles}, it is
	below half of the minimal distance of the SRF code with poles, we must have that $B + \partial(\Lambda) \leq B + t < 0$ since
	$\partial(\Lambda)\le t$.
	Under such circumstance we therefore
	have $\F_q[x]_{\Le, B + \partial(\Lambda)} = \{0\}$ and thus, similarly to Remark~\ref{rmk:unicity}, estimating the
	failure probability of Algorithm~\ref{algoSRFpoles} by studying $\P(\SEinf\ne\{0\})$
	yields the expected unique decoding result whenever $t<\lf\frac{L - d_f - d_g + 1}{2}\rf$.
\end{remark}

\begin{proof}[Proof of Theorem~\ref{thm:main1poles}]
	Since for all received words in
	our random distribution, we know that if $\SRinf\subseteq v_{\bC}\F_q[x]$ then Algorithm~\ref{algoSRFpoles} succeeds, thus by contrapositive  
	$\P_{\fail} \le \P(\SRinf\not\subseteq v_{\bC}\F_q[x])$.
	
	We can prove that our choice of parameters satisfy Constraint~\ref{cst: Param polesVers} in the same fashion as the proof of
	Theorem~\ref{thm:main1hyb}. So we can apply Lemma~\ref{lm:SE_bad primes} to
	obtain $\P(\SRinf\not\subseteq v_{\bC}\F_q[x]) \le \P\left(\SEinf \ne
	\{0\}\right)$.
	
	As in Equation~\eqref{boundInc/ExclSe}, we have 
	$
	\P\left(\SEinf \ne \{0\}\right)
	\leq \frac{1}{q - 1}
	\sum_{\omega=1}^{\Le - 1}
	\P\left(
	\forall i, \  \omega \widetilde{E}_i' \in \F_q[x]_{\Le,B + \partial(\Lambda)}
	\right)
	$.
	Since $\widetilde{E}_i = \CRT_M\left(g/(x - \alpha_j)^{\nu_{\alpha_j}(g)}\right) E_i \ \bmod M$, and since
	$\CRT_M\left(g/(x - \alpha_j)^{\nu_{\alpha_j}(g)}\right)$ is an invertible element of $\F_q[x]/M$, we have that for
	every $1\leq\omega\leq\Le - 1$,\linebreak
	$
	\P(
	\forall i, \  \omega \widetilde{E}_i' \in \F_q[x]_{\Le,B + \partial(\Lambda)}
	) 
	=
	\P(
	\forall i, \  \omega E_i' \in \F_q[x]_{\Le,B + \partial(\Lambda)}
	).
	$
	Now, since we know the distribution of $(E_i')_{1 \le i \le n}$ thanks to
	Lemma~\ref{lm:randombad primesuniform}, we use
	Lemma~\ref{lm:boundfailureprobabilityhybrid} with $\Li$ and $\htu$ being
	replaced by $\Le$ and $\htv$ to get
	\begin{equation*}
		\sum_{\omega=1}^{\Le - 1}
		\P\left(
		\forall i, \  \omega E_i' \in \F_q[x]_{\Le,B + \partial(\Lambda)}
		\right) 
		\leq
		q^{\ell(B +1 +\htv )}
		q^{\partial(\Le)}
		\prod_{\alpha\in\cZ(\Le)}\left(\frac{1 - 1 / q^{\ell + \nu_{\alpha}(\Le)}}{1 -1 / q^{\ell}}
		\right).
	\end{equation*}
	Since $\partial(\Le)\leq \hte$, we have
	$
	q^{\ell(B +1 +\htv )}
	q^{\partial(\Le)}
	\le 
	q^{\ell(B +1 +\htv )}
	q^{\hte}
	=
	q^{-(\ell+1)(\tme - \hte)}
	$, we have proven Theorem~\ref{thm:main1poles}.
\end{proof}

\subsection{Decoding failure probability with respect to the second error model}
\label{subsubsect: proofbad primesRN-D2}

We will denote $\P_{\HYBpd}$ (resp. $\P_{\HYBp}$) the probability function under
the second (resp. first) error model specified by a given factorization $\Lambda_{m,e}\Lv$ of
the error locator, and by a partial received word $(\bfR_j)_{j \in \cZ(\Lv)}$.

\begin{proof}[Proof of Theorem~\ref{thm:main2poles}]
	As done in the proof of Theorem~\ref{thm:main2hyb}, letting $\cF$ be the event of decoding failure and
	using the law of total probability, we have that
	$\P_{\HYBpd}( \cF )$ can be decomposed as the sum
	\begin{equation*}
		\sum_{\Le|\Lme}
		\P_{\HYBpd}(\cF \ |\ \Lambda_{\bE} = \Le\Lv)
		\ 
		\P_{\HYBpd}( \Lambda_{\bE} = \Le\Lv),
	\end{equation*}
	where 
	\begin{equation*}
		\P_{\HYBpd}(\cF \ |\ \Lambda_{\bE} = \Le\Lv)
		=
		\P_{\HYBp}\left(\cF\right)
	\end{equation*}
	is upper bounded by
	\begin{equation*}
		\P_{\HYBp}(\cF)
		\leq
		\frac{q^{\ell(B +1 +\htv )}
			q^{\partial(\Le)}}{q - 1}
		\prod_{\alpha\in\cZ(\Le)}\left(\frac{1 - 1 / q^{\ell + \nu_{\alpha}(\Le)}}{1 -1 / q^{\ell}}
		\right).
	\end{equation*}
	Whereas
	\begin{align*}
		\P_{\HYBpd}(\cF \ |\ \Lambda_{\bE} = \Le\Lv)
		&=
		\frac{\prod_{\alpha\in\cZ(\Le)}(q^{\ell} -1) q^{\ell(\nu_{\alpha}(\Le) - 1)}}{\prod_{\alpha\in\cZ(\Lme)}q^{\ell \nu_{\alpha}(\Lme)}}\\
		&=
		\frac{q^{\ell\partial(\Le)}}{q^{\ell\partial(\Lme)}}
		\prod_{\alpha\in\cZ(\Le)}\left(1 - \frac{1}{q^{\ell}}\right).
	\end{align*}
	Plugging the above in the decomposition of $\P_{\HYBpd}( \cF )$ and following the
	proof of Theorem~\ref{thm:main2hyb} with $\Lmi,\Li,\Lu$ being replaced by
	$\Lme,\Le,\Lv$ respectively, we conclude the proof of Theorem~\ref{thm:main2poles}.
\end{proof}

	\bibliographystyle{alpha}
	\bibliography{Bib.bib}

\begin{thebibliography}{KPSW17}

\bibitem[AAGL23]{abbondati2023probabilistic}
Matteo Abbondati, Antoine Afflatet, Eleonora Guerrini, and Romain Lebreton.
\newblock Probabilistic analysis of {LLL}-based decoder of interleaved
  {C}hinese remainder codes.
\newblock In {\em ITW 2023-IEEE Information Theory Workshop}, 2023.

\bibitem[AGL24]{abbondati2024decoding}
Matteo Abbondati, Eleonora Guerrini, and Romain Lebreton.
\newblock Decoding simultaneous rational evaluation codes.
\newblock In {\em Proceedings of the 2024 International Symposium on Symbolic
  and Algebraic Computation}, pages 153--161, 2024.

\bibitem[AGL25]{abbondati2025simultaneous}
Matteo Abbondati, Eleonora Guerrini, and Romain Lebreton.
\newblock Simultaneous rational number codes: Decoding beyond half the minimum
  distance with multiplicities and bad primes.
\newblock {\em arXiv preprint arXiv:2504.08472}, 2025.

\bibitem[BCDZ25]{brakensiek2025unique}
Joshua Brakensiek, Yeyuan Chen, Manik Dhar, and Zihan Zhang.
\newblock Unique decoding of reed-solomon and related codes for
  semi-adversarial errors.
\newblock {\em arXiv preprint arXiv:2504.10399}, 2025.

\bibitem[BK14]{boyer2014numerical}
Brice Boyer and Erich~L Kaltofen.
\newblock Numerical linear system solving with parametric entries by error
  correction.
\newblock In {\em Proceedings of the 2014 Symposium on Symbolic-Numeric
  Computation}, pages 33--38, 2014.

\bibitem[BKY03]{bleichenbacher2003decoding}
Daniel Bleichenbacher, Aggelos Kiayias, and Moti Yung.
\newblock Decoding of interleaved {R}eed {S}olomon codes over noisy data.
\newblock In {\em Automata, Languages and Programming: 30th International
  Colloquium, ICALP 2003 Eindhoven, The Netherlands, June 30--July 4, 2003
  Proceedings 30}, pages 97--108. Springer, 2003.

\bibitem[Cab71]{cabay1971exact}
Stanley Cabay.
\newblock Exact solution of linear equations.
\newblock In {\em Proceedings of the second ACM symposium on Symbolic and
  algebraic manipulation}, pages 392--398, 1971.

\bibitem[CKP00]{caprara2000multiple}
Alberto Caprara, Hans Kellerer, and Ulrich Pferschy.
\newblock The multiple subset sum problem.
\newblock {\em SIAM Journal on Optimization}, 11(2):308--319, 2000.

\bibitem[Dix82]{dixon_exact_1982}
John~D. Dixon.
\newblock Exact solution of linear equations {usingP}-adic expansions.
\newblock {\em Numerische Mathematik}, 40(1):137--141, February 1982.

\bibitem[GLLZ23]{guerrini2023simultaneous}
Eleonora Guerrini, Kamel Lairedj, Romain Lebreton, and Ilaria Zappatore.
\newblock Simultaneous rational function reconstruction with errors: Handling
  multiplicities and poles.
\newblock {\em Journal of Symbolic Computation}, 116:345--364, 2023.

\bibitem[GLZ19]{guerrini2019polynomial}
Eleonora Guerrini, Romain Lebreton, and Ilaria Zappatore.
\newblock Polynomial linear system solving with errors by simultaneous
  polynomial reconstruction of interleaved {R}eed-{S}olomon codes.
\newblock In {\em 2019 IEEE International Symposium on Information Theory
  (ISIT)}, pages 1542--1546. IEEE, 2019.

\bibitem[KPSW17]{kaltofen2017early}
Erich~L Kaltofen, Cl{\'e}ment Pernet, Arne Storjohann, and Cleveland Waddell.
\newblock Early termination in parametric linear system solving and rational
  function vector recovery with error correction.
\newblock In {\em Proceedings of the 2017 ACM on International Symposium on
  Symbolic and Algebraic Computation}, pages 237--244, 2017.

\bibitem[KPY20]{kaltofen2020hermite}
Erich~L Kaltofen, Cl{\'e}ment Pernet, and Zhi-Hong Yang.
\newblock Hermite rational function interpolation with error correction.
\newblock In {\em Computer Algebra in Scientific Computing: 22nd International
  Workshop, CASC 2020, Linz, Austria, September 14--18, 2020, Proceedings 22},
  pages 335--357. Springer, 2020.

\bibitem[Leb12]{lebreton_contributions_2012}
Romain Lebreton.
\newblock {\em Contributions à l'algorithmique détendue et à la résolution
  des systèmes polynomiaux}.
\newblock These de doctorat, Palaiseau, Ecole polytechnique, January 2012.

\bibitem[Lip71]{lipson1971chinese}
John~D Lipson.
\newblock Chinese remainder and interpolation algorithms.
\newblock In {\em Proceedings of the second ACM symposium on Symbolic and
  algebraic manipulation}, pages 372--391, 1971.

\bibitem[LSN13]{li2013decoding}
Wenhui Li, Vladimir Sidorenko, and Johan~SR Nielsen.
\newblock On decoding interleaved {C}hinese remainder codes.
\newblock In {\em 2013 IEEE International Symposium on Information Theory},
  pages 1052--1056. IEEE, 2013.

\bibitem[MC79]{moenck_approximate_1979}
R.~T. Moenck and J.~H. Carter.
\newblock Approximate algorithms to derive exact solutions to systems of linear
  equations.
\newblock In {\em {EUROSAM} '79}, volume~72, pages 65--73. Springer, 1979.

\bibitem[McC77]{mcclellan1977exact}
Michael~T McClellan.
\newblock The exact solution of linear equations with rational function
  coefficients.
\newblock {\em ACM Transactions on Mathematical Software (TOMS)}, 3(1):1--25,
  1977.

\bibitem[Mon04]{monagan2004maximal}
Michael Monagan.
\newblock Maximal quotient rational reconstruction: an almost optimal algorithm
  for rational reconstruction.
\newblock In {\em Proceedings of the 2004 international symposium on Symbolic
  and algebraic computation}, pages 243--249, 2004.

\bibitem[OS07]{olesh2007vector}
Zach Olesh and Arne Storjohann.
\newblock The vector rational function reconstruction problem.
\newblock In {\em Computer Algebra 2006: Latest Advances in Symbolic
  Algorithms}, pages 137--149. World Scientific, 2007.

\bibitem[Per14]{pernet2014high}
Cl{\'e}ment Pernet.
\newblock {\em High performance and reliable algebraic computing}.
\newblock PhD thesis, Universit{\'e} Joseph Fourier, Grenoble 1, 2014.

\bibitem[RS60]{reed1960polynomial}
Irving~S Reed and Gustave Solomon.
\newblock Polynomial codes over certain finite fields.
\newblock {\em Journal of the society for industrial and applied mathematics},
  8(2):300--304, 1960.

\bibitem[RS16]{rosenkilde2016algorithms}
Johan {Rosenkilde n{\'e} Nielsen} and Arne Storjohann.
\newblock Algorithms for simultaneous {P}ad{\'e} approximations.
\newblock In {\em Proceedings of the ACM on International Symposium on Symbolic
  and Algebraic Computation}, pages 405--412, 2016.

\bibitem[SSB09]{schmidt2009collaborative}
Georg Schmidt, Vladimir~R Sidorenko, and Martin Bossert.
\newblock Collaborative decoding of interleaved {R}eed--{S}olomon codes and
  concatenated code designs.
\newblock {\em IEEE Transactions on Information Theory}, 55(7):2991--3012,
  2009.

\bibitem[Sto05]{storjohann_shifted_2005}
Arne Storjohann.
\newblock The shifted number system for fast linear algebra on integer
  matrices.
\newblock {\em Journal of Complexity}, 21(4):609--650, August 2005.

\bibitem[Vil97]{villard1997study}
Gilles Villard.
\newblock A study of {C}oppersmith's block {W}iedemann algorithm using matrix
  polynomials.
\newblock Technical report, IMAG, 1997.

\bibitem[Zap20]{zappatore2020simultaneous}
Ilaria Zappatore.
\newblock {\em Simultaneous Rational Function Reconstruction and applications
  to Algebraic Coding Theory}.
\newblock PhD thesis, Universit{\'e} Montpellier, 2020.

\end{thebibliography}
	
\end{document}